\documentclass[10pt]{article}

\usepackage{./kashsty}
\usepackage{./kashthm}
\usepackage{./kashdef}
\usepackage{./moredef}

\usepackage[round,authoryear]{natbib}
\usepackage[margin=1.25in]{geometry}

\def\probB#1{\mathbb{P}\left(#1\right)}

\begin{document}
\title{Random Invariance Testing on Quadratic Form
Statistics with Application to Autocorrelation}
%\title{A sample article title with some additional note\thanksref{t1}}
%\runtitle{Random Invariance Testing on Quadratic Form
%tatistics}
%\thankstext{T1}{A sample additional note to the title.}

%%%%%%%%%%%%%%%%%%%%%%%%%%%%%%%%%%%%%%%%%%%%%%%
%% Only one address is permitted per author. %%
%% Only division, organization and e-mail is %%
%% included in the address.                  %%
%% Additional information can be included in %%
%% the Acknowledgments section if necessary. %%
%% ORCID can be inserted by command:         %%
%% \orcid{0000-0000-0000-0000}               %%
%%%%%%%%%%%%%%%%%%%%%%%%%%%%%%%%%%%%%%%%%%%%%%%
\author{
{Amitakshar}~{Biswas}\footnote{
Department of Statistics, University of Illinois Urbana-Champaign, Champaign, IL, 61820, USA
biswas8@illinois.edu}
 ~and
{Adam B}~{Kashlak}\footnote{
Department of Mathematical and Statistical Sciences, University of Alberta, Edmonton, AB, T6G 2G1, Canada
kashlak@ualberta.ca} \\
%\orcid{0000-0002-4050-7784}
}
%%%%%%%%%%%%%%%%%%%%%%%%%%%%%%%%%%%%%%%%%%%%%%
%% Addresses                                %%
%%%%%%%%%%%%%%%%%%%%%%%%%%%%%%%%%%%%%%%%%%%%%%

\maketitle

\begin{abstract}
Randomization testing with permutations is a very 
  common nonparametric approach to hypothesis testing.  
  However, 
  randomization testing can be done with other group
  transformations including random rotations.
  In this work, we consider the problem of invariance 
  in quadratic form statistics under a unified 
  framework with closed form p-values.  
  In particular, 
  we propose a nonparametric variant of the 
  classic Durbin-Watson test for testing for autocorrelation
  in time series data at arbitrary lags.  Our test is performed
  by integrating over a group of invariances of the test statistic,
  and easy-to-compute analytic formulae for the p-value 
  are derived from concentration inequalities on compact groups on a per-lag basis.  
  Thus, the usual necessity of large-scale Monte Carlo simulations
  is rendered unnecessary. 
  Our tests outperform the classic 
  Breusch-Godfrey and Ljung-Box tests on simulated data with respect 
  to statistical power to identify significant autocorrelation.
  They also can be used to identify the presence of autocorrelation
  at large lags such as in monthly solar intensity data, 
  which follows an approximate 11 year (132 month) cycle.
  The general formulation of this approach can be easily adapted
  to other quadratic form statistics.\\

  \noindent
  {\bf Keywords:}
  Breusch-Godfrey Test, Concentration Inequalities, 
  Durbin-Watson Test, Ljung-Box Test, Nonparametric Test, 
  Randomization Test
\end{abstract}

%\begin{keyword}[class=MSC]
%\kwd[Primary ]{62G09}
%\kwd[; secondary ]{62M10}
%\end{keyword}

%\begin{keyword}
%\kwd{Breusch-Godfrey Test}
%\kwd{Concentration Inequalities}
%\kwd{Durbin-Watson Test}
%\kwd{Nonparametric Test}
%\kwd{Randomization Test}
%\end{keyword}

%\end{frontmatter}
%%%%%%%%%%%%%%%%%%%%%%%%%%%%%%%%%%%%%%%%%%%%%%
%% Please use \tableofcontents for articles %%
%% with 50 pages and more                   %%
%%%%%%%%%%%%%%%%%%%%%%%%%%%%%%%%%%%%%%%%%%%%%%
%\tableofcontents

\section{Introduction}

When testing the second order properties of a data set
or a statistical model such as variance and covariance,
statistical hypothesis tests often take the form of
a quadratic form $\TT{x}Ax$ for some vector $x\in\real^n$
and some matrix $A\in\real^{n\times n}$.
Classical theory of the 20th century invariably relied
on assuming the asymptotic distribution of a test statistic
if such a distribution were tractable.  More modern 
computational approaches have been considered such as 
Monte Carlo tests and permutation tests, if applicable,
that rely on massive computer simulation.
However, quadratic forms often yield a natural invariance
to rotation matrices $M$, which are those such that 
$M\TT{M} = \TT{M}M = I$ and $\det(M)=1$.
For example, such matrices $M$ preserve the 
Euclidean inner product $\iprod{x}{y} = \iprod{Mx}{My}$.
Thus, we propose a class of randomization tests 
\cite[Section 15.2]{LEHMANN2006} making use of 
$M$ uniformly distributed in the special orthogonal
group $SO(n)$ and leverage the concentration
of measure
properties of
Haar measure on $SO(n)$ to derive analytic 
formulae for such nonparametric statistical tests.
The tangible result of this work is a computationally-trivial
nonparametric test for autocorrelation in time series
data that outperforms existing tests;
see Equations~\ref{eqn:betaPV} and~\ref{final} below.
More generally, this framework can be easily extended more 
widely to other test statistics in the form of a
quadratic form.  This formulation provides a uniform
treatment of three main symmetry groups.
In the context of randomization tests, these three crucial
symmetries are reflectability, exchangeability, 
and rotateability 
\citep{KALLENBERG2006}. 

In $\real^n$, the number of reflections is $2^n$ corresponding
to the vertices of the hypercube, the number of permutations
is $n!$, and rotations from $SO(n)$ are a compact topological 
group.
It is important to note a clear difference
in the simulation process: Monte Carlo simulation of uniformly
random reflections or permutations is computationally trivial 
compared to
generating uniformly random rotation matrices 
especially with large datasets.  Standard algorithms
for Monte Carlo sampling of a uniform element of
$SO(n)$ require the generation of an $n\times n$
random matrix followed by a QR decomposition.
This in general makes p-values derived from 
such random rotation tests infeasible 
to compute with brute force as one would do 
when performing a permutation test
\citep{PESARIN2010,GOOD2013}.
However, more clever problem-specific approaches
can be applied to make some computation feasible.
The scientific literature on random rotation
tests is not as vast as that of permutation tests,
but has a few notable works
such as the foundational article of \cite{LANGSRUD2005} 
and the more recent \cite{SOLARI2014}, which 
proposes a rotation-based 
multiple testing procedure for use in microarray studies.  
See also the \texttt{flip} package in R 
\citep{FLIP}.

Detailed constructions of permutation and more general 
randomization tests are discussed in many
sources \citep{LEHMANN2006,PESARIN2010,GOOD2013}
and a recent result on the consistency of such 
tests can be found in \cite{DOBRIBAN2022}.
Let $Y\in\real^n$ be a random vector and 
let $\mathcal{O}$ be a group of transformations such that
%\begin{equation}\label{rdef}
$
  Y = OY 
$
in distribution under $H_0$
%\end{equation}
for every ${O} \in \mathcal{O}$. The main advantage of 
this condition is that it provides a collection
of observations
${O}Y$, which all have the same 
probability of occurring under $H_0$.
This is referred to as the randomization 
hypothesis and the transformations ${O}$ as null-invariants. 
They are called null-invariants as the transformation of 
$Y$ by ${O}$ has no effect on the underlying null distribution.

In this article, we propose an unified 
concentration-inequality-based treatment of randomization
tests for quadratic form statistics with an easy-to-compute
closed form test statistics and p-values.
In particular, we propose
Equation~\ref{final}
for testing for autocorrelation in time series
data at a specific lag $h$.  This formula is based on 
concentration inequalities for Haar measure
on $SO(n)$ from \cite{MECKES2019}.
In comparison with classic tests for autocorrelation,
the Durbin-Watson test is only implemented to 
test for
autocorrelation at lag $h=1$ whereas the 
Breusch-Godfrey and Ljung-Box tests are omnibus tests 
that can only reject the 
general null hypothesis of no autocorrelation
at any lag $1,\ldots,h$.
R code for performing all of the simulation
and real data experiments from 
Sections~\ref{sec:sims} and~\ref{sec:solarData}
can be found at 
\url{https://github.com/cachelack/Random-Rotation-Tests.git}.

\begin{remark}[Feasibility of Computation]
A quick timing test for each of the three randomizations
is displayed in Figure~\ref{fig:runtime} for 
dimensions 100, 200, and 400.  This test was performed in 
RStudio with an Intel(R) Core(TM) i7-8700K CPU at 3.70GHz.
Random reflections were generated via \texttt{rbinom()},
random permutations with \texttt{sample()}, and 
random rotations via \texttt{rnorm()} to create a 
random Gaussian matrix and then \texttt{qr()} for the 
QR-decomposition.
Recent work on computing very small tail probabilities
for quadratic forms of multivariate normal random 
vectors can be found in \cite{ONG2026}.
There are also modern approaches to efficient sampling 
from subgroups instead of exhausting
the whole group of randomization transformations
\citep{KONING_HEMERIK_2023,KONING2024}.

Of note, the timing comparison in Figure~\ref{fig:runtime}
pertains to generic Lipschitz statistics on $SO(n)$, for
which the full rotation matrix must be generated. The value of the
concentration approach developed in this work is that it yields
closed-form p-values that apply uniformly across  
symmetry groups and arbitrary quadratic forms, with no spectral
computation or numerical inversion.  See Section~\ref{sec:exactChiSquare}
below.
\end{remark}

\begin{figure}
    \centering
    \includegraphics[width=0.7\linewidth]{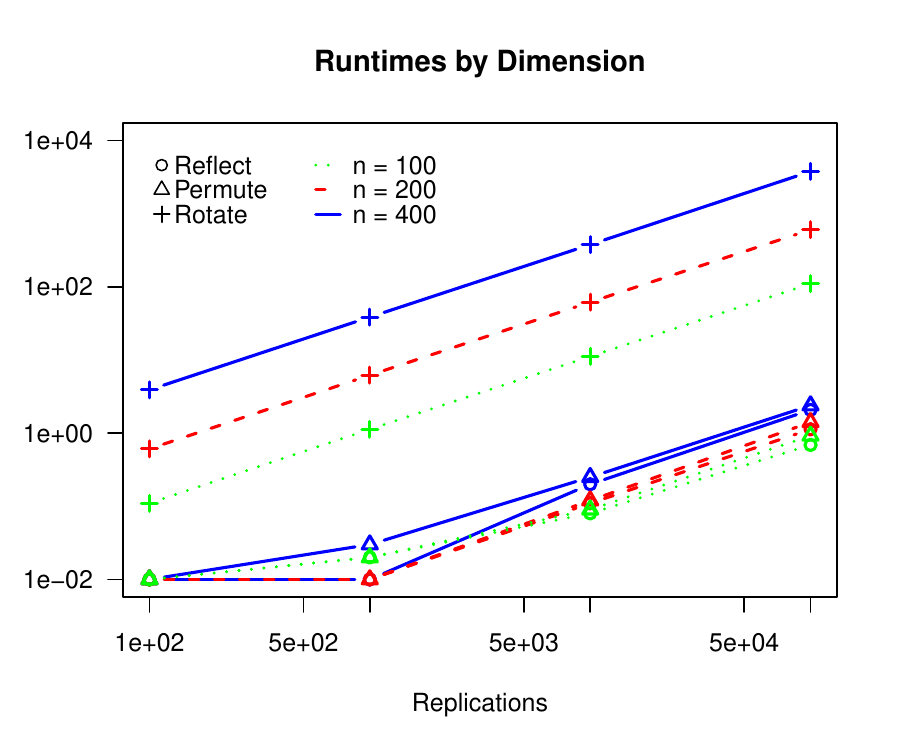}
    \caption{Computation 
      time of $100,\ldots,10000$ randomizations in 
      dimensions $100,200,400$.
    }
    \label{fig:runtime}
\end{figure}

\section{Quadratic Form Statistics}
\label{sec:quadForm}

\subsection{Invariance in Quadratic Forms}

In general randomization testing, we invoke
the so-called \textit{randomization hypothesis},
which is that the distribution of our 
random data vector $X\in\real^n$ is invariant
to any choice of group action.  That is, for 
a compact group $\mathcal{G}$, under the null hypothesis,
$X= gX$ in distribution.  In this article, we 
mainly focus on $g$ being rotation matrices from 
$SO(n)$.  However, we will also consider reflection
matrices, being diagonal matrices with diagonal entries
of $\pm1$, and permutation matrices, being binary matrices
with exactly one 1 in every row and column.  All
three of these collections are subgroups of the orthogonal
group $O(n)$.  We will not work with $O(n)$ in its totality 
as it has two disconnected components, which 
breaks our use of concentration of measure below.

A general quadratic form statistic takes the form
$q(x;A) = \TT{x}Ax$ for some fixed matrix $A\in\real^{n\times n}$.
We can, without loss of generality, assume that $A$ is
symmetric since $q(x;A) = q(x;\TT{A})$ allowing us to
replace $A$ with $(A + \TT{A})/2$.
In hypothesis testing with $X$ from the null distribution,
and $x$ a vector of observations, 
we will typically be concerned 
with 
$$
  \probB{ 
    \abs{q(X;A) - \xv q(X;A)} \ge  
    \abs{q(x;A) - \xv q(X;A)}   
  }
$$
being the probability of observing a more extreme
value than $q(x;A)$ with respect to the mean
$\xv q(X;A) = \sum_{i,j=1}^n a_{i,j}\cov{X_i}{X_j}$.

In terms of the second order properties of 
the random vector $X$, invariance under elements
of $SO(n)$ implies that $X$ is an isotropic random
vector, i.e. $\mathrm{cov}(X) = cI_n$ for some $c>0$.
Similarly, if $X$ is invariant under permutations 
(exchangeable), then the covariance matrix $X$ has
constant diagonal entries and a (different) constant 
for all off-diagonal entries.
Lastly, if $X$ has reflective symmetry, then its
covariance matrix 
can have arbitrary positive values on the diagonal, but must
have zeros on the off-diagonal.
An interesting question is to consider which group
of symmetries is most appropriate for a given statistical
test.  With respect to the covariance structure of the 
data, invariance to either rotations or permutations 
implies an homogeneous variance whereas invariance 
to either rotations or reflections implies 
zero correlation between entries in the vectors. 

In each of these settings, we can leverage 
concentration of Haar measure on the given 
group of transformations.  
A main result used in this work is a quick corollary 
to Theorem 5.17 in \cite{MECKES2019}.  The general 
theorem demonstrates sub-Gaussian concentration of
Haar measure on the groups 
$SO(n),\, SU(n),\, U(n), \text{ and } Sp(2n)$.
\begin{corollary} \label{meck}
    Let $\mathcal{G} = \mathrm{SO}(n)$ be equipped with the Hilbert-Schmidt metric. Suppose that $F : \mathcal{G} \to \real$ is L-Lipschitz and that $M \in \mathcal{G}$ is a Haar-distributed random matrix. Then for each $t > 0$,
    $
        \probB{F(M) \geq \mathrm{E}F(M) + t} \leq \exp{\{{-(n-2)t^2}/{16L^2}\}}.
    $
\end{corollary}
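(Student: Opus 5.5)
The corollary will follow directly from Theorem 5.17 of \cite{MECKES2019} by specializing its constants to $SO(n)$. That theorem gives sub-Gaussian concentration of Haar measure on each of $SO(n)$, $SU(n)$, $U(n)$ and $Sp(2n)$ with respect to the Hilbert--Schmidt metric. It holds for every function $F$ that is $L$-Lipschitz in that metric. For $SO(n)$ it takes the form
$$
  \probB{F(M) \ge \mathrm{E}F(M) + t} \le \exp\{-(n-2)t^2/(c L^2)\}
$$
for an explicit numerical constant $c$. The plan is to identify this constant for the orthogonal case and check that it equals $16$, or is no larger, so that the stated bound follows.

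Should the cited theorem only be available in its underlying form, I would reconstruct it in three steps.

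\textbf{Curvature.} Equip $SO(n)$ with the bi-invariant metric induced by the Hilbert--Schmidt inner product. Its Ricci curvature is then $\mathrm{Ric} = \frac{n-2}{4}\, g$. This comes from the formula $\mathrm{Ric}(X,X) = -\frac14 \mathrm{tr}(\mathrm{ad}_X^2)$ together with the Killing form of $\mathfrak{so}(n)$, which is $(n-2)\,\mathrm{tr}(XY)$.

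\textbf{Concentration.} The Bakry--\'Emery criterion then gives a log-Sobolev inequality with constant $4/(n-2)$. The Herbst argument turns this into the tail bound $\exp\{-(n-2)t^2/(8L^2)\}$ for functions that are $L$-Lipschitz in the geodesic distance.

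\textbf{Change of metric.} The geodesic distance dominates the Hilbert--Schmidt chordal distance. Hence a function that is $L$-Lipschitz for the Hilbert--Schmidt metric is also $L$-Lipschitz for the geodesic metric, so the previous bound applies to it.

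Meckes states the constant as $16$, which is weaker than the $8$ obtained above. Since $\exp\{-(n-2)t^2/(8L^2)\} \le \exp\{-(n-2)t^2/(16L^2)\}$, the displayed bound holds in either case.

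The main obstacle is bookkeeping rather than substance: getting the normalization of the Hilbert--Schmidt metric (with or without a factor $\tfrac12$) consistent with the Ricci curvature computation. A mismatch there would change the constant in the exponent. I would settle it by quoting Meckes' statement verbatim and reading off the $SO(n)$ case, noting that the bound is only informative for $n \ge 3$. The case $n=2$ gives a vacuous bound, since $SO(2)$ is flat.
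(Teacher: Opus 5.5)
Your proposal is correct and matches the paper, which gives no argument beyond specializing Theorem~5.17 of \cite{MECKES2019} to $SO(n)$. Your fallback is exactly the argument behind that theorem: Ricci curvature $\frac{n-2}{4}$ in the Hilbert--Schmidt metric, Bakry--\'Emery with log-Sobolev constant $4/(n-2)$, Herbst, and domination of the Hilbert--Schmidt distance by the geodesic distance; it gives the sharper exponent $(n-2)t^2/8L^2$, which justifies the stated $16$ more safely than reading it off the cited statement, whose unified constant also covers $U(n)$ and products and need not be $16$.
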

Any quadratic form on $SO(n)$ is a Lipschitz 
function by continuity and compactness.
In the following theorem, we compute the
Lipschitz constant for a randomly rotated
quadratic form on $SO(n)$ with respect
to the Hilbert-Schmidt metric. 

\begin{theorem}\label{thm:lipGen}
For a fixed $x \in S^{n-1}$, the function 
$F: SO(n) \rightarrow \real$ defined by 
$F(M) = \TT{(Mx)}A(Mx)$
is Lipschitz with Lipschitz constant $2\rho(B)$ 
with $\rho(B)$ the spectral radius of the 
symmetrized matrix $B = (A+\TT{A})/2$
where the group $\mathrm{SO}(n)$ is equipped with the 
Hilbert-Schmidt metric.

Consequently,
$$
    \probB{
      F(M) \geq \mathrm{E}F(M) + t
    } \leq \exp{\{{-(n-2)t^2}/{64\rho(B)^2}\}}.
$$
\end{theorem}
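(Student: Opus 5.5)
First replace $A$ by its symmetrization $B=(A+\TT{A})/2$. This is legitimate because $\TT{y}Ay=\TT{y}\TT{A}y$ for every $y$, so $F(M)=\TT{(Mx)}B(Mx)$ for all $M\in SO(n)$. From here on I work only with the symmetric matrix $B$. Its operator norm $\norm{B}_{op}$ equals its spectral radius $\rho(B)$.

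The main step is to bound $\abs{F(M)-F(M')}$ for $M,M'\in SO(n)$. Write $u=Mx$ and $v=M'x$. Since $x\in S^{n-1}$ and $M,M'$ are orthogonal, $\norm{u}=\norm{v}=1$. Using the symmetry of $B$, I would factor the difference as
$$
\TT{u}Bu-\TT{v}Bv = \TT{(u-v)}B(u+v).
$$
Cauchy--Schwarz and the operator norm then give
$$
\abs{F(M)-F(M')}\le \norm{u-v}\,\rho(B)\,\norm{u+v}\le 2\rho(B)\norm{u-v},
$$
because $\norm{u+v}\le 2$.

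Next I compare $\norm{u-v}$ with the Hilbert--Schmidt distance. We have $\norm{u-v}=\norm{(M-M')x}\le\norm{M-M'}_{op}\norm{x}\le\norm{M-M'}_{HS}$, since the operator norm is dominated by the Hilbert--Schmidt norm and $\norm{x}=1$. Hence $F$ is $2\rho(B)$-Lipschitz with respect to the Hilbert--Schmidt metric.

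For the tail bound, apply Corollary~\ref{meck} with $L=2\rho(B)$. Then $16L^2=64\rho(B)^2$, which gives exactly the stated inequality.

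The only point needing care is the factorization identity. It requires $B$ to be symmetric, because the cross terms $\TT{u}Bv$ and $\TT{v}Bu$ must cancel. This is why symmetrizing first matters. Without it one would get a bound in terms of $\norm{A}_{op}$ rather than $\rho(B)$, and for non-normal $A$ these can differ.
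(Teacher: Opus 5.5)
Your proof is correct and follows the same skeleton as the paper's. Both symmetrize, show that $g(y)=y^{\mathrm T}By$ is $2\rho(B)$-Lipschitz on $S^{n-1}$, pass through $\lVert (M-M')x\rVert\le\lVert M-M'\rVert_{op}\le\lVert M-M'\rVert_{HS}$, and insert $L=2\rho(B)$ into Corollary~\ref{meck}.

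The only real difference is the middle step. The paper argues by calculus: $g$ is $C^1$ with $\nabla g(y)=2By$, so its Lipschitz constant is controlled by $\max_{y\in S^{n-1}}\lVert 2By\rVert=2\rho(B)$. To make this rigorous, the mean-value bound has to be applied on the convex unit ball rather than on the sphere, since chords between points of $S^{n-1}$ leave the sphere; on the ball, $\lVert 2By\rVert\le 2\rho(B)$ still holds. Your identity $u^{\mathrm T}Bu-v^{\mathrm T}Bv=(u-v)^{\mathrm T}B(u+v)$ is purely algebraic. It needs no differentiability or convexity, and it shows exactly where the symmetry of $B$ is used. It is in fact the same chain of inequalities the paper records later as \eqref{eqn:liponsphere} in the proof of Theorem~\ref{thm:exchangeConc}.

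Your route also gives a little more if wanted. Since $\lVert u\rVert=\lVert v\rVert$, the vectors $u-v$ and $u+v$ are orthogonal. For orthonormal $a,b$ one has $\lvert a^{\mathrm T}Bb\rvert\le(\lambda_{\max}(B)-\lambda_{\min}(B))/2$, which yields the shift-invariant constant $\lambda_{\max}(B)-\lambda_{\min}(B)\le 2\rho(B)$. This is not needed for the theorem as stated.

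A minor point on your closing remark: $\rho(B)\le\lVert A\rVert_{op}$ always, and the inequality can be strict for normal $A$ too (a skew-symmetric $A$ has $B=0$). So normality is not the relevant condition.
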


A similar result to concentration on the classical 
compact groups from \cite{MECKES2019} is concentration
on finite metric spaces
detailed in \cite{ledoux} Section 4.1.  However,
mapping the proof of the previous theorem to
quadratic forms on such finite groups does not yield 
bounds of the correct order.  
To illustrate this problem in the case of 
random reflections, consider that for any two 
diagonal matrices $R_1,R_2$ with diagonal entries
of $\pm1$,
$$
  \norm{R_1 - R_2}_\text{OP} = \left\{
    \begin{array}{ll}
      0 & \text{ if } R_1=R_2\\
      2 & \text{ if } R_1\ne R_2.
    \end{array}
  \right.
$$
Hence, we approach the finite groups differently.

%The following two theorems are included for the sake
%of comparison but not directly used in the subsequent
%anaalysis.
For reflective symmetry, dimension-free 
sub-Gaussian concentration is 
achieved through a quick application of 
a decoupling inequality from \cite{KWAPIEN1987} on 
polynomial, specifically Rademacher, chaoses.
\begin{theorem}
  \label{thm:reflectConc}
  Let $\mathcal{G} = \mathbb{B}(n) = \{-1,1\}^n$, fixed vector $x\in S^{n-1}$,
  random vector $b\in \mathbb{B}(n)$ with respect to uniform (Haar) measure,
  and $F:\mathbb{B}(n)\rightarrow\real$ defined by 
  $F(b) = \TT{(b\circ x)}A(b\circ x) = \sum_{i,j=1}^n a_{i,j}b_ib_jx_ix_j$.
  Then,
  $$
    \probB{
      F(b) - \xv F(b) > t
    } \le \exp\left(
      -t^2/ 64\max_{i>j}\{a_{i,j}^2\}
    \right)
  $$
\end{theorem}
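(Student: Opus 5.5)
The plan is to reduce the Rademacher chaos to a decoupled bilinear form, apply Hoeffding's bound conditionally, and then bound the resulting variance proxy.

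\textbf{Step 1: centring.} Since $b_i^2=1$, the diagonal terms $\sum_i a_{i,i}x_i^2$ are deterministic. Using the symmetrised matrix $B=(A+\TT{A})/2$ as in Section~\ref{sec:quadForm}, we get $F(b)-\xv F(b) = 2\sum_{i>j} a_{i,j}x_ix_j b_ib_j =: 2Q(b)$. This is a homogeneous Rademacher chaos of order two with zero mean. The tail bound will follow from Markov's inequality once I have a sub-Gaussian moment generating function bound of the form $\xv \exp(\lambda Q) \le \exp(c\lambda^2\sigma^2)$, optimised over $\lambda$.

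\textbf{Step 2: decoupling.} I will invoke the decoupling inequality of \cite{KWAPIEN1987} with the convex function $\Phi(u)=e^{\lambda u}$. This gives
$$\xv \exp\Bigl(\lambda \sum_{i> j} a_{i,j}x_ix_jb_ib_j\Bigr) \le \xv \exp\Bigl(4\lambda \sum_{i> j} a_{i,j}x_ix_jb_ib'_j\Bigr),$$
where $b'$ is an independent copy of $b$. The factor $4$ is the order-two decoupling constant. Together with the factor $2$ from Step~1, this should produce the $64=(2\cdot4)^2\cdot$(a Hoeffding constant) in the exponent. I will track these constants carefully to land exactly on $64$.

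\textbf{Step 3: conditional Hoeffding.} Conditional on $b'$, the decoupled sum is linear in $b$, namely $\sum_i c_ib_i$ with $c_i = x_i\sum_{j<i}a_{i,j}x_jb'_j$. Hoeffding's lemma gives the conditional bound $\exp(8\lambda^2\sum_i c_i^2)$. It then remains to control $\sum_i c_i^2 = \sum_i x_i^2\bigl(\sum_{j<i} a_{i,j}x_jb'_j\bigr)^2$, either uniformly in $b'$ or in expectation over $b'$.

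\textbf{Step 4: the variance proxy (main obstacle).} This is where I expect the difficulty. My first attempt is Cauchy--Schwarz together with $\norm{x}_2=1$:
$$\sum_i x_i^2\Bigl(\sum_{j<i}a_{i,j}x_jb'_j\Bigr)^2 \le \sum_i x_i^2\sum_{j<i}a_{i,j}^2.$$
This yields a weighted row-norm rather than $\max_{i>j}a_{i,j}^2$. To sharpen it toward the entrywise maximum, I would instead take the expectation over $b'$. Then $\xv_{b'}\bigl(\sum_j a_{i,j}x_jb'_j\bigr)^2=\sum_j a_{i,j}^2x_j^2 \le \max a_{i,j}^2$, so $\xv\sum_i c_i^2\le \max_{i>j}a_{i,j}^2\sum_ix_i^2=\max_{i>j}a_{i,j}^2$.

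Passing from this second-moment bound to an exponential moment requires handling $\xv_{b'}\exp(8\lambda^2\sum_i c_i^2)$. Here $\sum_i c_i^2$ is itself a chaos in $b'$. I would try to control it by a second application of Hoeffding's bound, or by exploiting $\sum_i x_i^2 = 1$ through Jensen's inequality over the probability weights $x_i^2$:
$$\exp\Bigl(8\lambda^2\sum_i x_i^2 Y_i^2\Bigr)\le \sum_i x_i^2 \exp(8\lambda^2 Y_i^2),\qquad Y_i=\sum_{j<i}a_{i,j}x_jb'_j.$$
Each $Y_i$ is a linear Rademacher sum with variance proxy at most $\max a_{i,j}^2\sum_j x_j^2\le \max a_{i,j}^2$. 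The standard sub-Gaussian bound $\xv e^{sY_i^2}\le (1-2s\sigma^2)^{-1/2}$ then gives a dimension-free bound for small $\lambda$. The Jensen step over the weights $x_i^2$ is what makes the bound dimension-free.

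Finally, I will combine the estimates, optimise $\lambda$, and adjust the constants, possibly restricting $t$ to the sub-Gaussian range, to reach $\exp(-t^2/64\max_{i>j}a_{i,j}^2)$.
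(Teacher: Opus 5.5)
There is a genuine gap, and it is where you anticipated: Step~4 and the closing ``adjust the constants'' sentence. Your Jensen step over the weights $x_i^2$ is sound. But it yields $\mathbb{E}_{b'}\exp\bigl(8\lambda^2\sum_i c_i^2\bigr)\le(1-16\lambda^2\max_{i>j}a_{i,j}^2)^{-1/2}$, which is a dimension-free bound only for $|\lambda|<1/(4\max_{i>j}|a_{i,j}|)$. This restriction is intrinsic: $\sum_i c_i^2$ admits no uniform bound of order $\max_{i>j}a_{i,j}^2$ (in the example below it is about $n/3$ at $b'=(1,\dots,1)$). A Chernoff bound over a bounded range of $\lambda$ gives a Gaussian tail only for $t$ of order $\max_{i>j}|a_{i,j}|$, and an exponential tail beyond that. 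Concretely, your lower-triangular $Q$ gives $\exp(-t^2/256\max_{i>j}a_{i,j}^2)$ for $t\le 2\sqrt{32}\max_{i>j}|a_{i,j}|$. Decoupling the full off-diagonal sum $\sum_{i\neq j}a_{i,j}x_ix_jb_ib_j=F-\mathbb{E}F$ instead gives exactly $\exp(-t^2/64\max_{i>j}a_{i,j}^2)$, but only for $0<t\le\sqrt{32}\max_{i>j}|a_{i,j}|$, and $\exp\bigl(1/2-t/(\sqrt{32}\max_{i>j}|a_{i,j}|)\bigr)$ beyond. Restricting $t$ proves a different statement, and no choice of constants removes the restriction.

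In fact no route can close this gap, because the statement is false for large $t$. Take $A$ symmetric with $a_{i,j}=1$ for $i\neq j$ and $a_{i,i}=0$, and $x=n^{-1/2}(1,\dots,1)$. Then $F(b)-\mathbb{E}F(b)=n^{-1}\bigl((\sum_i b_i)^2-n\bigr)$ and $\max_{i>j}a_{i,j}^2=1$. Yet $\mathbb{P}(F-\mathbb{E}F>n-2)=2^{1-n}$, since only the two constant sign vectors exceed $n-2$. This is larger than $e^{-(n-2)^2/64}$ for every $n\ge 48$; at $n=100$ it is about $e^{-68.6}$ against $e^{-150}$. An order-two Rademacher chaos has Hanson--Wright-type mixed tails, not dimension-free sub-Gaussian ones.

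The paper's proof sidesteps your Step~4 by quoting from Kwapie\'n the moment bound $\mathbb{E}|Z|^p\le 32^{p/2}\frac{\Gamma(p+1)}{2^{p/2}\Gamma(p/2+1)}(\mathbb{E}Z^2)^{p/2}$ for $Z=F-\mathbb{E}F$. This is Gaussian-type growth $\|Z\|_p\le C\sqrt{p}\,\|Z\|_2$, and the paper then sums the exponential series. For degree two, however, the correct growth is linear, $\|Z\|_p\le(p-1)\|Z\|_2$ by hypercontractivity, and the example shows the quoted bound cannot hold uniformly in $n$ and $p$.

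What your argument does prove, the restricted-range bound plus an exponential tail, is the right form of the result. It already covers the shift matrices used later: there $\max_{i>j}|(B_h)_{i,j}|=1/2$ while $|F-\mathbb{E}F|=|F|\le\rho(B_h)\le 1<\sqrt{8}$. So every $t$ with nonzero tail probability lies in the admissible range.
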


Lastly, for exchangeability, we have a similar sub-Gaussian
bound to those above.  This one follows from application
of the Azuma-Hoeffding inequality after considering an
arbitrary
permutation $\pi$ as a product of transpositions. 
\begin{theorem}
  \label{thm:exchangeConc}
  Let $\mathcal{G} = \mathbb{S}(n)$, the symmetric group on 
  $n$ elements, $x\in S^{n-1}$ a fixed vector, 
  and $F:G\rightarrow\real$ defined by 
  $F(\pi) = \TT{(\pi x)}A(\pi x) = 
  \sum_{i,j=1}^n a_{i,j}x_{\pi(i)}x_{\pi(j)}$
  for $\pi$ a uniformly distributed 
  permutation of the set $\{1,\ldots,n\}$.
  Then, for all $t>0$,
  $$
     \probB{
       F(\pi) - \xv F(\pi) > t 
     } \le \exp\left(
       -t^2/16\rho(B)^2
     \right)
   $$
   where $\rho(B)$ is the spectral radius of
  $B = (A + A^{\mathrm T})/2$.
\end{theorem}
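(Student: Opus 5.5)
We prove Theorem~\ref{thm:exchangeConc} with a Doob martingale over the uniform permutation and the Azuma--Hoeffding inequality. A bounded-difference argument comes from viewing a change in $\pi$ as a product of transpositions. First replace $A$ by $B=(A+\TT{A})/2$, which does not change $F$. Then write $F(\pi)=\TT{y}By$ with $y=P_\pi x$, where $P_\pi$ is the permutation matrix. Note that $\norm{y}=\norm{x}=1$ for every $\pi$.

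\textbf{Martingale.} Generate $\pi$ sequentially by choosing $\pi(1),\pi(2),\ldots,\pi(n)$ uniformly without replacement. Let $\mathcal{F}_k=\sigma(\pi(1),\ldots,\pi(k))$ and set $Z_k=\xv[F(\pi)\mid\mathcal{F}_k]$. Then $Z_0=\xv F(\pi)$, $Z_n=F(\pi)$, and $(Z_k)$ is a martingale. Azuma--Hoeffding gives
\[
\probB{F(\pi)-\xv F(\pi)>t}\le\exp\left(-t^2\Big/2\sum_{k}c_k^2\right)
\]
whenever $\abs{Z_k-Z_{k-1}}\le c_k$.

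\textbf{Bounded differences.} We use the standard coupling. Fix $\pi(1),\ldots,\pi(k-1)$ and two candidate values $a,a'$ for $\pi(k)$. The map $\pi\mapsto\tau\circ\pi$, with $\tau$ the transposition of $a$ and $a'$, is a measure-preserving bijection between the two conditional laws. Hence $\abs{Z_k-Z_{k-1}}$ is bounded by $\sup\abs{F(\pi)-F(\tau\pi)}$ over transpositions $\tau$. If $y'$ is $y$ with two coordinates swapped, then
\[
\TT{y}By-\TT{y'}By'=\TT{(y-y')}B(y+y'),
\]
which by Cauchy--Schwarz is at most $\rho(B)\norm{y-y'}\,\norm{y+y'}$. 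Here $\norm{y+y'}\le2$. Also $y-y'=(y_i-y_j)(e_i-e_j)$, so $\norm{y-y'}=\sqrt2\abs{y_i-y_j}$.

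\textbf{Main obstacle: summing the squared increments.} This is the step I expect to be hardest. The crude bound $\abs{y_i-y_j}\le\sqrt2$ gives $c_k\le4\rho(B)$, and summing over $n$ steps then produces a bound depending on $n$. That is far worse than the dimension-free exponent $t^2/16\rho(B)^2$ claimed in the theorem. We therefore need $\sum_k c_k^2\le8\rho(B)^2$, and this must exploit $\norm{x}=1$. At step $k$ the two swapped coordinates are $x_a$ and $x_{a'}$, and the natural bound is
\[
c_k\le2\sqrt2\,\rho(B)\max_{a,a'\text{ unused}}\abs{x_a-x_{a'}}.
\]

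To make this summable, the first thing I would try is to reveal the positions of the entries of $x$ in decreasing order of $\abs{x_i}$, rather than in a fixed order. The step-$k$ increment is then governed by the entries not yet placed. I would aim to bound the sum of their squared spreads by a constant times $\sum_i x_i^2=1$; a telescoping or ordering argument might give $\sum_k\max\abs{x_a-x_{a'}}^2\lesssim\sum_i x_i^2$. Plugging $\sum_k c_k^2\le8\rho(B)^2$ into Azuma then yields exactly $\exp(-t^2/16\rho(B)^2)$. If this does not go through cleanly, the fallback is a weaker constant, obtained by bounding each increment by $\rho(B)$ times the $\ell^2$ mass of the two coordinates moved and summing over the transposition decomposition.
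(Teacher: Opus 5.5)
\textbf{Gap: the summation step is left open, and your version of Azuma cannot reach the stated constant $16$.}

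Your architecture matches the paper's proof. You use a Doob martingale over a sequentially generated uniform permutation and a transposition coupling. You bound $|F(\pi)-F(\pi')|\le 2\sqrt2\,\rho(B)\,|y_i-y_j|$ via $(y-y')^{\mathrm T}B(y+y')$. And you reveal the positions of the entries of $x$ in decreasing order of magnitude.

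The step you flag as the obstacle is left as ``might give,'' but once the order is fixed it is immediate. Write $|x_{\tau(1)}|\ge\cdots\ge|x_{\tau(n)}|$ and let $V_k$ be the position of $x_{\tau(k)}$. Couple two admissible values $p\neq q$ of $V_k$ by transposing $p$ and $q$ in all later choices. The two final arrangements then differ only in that $x_{\tau(k)}$ trades places with an entry $x_{\tau(m)}$ placed later ($m>k$). The ordering gives $|x_{\tau(k)}-x_{\tau(m)}|\le 2|x_{\tau(k)}|$. So the conditional range of $v\mapsto \mathbb{E}[F(\pi)\mid\mathcal{F}_{k-1},V_k=v]$ is at most $\ell_k=4\sqrt2\,\rho(B)\,|x_{\tau(k)}|$. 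Since $\|x\|=1$, $\sum_k\ell_k^2\le 32\rho(B)^2$. No telescoping is needed.

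The real gap is the constant. Your target $\sum_k c_k^2\le 8\rho(B)^2$ cannot be reached with coupling/Lipschitz bounds of this type. For $x=(e_1-e_2)/\sqrt2$, the first step already allows $|x_{\tau(1)}-x_{\tau(m)}|=\sqrt2$. This gives $c_1=4\rho(B)$, so $c_1^2=16\rho(B)^2$. In general the coupling only controls the conditional range, and $|Z_k-Z_{k-1}|$ is only bounded by that range, so the best available choice is $c_k=\ell_k$. The absolute-increment bound $\exp(-t^2/2\sum_k c_k^2)$ then yields only $\exp(-t^2/64\rho(B)^2)$, a factor $4$ short in the exponent.

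The missing ingredient is the range form of Azuma--Hoeffding, which the paper uses. Conditionally on $\mathcal{F}_{k-1}$, the increment $Z_k-Z_{k-1}$ lies in an interval of length $\ell_k$ with $\mathcal{F}_{k-1}$-measurable endpoints. Hoeffding's lemma then gives $\mathbb{E}[e^{\lambda(Z_k-Z_{k-1})}\mid\mathcal{F}_{k-1}]\le e^{\lambda^2\ell_k^2/8}$. Hence $\mathbb{P}(F(\pi)-\mathbb{E}F(\pi)>t)\le\exp(-2t^2/\sum_k\ell_k^2)\le\exp(-t^2/16\rho(B)^2)$, which is exactly the stated bound.
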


\subsection{Autocorrelation Statistics}

Test statistics expressed as quadratic forms are 
ubiquitous across the field of statistical hypothesis
testing.  Here, we consider such statistics for 
testing for autocorrelation in time series data.
A time series $\{X_t\}_{t=1}^n$ 
is called an autoregressive process of order $p$, 
denoted AR$(p)$ for short, if $X_t$ has zero mean, 
and we can write it as
$
    X_t = \sum_{i=1}^p \phi_i X_{t-i} + \varepsilon_t
$
where $\varepsilon_t$ is uncorrelated white noise, i.e.
$\xv(\veps_t)=0$ and 
$\var{\veps_t}=\sigma^2$ for all $t$ and
$\cov{\veps_t}{\veps_s}=0$ for all $s\ne t$, and 
where $\phi_i \in \mathbb{R}$ with $\phi_p\ne0$.
Typically, one considers such processes where the 
$\phi_i$ are chosen to make the process stationary 
and causal; see, for example, \cite{SHUMWAY2000}.

In particular, for the stationary AR$(1)$ process, the single 
parameter $\phi$ is the autocorrelation between 
$X_{t-1}$ and $X_t$ for all $t \in \{2,\dots,n\}$. 
We assume throughout that the mean of the time series
is zero.  In practice, we can always center the data 
by defining a new time series $X_t \leftarrow X_t - \bar{X}$ 
for all $t$. In that case, our analysis holds but in 
$n-1$ dimensions.

We denote the autocovariance at lag $h$
to be 
$K(h) = \cov{X_t}{X_{t+h}}$, which is assumed 
invariant to $t$ by stationarity of $X_t$.  
We further denote the autocorrelation to be 
$\eta(h) = K(h)/K(0)$ with $K(0)=\var{X_t}$.
The sample autocorrelation at lag 1 
is estimated by the statistic,
$
  \hat{\eta}_1 = 
  ({n\hat{\sigma}^2})^{-1}\sum_{i = 1}^{n-1} X_iX_{i+1}
$
where 
$
  \hat{\sigma}^2 = 
  {n}^{-1}\sum_{i = 1}^{n} X_i^2 = 
  {n}^{-1}\lVert X \rVert^2
$, 
is the sample variance and $X = \TT{(X_1, \dots, X_n)}$.
We can write the lag 1 autocorrelation estimator as,
$$
  \hat{\eta}_1 = \frac{\TT{X}AX}{\lVert X \rVert^2} = 
  \frac{\TT{X}}{\lVert X \rVert}A\frac{X}{\lVert X \rVert}
  ~\text{ where }~
  A_{n \times n} = 
  \begin{bmatrix}
    \boldsymbol{0}_{n-1,1} & I_{n-1} \\
    0 & \boldsymbol{0}_{1,n-1}
  \end{bmatrix}
$$
for $\boldsymbol{0}_{n-1,1} = \TT{\boldsymbol{0}_{1,n-1}}$
a vector of $n-1$ zeros.
Upon writing $\hat{\eta}_1$ in this way, we can, 
without loss of generality, assume that $X$ 
lies in the $n$-dimensional unit sphere, 
i.e., $X \in S^{n-1}$ and $\hat{\eta}_1 = \TT{X}AX$.
For any lag $h>0$, we similarly have that the
estimated autocorrelation is 
$\hat{\eta}_h = \TT{X}A^hX$.

There are many statistical methods to test for the 
presence of autocorrelation in time series data. 
One of the most well-known tests is the Durbin-Watson test
\citep{DurbinWatsonA,DurbinWatsonB,DurbinWatsonC}.
The Durbin-Watson test tests for
autocorrelations at lag $1$ among the residuals 
of a linear model.
Considering the linear model
$
  Y_t = \beta_0 + \beta_1t + \dots + \beta_pt^p + r_t
$
with $t = 1, \dots, n$, 
we can compute the least squares estimator 
$\hat{\beta}$ and then compute
the model residuals 
$\hat{r}_t = Y_t - \langle \hat{\beta}, (1,t, \dots, t^p)\rangle$. 
The Durbin–Watson test assumes the
following model for the residuals:
$
  \hat{r}_t= \phi \hat{r}_{t-1} + w_t
$
where $w_t$ is uncorrelated white noise. 
Then, it tests the hypotheses 
$H_0 : \phi = 0$ vs $H_1 : \phi \neq 0$. 
It does this by computing the test statistic,
$$
 Q_{DW} = 
 \frac{
   \sum_{t=2}^n (\hat{r}_t - \hat{r}_{t-1})^2
 }{
   \sum_{t=1}^n \hat{r}_t^2
 } \approx
 2 - 
 2\frac{\TT{\hat{r}}A\hat{r}}{\TT{\hat{r}}\hat{r}}
$$
for large $n$.
If this test statistic is close to zero, 
it implies that $\hat{r}_t$ and $\hat{r}_{t-1}$ 
are close in value
indicating a strong positive autocorrelation at lag $1$. 
In contrast, if the test statistic is large 
(close to the max of $4$), then it indicates that 
there is a strong negative
autocorrelation at lag $1$. 
Otherwise, a test statistic near $2$ indicates 
no autocorrelation of order $1$.
Significance testing is typically performed by 
assuming that the noise process $w_t$ is Gaussian
and approximating the null distribution of $Q_{DW}$ 
by a 
weighted sum of chi-squared random variables;
see, for example, \texttt{dwtest} in the 
\texttt{lmtest} package in R \citep{LMTEST} or 
alternatively 
\texttt{dw.test} in the \texttt{desk} package
\citep{DESK}, which applies Imhof's algorithm
\citep{IMHOF1961} among other methods.
There is much literature on approximations
to the null distribution of $Q_{DW}$
for various of statistical models
\citep{INDER1986,KING1991}.

The Durbin-Watson test is typically only 
implemented for testing for lag 1 autocorrelation.
We will demonstrate in the following section that
our random rotation test can trivially extend the
Durbin-Watson test to higher lags and without the
necessity of Gaussian errors.
An alternative test for autocorrelation at higher
lags is the Breusch–Godfrey test
\citep{BREUSCH1978,GODFREY1978}, which we will 
compare against in the subsequent sections.
The Breusch–Godfrey test computes the $R^2$ value for
the linear model
$$
  \hat{r}_t = 
  \beta_0 + \beta_1t + \ldots + \beta_pt^p + 
  \phi_1\hat{r}_{t-1} + \ldots + \phi_h\hat{r}_{t-h}
  + \veps_t
$$
and uses the fact that $nR^2$ is asymptotically
$\distChiSquared{h}$ under the null hypothesis 
that $\phi_1 = \ldots = \phi_h = 0$.  Thus, it is
a simultaneous test for presence of autocorrelation 
at any lag from 1 to a user specified $h$.  This
test is implemented in \texttt{bgtest} in 
the \texttt{lmtest} R package \citep{LMTEST}.

A third asymptotic test for autocorrelation is the 
Ljung–Box test, which similar to the above yields 
a chi-squared distribution asymptotically \citep{LJUNGBOX1978}.
The test statistic is 
$$
  Q_{LB} = n(n+2)\sum_{k=1}^h \frac{\hat\eta_k^2}{n-k}
$$
where $\hat{\eta}_h$ is the sample autocorrelation at lag $h$.
Here, $Q_{LB}$ is asymptotically $\distChiSquared{h}$ as
$n\rightarrow\infty$.  This test is implemented in R's base
\texttt{stats} package via the function \texttt{Box.test}.
This function also includes the precursor Box-Pierce 
test \citep{BOXPIERCE1970}.

In summary, Durbin-Watson assumes normal errors and
has a null distribution that is numerically computed
by one of a variety of algorithms.  
In contrast, Breusch-Godfrey and
Ljung-Box both rely on large samples and the central
limit theorem to asymptotically converge to a chi-squared
random variable under the null.

\section{Random Invariance  Testing}
\label{sec:randRot}

\subsection{Sub-Gaussian Bound}
\label{sec:subGauss}

Our random invariance test statistic conditioned
on the random vector $X\in S^{n-1}$ is denoted as
$\hat{T}_h : \mathcal{G} \rightarrow \real$ and is defined as
$
 \hat{T}_h(M) = \TT{(MX)}A^h(MX)
$
for $M\in \mathcal{G}$ and
$A$ is the shift matrix as defined 
above. 
Here and in what follows, $\mathcal{G}$ will
be one of $SO(n)$, $\mathbb{S}_n$, or
$\mathbb{B}(n)$.
Of course, it is assumed that $h<n$ so that
the above is nontrivial and, in
particular, we assume that $h\ll n$.
As noted above, we can, without loss of generality, 
assume that $X \in S^{n-1}$. The randomization hypothesis
that $MX = X$ in distribution for any $M\in \mathcal{G}$ 
equates to the null hypothesis that $X$ is isotropic for 
$\mathcal{G}=SO(n)$, a natural setting in statistics; 
i.e. $\xv (X\TT{X}) = \sigma^2 I$.  
For the test statistic
$\hat{T}_h$, we are looking for a specific deviation 
from isotropy, which is non-zero correlation along the
$h$-diagonal of $\xv X\TT{X}$.  

The null hypothesis we aim to test is that of zero
autocorrelation at a given lag $h$.  That is,
\begin{equation}
  \label{hyp:uncor}
  H_0:  K(h) := \cov{X_t}{X_{t+h}} = 0.  
\end{equation}
This implies that the $h$ off diagonal of the 
autocovariance matrix is zero. 
Thus, this hypothesis is more specific than the broader 
total randomization hypothesis that $X \eqdist MX$ 
for any choice of $M\in SO(n)$.
In particular, the total randomization hypothesis
implies that the autocovariance matrix is $cI$ for
some $c>0$.
Therefore, instead of testing for the rotational 
invariance of $X$, we instead test for the 
rotational invariance of the quadratic form
$
 \TT{X}A^hX,
$
which is
\begin{align}
  \label{hyp:rotInv}
  H_0: \xv_X [\TT{X}A^hX] &= \xv_X \xv_{M} [\TT{(MX)}A^h(MX)] \\
  \nonumber
  \xv_X \hat{T}_h(I) &= \xv_X \xv_M \hat{T}_h(M)
\end{align}
where $I$ is the identity matrix. We note that
$\xv_X [\TT{X}A^hX] = (n-h)K(h)$.
Integrating over $M$ using 
Lemma~\ref{lem:rotInnerProduct} in the appendix gives
that 
$\xv_{M} [\TT{(MX)}A^h(MX)] = 0$ almost surely.
Hence, null hypotheses \ref{hyp:uncor} and \ref{hyp:rotInv}
are equivalent.

The main theorem of this section 
provides a sub-Gaussian tail bound 
on our random invariance test statistic $\hat{T}_h$.
It follows directly from Corollary~\ref{meck},
Theorems~\ref{thm:lipGen}, \ref{thm:reflectConc},
and \ref{thm:exchangeConc},
and Lemma~\ref{lip} below, and Lemma~\ref{lem:rotInnerProduct} 
in the appendix.
Note that the constants below may be suboptimal.
However, their choice becomes immaterial once the 
beta correction is applied in the following subsection;
see Theorem~\ref{betacon}.

\begin{theorem}\label{con1}
    Let $M$ be a random element of $\mathcal{G}$
    distributed with respect to normalized Haar measure. 
    Then for each $t>0$,
    $$
        \probB{ \hat{T}_h(M) \geq t \,|\, X} \leq \left\{
        \begin{array}{ll}
          \exp{
            \{-{(n-2)t^2}/{64}\}
          }, & \text{ if } \mathcal{G}=SO(n)\\
          \exp{
            \{-{t^2}/{16}\}
          }, & \text{ if } \mathcal{G}=\mathbb{S}(n)\\
          \exp{
            \{-{t^2}/{64}\}
          }, & \text{ if } \mathcal{G}=\mathbb{B}(n)
    \end{array}
        \right..
    $$
\end{theorem}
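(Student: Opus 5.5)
The plan is to apply the three group-specific concentration results (Theorems~\ref{thm:lipGen}, \ref{thm:exchangeConc} and \ref{thm:reflectConc}) to the fixed matrix $A^h$, conditionally on $X\in S^{n-1}$. Two inputs are needed to turn those results into the displayed bounds: an upper bound on the relevant matrix constants of $A^h$, and control of the conditional mean $\xv_M\hat T_h(M)$ so that the event $\{\hat T_h(M)\ge t\}$ can be compared with $\{\hat T_h(M)-\xv\hat T_h(M)\ge t\}$.

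\textbf{Spectral constant.} Let $B_h=(A^h+\TT{(A^h)})/2$. This symmetric matrix has entries $1/2$ on the $\pm h$ off-diagonals and zeros elsewhere. Each row has at most two nonzero entries, each equal to $1/2$, so the maximal absolute row sum is at most $1$. By the Gershgorin (or $\ell^\infty$-norm) bound, $\rho(B_h)\le 1$. Alternatively, $\rho(B_h)\le \norm{A^h}_{\text{OP}}=1$, since $A^h$ is a partial isometry. I expect this to be the content of Lemma~\ref{lip}. For the reflection case we also need $\max_{i>j}(a^{(h)}_{i,j})^2\le 1$, which is immediate because $A^h$ is a $0/1$ matrix.

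\textbf{Conditional means.}
\begin{itemize}
\item For $\mathcal{G}=SO(n)$, Lemma~\ref{lem:rotInnerProduct} gives $\xv_M\hat T_h(M)=0$. The reason is that $MX$ is uniform on $S^{n-1}$, so $\xv[(MX)\TT{(MX)}]=I/n$, and $\operatorname{tr}(A^h)=0$.
\item For $\mathbb{B}(n)$, $\xv[b_ib_j]=\delta_{ij}$, so the mean is $\sum_i a^{(h)}_{ii}X_i^2=0$.
\item For $\mathbb{S}(n)$, the mean is $\sum_{i\ne j}a^{(h)}_{ij}\,\xv[X_{\pi(i)}X_{\pi(j)}]=(n-h)\,\frac{(\sum_k X_k)^2-1}{n(n-1)}$. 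This is generally not zero, but it is nonpositive when $X$ is centered ($\sum_k X_k=0$, as the paper assumes after centering). In that case $\{\hat T_h\ge t\}\subseteq\{\hat T_h-\xv\hat T_h\ge t\}$.
\end{itemize}
In all three cases we therefore have $\probB{\hat T_h\ge t\mid X}\le\probB{\hat T_h-\xv\hat T_h\ge t\mid X}$.

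\textbf{Assembling the bounds.} Plugging $\rho(B_h)\le1$ into Theorem~\ref{thm:lipGen} gives $\exp\{-(n-2)t^2/64\}$ for $SO(n)$. Theorem~\ref{thm:exchangeConc} gives $\exp\{-t^2/16\}$ for $\mathbb{S}(n)$. Theorem~\ref{thm:reflectConc} gives $\exp\{-t^2/64\}$ for $\mathbb{B}(n)$.

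\textbf{Main obstacle.} The delicate step is the permutation mean. If centering is not assumed, the mean is $O(1/n)$ and possibly positive. I would first try to justify the centered setting. Failing that, I would absorb the shift by noting that $(n-h)/(n(n-1))\le 1/n$ and state the bound for $t$ replaced by $t-\xv\hat T_h$.
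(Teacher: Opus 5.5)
Your proposal is correct and follows essentially the paper's route. You bound $\rho(B_h)\le 1$ by Gershgorin (Lemma~\ref{lip}), feed it into Theorems~\ref{thm:lipGen}, \ref{thm:exchangeConc} and \ref{thm:reflectConc}, and use that the conditional mean vanishes for $SO(n)$ (Proposition~\ref{lem:rotInnerProduct}) and $\mathbb{B}(n)$, while for centered $X$ under $\mathbb{S}(n)$ it equals $-(n-h)/(n(n-1))\le 0$, exactly as in the discussion following Lemma~\ref{lip}.

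One correction to your fallback remark: without centering, the permutation mean is $(n-h)\{(\sum_k X_k)^2-1\}/(n(n-1))$, which can be of order one rather than $O(1/n)$. For $X=(1,\ldots,1)/\sqrt{n}$ the statistic is the constant $(n-h)/n$, so the $\mathbb{S}(n)$ bound fails outright; centering is genuinely required, and shifting $t$ would not rescue the statement as written.
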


We test the hypothesis, $H_0 : \eta_h = 0$ vs $H_1 : \eta_h > 0$; 
i.e. we start by detecting if there is any positive
autocorrelation in our data. 
From our notation, 
$\hat{T}_h(I)$ is based on the original (untransformed) data 
where $I$ is the identity matrix. 
Then the p-value of our randomization hypothesis test is 
$\probB{\hat{T}_h(M) \geq \hat{T}_h(I)\,|\, X}$. 
One way to approximate this is by randomly 
generating a large number of random transformations
from $\mathcal{G}$. 
However, computations of such a large order will be 
impractical especially for $SO(n)$ as noted above in the introduction. 
Therefore, Theorem~$\ref{con1}$ 
allows us to avoid relying on simulation-based 
approximations and provides a sub-Gaussian 
bound for the p-value.
While this bound is too weak for practical 
application as it stands, it will be strengthened
in the next section via a clever transformation.

%Let $M \in \mathbb{SO}(n)$ and $x \in \mathbb{R}^n$. Then we have, \begin{equation}
%\|Mx\|_2^2=(Mx)^tMx=x^tM^tMx=x^tx=\|x\|_2^2
%\end{equation}
%i.e., orthogonal matrices preserve the $L^2$ norm and as a consequence, the operator norm as well. Therefore, we can conclude that $MX$ also belongs to $S^{n-1}$

%As noted, Monte Carlo generation of uniform elements
%from $SO(n)$ will be computationally cumbersome for
%large $n$.  
%Hence, we avoid the need to sample 
%uniform elements from $SO(n)$ by employing the 
For $SO(n)$,
Theorem~\ref{con1} directly stems from the 
concentration inequality for real 
valued Lipschitz functions on $SO(n)$
as stated in Theorem 5.17 in \cite{MECKES2019}
and, more specifically, in Corollary~\ref{meck} above.
The mean $\xv_{SO(n)} [\hat{T}_h(M)] = 0$ for any lag $h$,
which follows directly from 
Proposition~\ref{lem:rotInnerProduct} in 
the appendix.
Thus, Theorem~\ref{con1} is immediately 
established by determining the
Lipschitz constant for our rotation test statistic
$\hat{T}_{h,x}: SO(n) \rightarrow \real$ defined as
$\hat{T}_{h,x}(M) = \TT{(Mx)}A^h(Mx)$.  This comes from the
following lemma with proof.

\begin{lemma}\label{lip}
For a fixed $x \in S^{n-1}$, the function $\hat{T}_{h,x}$ 
is Lipschitz with Lipschitz constant $2$, where the group $\mathrm{SO}(n)$ is equipped with the Hilbert-Schmidt metric.
\end{lemma}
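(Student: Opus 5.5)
The plan is to obtain Lemma~\ref{lip} as a special case of Theorem~\ref{thm:lipGen}, and to include a direct argument as a check. Theorem~\ref{thm:lipGen} is applied with $A$ replaced by $A^h$. It then says that $\hat{T}_{h,x}(M)=\TT{(Mx)}A^h(Mx)$ is Lipschitz with constant $2\rho(B_h)$, where $B_h=(A^h+\TT{(A^h)})/2$. So everything reduces to proving $\rho(B_h)\le 1$.

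\textbf{Bounding the spectral radius.} Since $B_h$ is symmetric, its spectral radius equals its operator norm. The triangle inequality then gives
\[
\rho(B_h)\le \tfrac12\bigl(\norm{A^h}_{\text{OP}}+\norm{\TT{(A^h)}}_{\text{OP}}\bigr)=\norm{A^h}_{\text{OP}}\le\norm{A}_{\text{OP}}^h.
\]
The shift matrix $A$ moves coordinates by one position and drops one of them. Hence $\norm{Ay}=\bigl(\sum_{i\ge2}y_i^2\bigr)^{1/2}\le\norm{y}$, so $\norm{A}_{\text{OP}}\le 1$. It follows that $\rho(B_h)\le 1$, and the Lipschitz constant is at most $2$.

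\textbf{Direct argument.} If we prefer not to rely on Theorem~\ref{thm:lipGen}, take $M_1,M_2\in SO(n)$ and write $u=M_1x$ and $v=M_2x$; both are unit vectors. Then
\[
\TT{u}B_hu-\TT{v}B_hv=\TT{(u-v)}B_h(u+v).
\]
Its absolute value is at most $\rho(B_h)\,\norm{u-v}\,\norm{u+v}\le 2\norm{u-v}$. Finally,
\[
\norm{u-v}=\norm{(M_1-M_2)x}\le\norm{M_1-M_2}_{\text{OP}}\le\norm{M_1-M_2}_{HS},
\]
using $\norm{x}=1$ and the fact that the Hilbert--Schmidt norm dominates the operator norm.

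\textbf{Main obstacle.} The only step requiring care is the bound $\rho(B_h)\le1$. One has to remember that $A^h$ itself is not symmetric, so the quadratic form should be computed with the symmetrized matrix $B_h$. The eigenvalues of $A^h$ are all zero, so they say nothing; the operator-norm route above is what gives the bound. Everything else is routine.
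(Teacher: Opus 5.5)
Your proposal is correct and follows the paper's route. The paper also gets Lemma~\ref{lip} from Theorem~\ref{thm:lipGen} with $A^h$ in place of $A$, bounding $\rho(B_h)\le 1$ by the Gershgorin circle theorem (zero diagonal and at most two entries equal to $1/2$ per row); your bound $\rho(B_h)\le\norm{A^h}_{\text{OP}}\le 1$ is an equally short alternative, and your direct check via $\TT{(u-v)}B_h(u+v)$ is the same identity the paper records in \eqref{eqn:liponsphere}.
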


The proof of this lemma follows directly from the proof of
Theorem~\ref{thm:lipGen}.
Specifically, from the Gershgorin circle theorem,
    the eigenvalues of 
    $B^h = (A^h +(A^h\TT{)})/2$ are 
    all bounded above by 1 in magnitude for any choice
    of $h<n$.  More explicit eigenvalue computations
    can be found in \cite{toep} and discussed in Section~\ref{sec:exactChiSquare}
    below.

While the main focus of this work is randomization testing with 
respect to the group $SO(n)$, we also note for completeness
that under 
$\mathbb{S}(n)$ with the assumption that 
$\sum_{t=1}^n X_t=0$ and $\sum_{t=1}^n X_t^2=1$,
$$
  \xv_{\mathbb{S}(n)}[ \hat{T}_{h}(M) ] 
  = \frac{1}{n!}\sum_{\pi\in\mathbb{S}(n)}
    \sum_{t=1}^{n-h} X_{\pi(t)}X_{\pi(t+h)}
  = \frac{1}{n!}\sum_{t=1}^{n-h} \sum_{s\ne r} (n-2)!X_sX_r
  = -\frac{n-h}{n(n-1)}
$$
as 
$0 = [\sum_{s=1}^n X_s]^2 = 
\sum_{s=1}^n X_s^2 + \sum_{s\ne r} X_sX_r$ in this setting.
Hence, in Theorem~\ref{con1} under $\mathbb{S}(n)$,
$$
  \probB{ \hat{T}_h(M) \geq t} \leq
  \probB{ \hat{T}_h(M) + \frac{n-h}{n(n-1)} \geq t} \leq
  \exp{
            \{-{t^2}/{48}\}
          }.
$$
with the mean vanishing at an $n^{-1}$ rate.
In contrast, under $\mathbb{B}(n)$,
$$
  \xv_{\mathbb{B}(n)}[ \hat{T}_{h}(M) ] 
  = 2^{-n}\sum_{b\in \{\pm1\}^n}
    \sum_{t=1}^{n-h} b_tb_{t+h}X_{t}X_{t+h} = 0.
$$

\subsubsection{Power Against Local Alternatives}
\label{sec:powerAnal}

In the context of Theorem~\ref{con1} 
specifically for $SO(n)$ above, we can consider
the asymptotic statistical power for detecting 
deviations from the null under a sequence 
of local alternatives such that $\eta_h\rightarrow0$ as
$n\rightarrow\infty$.  This results in the 
following quick corollary.

\begin{corollary}
    \label{cor:powerAnal}
    Let $X\in\real^n$ be a random vector with autocorrelation
    $\eta_h$ at lag $h$ and finite 
    fourth moment, and let
    $$
      pv_\vee(X) := 
      \exp{\{{-(n-2)(\hat{T}_{h,X}(I))^2}/{64}\}}
    $$
    be a random p-value
    under $\mathcal{G} = SO(n)$.
    If $\eta_h = \delta n^{-1/2}$ for any
    $\delta>0$, 
    then $pv_\vee(X)\convd Z\in(0,1)$ a random variable 
    as $n\rightarrow\infty$.
    Furthermore,
    if $\eta_h = \delta n^{-1/2+\veps}$ for any
    $\veps,\delta>0$, 
    then $pv_\vee(X)\convas 0$ as $n\rightarrow\infty$.
\end{corollary}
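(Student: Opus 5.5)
The plan is to reduce the behaviour of $pv_\vee(X)$ to the asymptotics of the single statistic $\hat{T}_{h,X}(I) = \TT{X}A^hX/\norm{X}^2$, the sample autocorrelation at lag $h$. Since $pv_\vee(X) = \exp\{-(n-2)\hat{T}_{h,X}(I)^2/64\}$ is a continuous function of $\sqrt{n}\,\hat{T}_{h,X}(I)$, we have
$$
  pv_\vee(X) = \exp\{-(1+o(1))(\sqrt{n}\,\hat{T}_{h,X}(I))^2/64\}.
$$
So it suffices to establish two things:
\begin{itemize}
\item a distributional limit for $\sqrt{n}\,\hat{T}_{h,X}(I)$ under $\eta_h=\delta n^{-1/2}$;
\item divergence of $\sqrt{n}\,\hat{T}_{h,X}(I)$ to infinity under $\eta_h=\delta n^{-1/2+\veps}$.
\end{itemize}

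\emph{Step 1 (decomposition).} I will write
$$
  \hat{T}_{h,X}(I) = \frac{n^{-1}\sum_{t=1}^{n-h}X_tX_{t+h}}{n^{-1}\sum_{t=1}^n X_t^2}.
$$
By the finite fourth moment assumption and a weak law of large numbers for the stationary sequences $X_t^2$ and $X_tX_{t+h}$ (assuming enough weak dependence, e.g.\ summable autocovariances of these products), the denominator converges to $K(0)$. The numerator has mean $(n-h)K(h)/n$ and variance $O(1/n)$; the variance bound is where the fourth moment enters.

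\emph{Step 2 (first claim).} I center the numerator and write
$$
  \sqrt{n}\,\hat{T}_{h,X}(I) = \sqrt{n}\,\frac{n-h}{n}\,\eta_h\,\frac{K(0)}{\hat\sigma^2} + \frac{W_n}{\hat\sigma^2},
$$
where $W_n = n^{-1/2}\sum_t (X_tX_{t+h} - K(h))$. With $\eta_h = \delta n^{-1/2}$, the first term tends to $\delta$. A central limit theorem for the $h$-dependent-type products (e.g.\ for linear processes or $m$-dependent approximations, as in the classical Bartlett asymptotics for sample autocorrelations) gives $W_n/\hat\sigma^2 \convd N(0,\tau^2)$. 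By Slutsky, $\sqrt{n}\,\hat{T}_{h,X}(I)\convd \delta + N(0,\tau^2)$. Applying the continuous mapping theorem, $pv_\vee(X)\convd Z=\exp\{-(\delta+\tau N)^2/64\}$. This $Z$ takes values in $(0,1]$, and in $(0,1)$ almost surely whenever $\tau>0$ or $\delta>0$; the latter holds here.

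\emph{Step 3 (second claim).} Now the first term equals $\delta n^{\veps}(1+o(1))\to\infty$, while $W_n$ is only $O_p(1)$. Hence $(n-2)\hat{T}_{h,X}(I)^2 \gtrsim \delta^2 n^{2\veps}$, which already gives $pv_\vee\to0$ in probability. To upgrade this to almost sure convergence, I would use a strong law instead. Write $\hat{T}_{h,X}(I) - \eta_h(1+o(1)) = o(n^{-1/2+\veps'})$ a.s.\ for any $\veps'>0$; this follows from a Marcinkiewicz--Zygmund type strong law for the centered products, which is available under finite second moments of $X_tX_{t+h}$, i.e.\ fourth moments of $X_t$. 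Choosing $\veps'<\veps$ gives $\sqrt{n}\,|\hat{T}_{h,X}(I)|\ge c\,n^{\veps}$ eventually a.s., so $pv_\vee(X)\le \exp\{-c' n^{2\veps}\}\to0$ a.s.

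The main obstacle is that the statement assumes only a finite fourth moment and lag-$h$ autocorrelation, without specifying a dependence structure, whereas both the CLT in Step 2 and the almost sure rate in Step 3 need some mixing or linear-process structure. I would make this explicit by assuming $X_t$ is a stationary linear (e.g.\ causal ARMA) process with i.i.d.\ innovations of finite fourth moment. Under that assumption I would cite the standard Bartlett CLT for sample autocorrelations and the strong-law rate, and treat the triangular-array nature of the local alternatives by letting the coefficients vary with $n$ while keeping the fourth moments uniformly bounded.
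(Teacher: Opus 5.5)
Your proposal is correct and is essentially the paper's argument. The paper splits $\hat{T}_{h,X}(I)^2$ into the squared centered fluctuation, a cross term, and $(\mathrm{E}\hat{T})^2$; it applies the $\sqrt{n}$-CLT to the fluctuation and notes that $(n-2)(\mathrm{E}\hat{T})^2$ is $O(1)$ when $\eta_h=\delta n^{-1/2}$ and of order $n^{2\varepsilon}$ otherwise. This is the same as your drift-plus-$O_p(1)$ decomposition of $\sqrt{n}\,\hat{T}_{h,X}(I)$, except that you decompose before squaring.

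Your version is more careful than the paper's in several places. The paper treats $X$ as already normalized, argues only that the limit lies in the unit interval, and cites ``the law of large numbers'' for the almost sure claim. You instead handle $\hat\sigma^2$ via Slutsky, identify the limit law explicitly, and state the dependence assumptions and rate-type strong law that both arguments implicitly need. For the almost sure statement under local alternatives, you would still need to fix a coupling across $n$, since this is a triangular array.
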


\subsubsection{Exact p-values for random rotations}
\label{sec:exactChiSquare}

The classic Durbin-Watson statistic yields a weighted sum of 
chi-squared random variables under the null hypothesis.
The right-hand side of the below Proposition~\ref{prop:exact} is the
distribution function of a linear combination of independent
$\chi^2(1)$ random variables and can be evaluated to any prescribed
accuracy by classical numerical inversion of the characteristic
function \citep{IMHOF1961, davies1980}, implemented in the
\texttt{CompQuadForm} R package \citep{duchesne2010}; for p-values in
the extreme tail, where inversion methods with absolute error bounds
lose accuracy, the saddlepoint approximation of \citet{kuonen1999}
provides relative accuracy.

However, this requires specific knowledge of the spectrum of the
quadratic form.  
In contrast, our proposed beta-corrected statistic detailed in the
next section
is a closed-form
expression requiring no numerical inversion. Compared to
Proposition~\ref{prop:exact}, our approach extends 
unchanged to the groups
$\mathbb{S}(n)$ and $\mathbb{B}(n)$ and to arbitrary 
symmetric matrices $A$ whose spectra
are not available in closed form.

\begin{proposition}\label{prop:exact}
Fix $x \in S^{n-1}$, let $M$ be Haar distributed on $SO(n)$, let
$z \sim \mathcal N(0, I_n)$, and let $\lambda_1, \dots, \lambda_n$
denote the eigenvalues of $B_h = (A^h + (A^h)^{\mathrm T})/2$. Then
$Mx \overset{d}{=} z / \lVert z \rVert$ and, for any $t \in \mathbb R$,
$$
  \prob{ \hat T_{h,x}(M) \ge t } = 
  \prob{ \sum_{i=1}^{n} (\lambda_i - t)\, W_i \ \ge\ 0 },
  \quad W_1, \dots, W_n \distiid \chi^2(1).
$$
Furthermore, writing $n = qh + r$ with $0 \le r < h$, the eigenvalues
of $B_h$ are
\begin{align*}
  \cos\left( \frac{j\pi}{q+2} \right),~& j = 1, \dots, q+1,
  \text{ each with multiplicity } r,
  \text{ and}\\
  \cos\left( \frac{j\pi}{q+1} \right),~& j = 1, \dots, q,
  \text{ each with multiplicity } h - r.    
\end{align*}
\end{proposition}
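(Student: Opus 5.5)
The proof has three parts: the distributional identity $Mx \eqdist z/\norm{z}$, the reduction of the tail event to a weighted chi-squared sum, and the spectrum of $B_h$.

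\textbf{Distributional identity.} Haar measure on $SO(n)$ is invariant under left multiplication, so for any fixed $U\in SO(n)$ we have $UMx \eqdist Mx$. Hence the law of $Mx$ is a probability measure on $S^{n-1}$ invariant under $SO(n)$. For $n\ge2$, $SO(n)$ acts transitively on $S^{n-1}$, so this invariant probability measure is unique and equals the uniform measure $\sigma$. The vector $z/\norm{z}$ with $z\sim\mathcal N(0,I_n)$ is also $SO(n)$-invariant, by rotation invariance of the standard Gaussian, and so it has law $\sigma$ as well. This gives $Mx\eqdist z/\norm{z}$.

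\textbf{Reduction to chi-squared variables.} Since $A^h$ and $B_h$ define the same quadratic form, $\hat T_{h,x}(M) = \TT{(Mx)}B_h(Mx) \eqdist \TT{z}B_hz/\TT{z}z$. Because $\norm{z}>0$ almost surely, the event $\{\TT{z}B_hz/\TT{z}z\ge t\}$ equals $\{\TT{z}(B_h-tI)z\ge0\}$. Diagonalise $B_h=Q\Lambda\TT{Q}$ with $Q$ orthogonal. Then $w=\TT{Q}z\sim\mathcal N(0,I_n)$, and $\TT{z}(B_h-tI)z=\sum_i(\lambda_i-t)w_i^2$. Setting $W_i=w_i^2$, which are i.i.d.\ $\chi^2(1)$, gives the stated identity.

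\textbf{Spectrum of $B_h$.} This is the main computation. The matrix $A^h$ has ones exactly at positions $(i,i+h)$, so $2B_h$ is the adjacency matrix of the graph on $\{1,\dots,n\}$ with edges $i\sim i+h$. This graph is a disjoint union of paths, one per residue class of indices modulo $h$. Write $n=qh+r$. The residues $1,\dots,r$ each contain $q+1$ indices, and the remaining $h-r$ residues each contain $q$ indices. Permuting coordinates by residue class therefore makes $B_h$ block diagonal: $r$ blocks equal to $\tfrac12$ times the adjacency matrix $P_{q+1}$ of a path on $q+1$ vertices, and $h-r$ blocks equal to $\tfrac12 P_q$. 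The spectrum of $B_h$ is the union of the block spectra, with the stated multiplicities.

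It then remains to use the standard fact that $\tfrac12 P_m$, the symmetric tridiagonal Toeplitz matrix with $\tfrac12$ on the off-diagonals, has eigenvalues $\cos(j\pi/(m+1))$ for $j=1,\dots,m$. I would verify this with the eigenvectors $v_k=\sin(jk\pi/(m+1))$. The identity $\sin((k-1)\theta)+\sin((k+1)\theta)=2\cos\theta\sin(k\theta)$ gives $(P_m v)_k = 2\cos\theta\, v_k$ in the interior. The boundary rows $k=1$ and $k=m$ also satisfy this, because $\sin(0)=0$ and $\sin(j\pi)=0$. Alternatively, cite \cite{toep}.

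Substituting $m=q+1$ and $m=q$ yields exactly the two families in the statement. The main care needed is the bookkeeping of which residue classes have $q+1$ versus $q$ elements. The case $q=0$, where paths are single vertices with eigenvalue $0$, is consistent with the formula.
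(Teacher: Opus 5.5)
Your proof is correct and follows the paper's argument essentially step for step. You use uniqueness of the rotation-invariant probability on $S^{n-1}$ to get $Mx \overset{d}{=} z/\lVert z\rVert$, diagonalize $B_h - tI_n$ to obtain the weighted $\chi^2(1)$ sum, and group indices by residue class modulo $h$, so that $B_h$ becomes block diagonal with $r$ halved path adjacency matrices on $q+1$ vertices and $h-r$ on $q$ vertices. Your explicit sine-eigenvector check of the path spectrum and the $n\ge 2$ transitivity caveat are small additions; at those points the paper simply cites the standard path-graph eigenvalues and does not state the restriction.
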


\subsection{Beta Correction}\label{beta}
\label{sec:betaCorrect}

A significant challenge in employing concentration inequalities
for direct statistical application 
is the substantial loss of power to reject the null hypothesis 
caused by universal constants that are often too big 
for applications. 
Motivated by the approach in \cite{KASHLAK_KHINTCHINE2020,KASHLAK_YUAN_ABELECT}, 
we apply a transformation based on the 
beta distribution to rectify our p-values and 
restore the lost statistical power.  
The main theorem
of this section intuitively stems from 
the probability integral transform; that is, a non-atomic 
real random variable $Z$ with cumulative distribution
function $F_Z$ results in 
$F_Z(Z)\dist\distUnifInt{0}{1}$.  The claim of the 
following theorem is that a random variable plugged 
into its sub-Gaussian tail bound will elicit a distribution
close to a beta distribution.

\begin{remark}
The following theorem is stated for $\mathcal{G}=SO(n)$, but effectively
all that matters is the form of the right hand side of the
concentration inequality.  Similar formulations for 
$\mathbb{S}(n)$ and $\mathbb{B}(n)$ will give similar p-values.
\end{remark}

\begin{theorem}\label{betacon}
    In the setting of Theorem $\ref{con1}$ with
    $\mathcal{G}=SO(n)$ and additionally that 
    $X_t$ is a stationary time series with 
    iid mean zero noise process $\veps_t$ with
    finite 3rd moment,
    let 
    $$
      pv_\vee = 
      \exp{\{{-(n-2)(\hat{T}_{h,X}(I))^2}/{64}\}}
    $$ be the upper bound on the tail probability (p-value)
    in terms of random vector $X$ from Theorem~\ref{con1}.
    Then, %for $n$ sufficiently large,
    \begin{equation}
      \label{eqn:betaPV}
      \mathbb{P}\left(
        pv_\vee \leq u \right) \leq 
        C_0%\left\{
        %\begin{array}{ll}
          I\left[u;\frac{32n(n+2)}{(n-h)(n-2)},\frac{1}{2}
          \right]  % & \text{ if }
         % \mathcal{G} = SO(n)\\
         % I\left[u;\frac{24n(n+2)}{n-h},\frac{1}{2}
         % \right]   & \text{ if }
         % \mathcal{G} = \mathbb{S}(n)\\
         % I\left[u;\frac{32n(n+2)}{n-h},\frac{1}{2}
         % \right] & \text{ if }
         % \mathcal{G} = \mathbb{B}(n)
      %\end{array}  
      + O(n^{-1/2})
      %\right.
    \end{equation}
    where $I(u; \alpha, \beta)$ is the regularized incomplete beta function and,
    $$
      C_0 = %\left\{
      %\begin{array}{ll}
        \left(\frac{32n(n+2)}{(n-h)(n-2)}\right)^{1/2}
        \Gamma \left(\frac{32n(n+2)}{(n-h)(n-2)}\right) 
        \Gamma \left(\frac{1}{2} + 
        \frac{32n(n+2)}{(n-h)(n-2)}\right)^{-1}
       % & \text{ if }\mathcal{G} = SO(n) \\
       % \left(\frac{24n(n+2)}{n-h}\right)^{1/2}
       % \Gamma \left(\frac{24n(n+2)}{n-h}\right) 
       % \Gamma \left(\frac{1}{2} + 
       % \frac{24n(n+2)}{n-h}\right)^{-1}
       % & \text{ if }\mathcal{G} = \mathbb{S}(n) \\
       % \left(\frac{32n(n+2)}{n-h}\right)^{1/2}
       % \Gamma \left(\frac{32n(n+2)}{n-h}\right) 
       % \Gamma \left(\frac{1}{2} + 
       % \frac{32n(n+2)}{n-h}\right)^{-1}
       % & \text{ if }\mathcal{G} = \mathbb{B}(n) 
      %\end{array}
      %\right.
    $$
    For large $n$, $C_0\approx 1.004$.
    %for $SO(n)$ and 
    %$C_0\rightarrow1$ for $\mathbb{S}(n)$ and $\mathbb{B}(n)$.
\end{theorem}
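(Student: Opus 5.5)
Under $H_0$ I will use the central limit theorem for the sample autocorrelation to show that $\hat{T}_{h,X}(I)^2$ is approximately a scaled $\chi^2(1)$ variable. The event $\{pv_\vee\le u\}$ is then essentially a gamma tail, which I will compare with the regularized incomplete beta function.

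\emph{Step 1: asymptotic normality.} Write $\hat{T}_{h,X}(I)=\hat\eta_h=\TT{X}A^hX/\norm{X}^2$. By the Berry--Esseen theorem for sample autocorrelations of a stationary series driven by iid noise with finite third moment, $\sqrt{n}\,\hat\eta_h$ is close to Gaussian. The simplest route is $m$-dependent or martingale-difference Berry--Esseen applied to $\sum_t X_tX_{t+h}$ together with a delta-method step for the denominator; this gives $\sup_s|\mathbb{P}(\sqrt{n}\hat\eta_h\le s)-\Phi(s/\tau)|=O(n^{-1/2})$. The variance I will take from the exact finite-sample null moment under isotropy. Using Lemma~\ref{lem:rotInnerProduct} together with the fourth moments of a uniform point on $S^{n-1}$, $\var{\TT{U}A^hU}=(n-h)/(n(n+2))$ for $U$ uniform on the sphere, since $\operatorname{tr}(B_h^2)=(n-h)/2$ and $\var{\TT{U}BU}=2\operatorname{tr}(B^2)/(n(n+2))$ for traceless $B$. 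Hence I set $\sigma_n^2=(n-h)/(n(n+2))$ and approximate $\hat\eta_h^2\approx\sigma_n^2 W$ with $W\sim\chi^2(1)$, with Kolmogorov error $O(n^{-1/2})$.

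\emph{Step 2: reduction to a gamma tail.} Since $pv_\vee\le u$ is equivalent to $-\log pv_\vee\ge -\log u$, and $-\log pv_\vee=\frac{n-2}{64}\hat\eta_h^2\approx\frac{(n-2)\sigma_n^2}{64}W$, we get $\mathbb{P}(pv_\vee\le u)=\mathbb{P}(G\ge -\log u)+O(n^{-1/2})$. Here $G\sim\mathrm{Gamma}(1/2,\text{rate }\alpha)$ with $\alpha=\frac{32}{(n-2)\sigma_n^2}=\frac{32n(n+2)}{(n-h)(n-2)}$, which matches the parameter in~\eqref{eqn:betaPV}. Consequently $e^{-G}$ has density $\frac{\alpha^{1/2}}{\Gamma(1/2)}u^{\alpha-1}(-\log u)^{-1/2}$ on $(0,1)$.

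\emph{Step 3: comparison with the beta distribution.} This is the main obstacle, namely comparing the density above with the $\mathrm{Beta}(\alpha,1/2)$ density $u^{\alpha-1}(1-u)^{-1/2}/B(\alpha,1/2)$. The elementary inequality $-\log u\ge 1-u$ gives $(-\log u)^{-1/2}\le(1-u)^{-1/2}$, so the density of $e^{-G}$ is at most $\frac{\alpha^{1/2}B(\alpha,1/2)}{\Gamma(1/2)}$ times the beta density. Integrating from $0$ to $u$ yields the bound $C_0\,I[u;\alpha,1/2]$ with $C_0=\alpha^{1/2}\Gamma(\alpha)/\Gamma(\alpha+1/2)$, exactly as stated. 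As $n\to\infty$ we have $\alpha\to32$, and Stirling's formula (or direct evaluation) gives $C_0\approx1.004$.

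The delicate points are the uniform $O(n^{-1/2})$ error in Step~1 under only third moments with dependent products, and making sure the normalization $\norm{X}^2$ does not disturb the rate; both I would handle by a standard delta-method argument with Berry--Esseen.
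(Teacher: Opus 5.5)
Your proposal is correct at the paper's level of rigor and follows the same route: a Berry--Esseen bound for the $h$-dependent sum $\sum_t\veps_t\veps_{t+h}$ (the paper cites Tikhomirov), standardization by the null variance $(n-h)/(n(n+2))$ from Lemma~\ref{lem:nullVar} to get a $\chi^2(1)$ approximation with $O(n^{-1/2})$ error, and then the substitution $y=e^{-x/c}$ with $-\log y\ge 1-y$, which gives $C_0\,I[u;c/2,1/2]$ with $c/2=32n(n+2)/((n-h)(n-2))$; your gamma-density phrasing is the same computation. The paper does not address the denominator $\norm{X}^2$ at all, and under only third moments a plain delta-method sandwich does not by itself give the $O(n^{-1/2})$ rate (for instance, $\veps_t^2$ need not have finite variance), so a fully rigorous version of that step would need a sharper self-normalized Berry--Esseen argument.
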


To establish this theorem, we begin 
by calculating the mean and variance of the 
autocorrelation estimator, $\hat{T}_{h,X}(I)$, 
under $H_0$. Under $H_0$,
$
    \mathrm{E}(X_iX_{i+h}) = 0
$
and hence $\xv \hat{T}_{h,X}(I) = 0$.
The above is for the identity 
matrix $I$ and a uniformly random unit vector $X$.
However, this is equal in distribution to fixing
a unit vector $x$ and considering $\hat{T}_{h,x}(M)$
for $M$ a Haar-distributed random rotation matrix from $SO(n)$.
In order to calculate the variance, 
we will need the following lemma from \cite{wiens}, 
which is stated without proof.
\begin{lemma}\label{var}
    Let $X$ be an exchangeable random vector. 
    %with a spherical distribution. 
    Then for symmetric matrices $A$ and $B$, we have
    \begin{equation}
        \mathrm{E}(\TT{X}AX  \TT{X}BX) = 
        \mu_{22}[\mathrm{tr}(A)\mathrm{tr}(B) + 2\mathrm{tr}(AB)]
        +
        (\mu_4 - 3\mu_{22})\tr{ A\circ B }
    \end{equation}
    where $\mu_{22} = \mathrm{E}(X_1^2X_2^2)$,
    $\mu_4 = \xv{X_1^4}$, and $\circ$ is the Hadamard
    product.
\end{lemma}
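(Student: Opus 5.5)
The plan is to prove the identity by brute-force expansion of the fourth-order moment, using exchangeability to reduce all moments to a few index patterns. Write
$$
\TT{X}AX\,\TT{X}BX=\sum_{i,j,k,l=1}^n a_{i,j}b_{k,l}X_iX_jX_kX_l ,
$$
so that $\xv(\TT{X}AX\,\TT{X}BX)=\sum_{i,j,k,l}a_{i,j}b_{k,l}\,\xv(X_iX_jX_kX_l)$.

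\textbf{Classifying the moments.} By exchangeability, $\xv(X_iX_jX_kX_l)$ depends only on the partition of $\{i,j,k,l\}$ into equal indices. The possible values are
$$
\mu_4,\quad \mu_{22},\quad \xv(X_1^3X_2),\quad \xv(X_1^2X_2X_3),\quad \xv(X_1X_2X_3X_4).
$$
The stated formula contains only $\mu_4$ and $\mu_{22}$, so the moments in which some index occurs an odd number of times must vanish. This holds in the setting where the lemma is used: there $X$ is uniform on the sphere, hence also invariant under sign changes. In the proof I will state explicitly that I use this extra condition, namely $X\eqdist RX$ for diagonal sign matrices $R$. Exchangeability alone does not force these odd-pattern moments to be zero, so this step is the one delicate point: I would flag it as an implicit hypothesis of \cite{wiens} rather than try to derive it.

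\textbf{Counting the surviving terms.} With the odd moments gone, only four index patterns contribute:
\begin{itemize}
\item $i=j=k=l$, contributing $\mu_4\sum_i a_{i,i}b_{i,i}$;
\item $i=j\ne k=l$, contributing $\mu_{22}\sum_{i\ne k}a_{i,i}b_{k,k}$;
\item $i=k\ne j=l$, contributing $\mu_{22}\sum_{i\ne j}a_{i,j}b_{i,j}$;
\item $i=l\ne j=k$, contributing $\mu_{22}\sum_{i\ne j}a_{i,j}b_{j,i}$.
\end{itemize}
Three trace identities then rewrite these sums:
\begin{itemize}
\item $\sum_{i\ne k}a_{i,i}b_{k,k}=\tr{A}\tr{B}-\sum_i a_{i,i}b_{i,i}$;
\item $\sum_{i,j}a_{i,j}b_{j,i}=\tr{AB}$;
\item by symmetry of $B$, $\sum_{i,j}a_{i,j}b_{i,j}=\tr{AB}$ as well.
\end{itemize}
Hence each of the two off-diagonal sums equals $\tr{AB}-\sum_i a_{i,i}b_{i,i}$.

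\textbf{Assembling.} Adding the four contributions gives
$$
\mu_{22}\big[\tr{A}\tr{B}+2\tr{AB}\big]-3\mu_{22}\sum_i a_{i,i}b_{i,i}+\mu_4\sum_i a_{i,i}b_{i,i}.
$$
Finally, $\sum_i a_{i,i}b_{i,i}=\tr{A\circ B}$, which yields exactly the claimed expression.
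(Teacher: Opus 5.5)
Your argument is correct. The paper gives no proof of this lemma; it quotes it from \cite{wiens}. Your index-pattern expansion is therefore a self-contained argument rather than a variant of one in the text, and every step checks out:
\begin{itemize}
\item the four surviving patterns are disjoint and exhaust the tuples in which each index value occurs an even number of times;
\item $\sum_{i,j}a_{i,j}b_{i,j}=\mathrm{tr}(AB^{\mathrm T})=\mathrm{tr}(AB)$ is the only place symmetry enters;
\item the bookkeeping $\mu_4D+\mu_{22}(\mathrm{tr}A\,\mathrm{tr}B-D)+2\mu_{22}(\mathrm{tr}(AB)-D)$ with $D=\mathrm{tr}(A\circ B)$ gives the claim.
\end{itemize}

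The hypothesis you flagged is not a technicality. As literally stated, for a merely exchangeable $X$, the lemma is false, so any proof must add something. For instance, the constant vector $X=(1,1)^{\mathrm T}$ is exchangeable with $\mu_{22}=\mu_4=1$. Take $A$ the $2\times 2$ all-ones matrix and $B=I_2$. Then $X^{\mathrm T}AX\,X^{\mathrm T}BX=4\cdot 2=8$, whereas the right-hand side equals $[2\cdot 2+2\cdot 2]+(1-3)\cdot 2=4$. The missing $4$ is exactly the contribution of the discarded pattern $\mathrm{E}(X_1^3X_2)$.

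For the identity to hold for all symmetric $A,B$, what is actually required is finite fourth moments together with $\mathrm{E}(X_1^3X_2)=\mathrm{E}(X_1^2X_2X_3)=\mathrm{E}(X_1X_2X_3X_4)=0$. Your sign-invariance assumption is a clean sufficient condition. It holds for $X$ uniform on $S^{n-1}$, which is how the lemma is used in Lemma~\ref{lem:nullVar}, so that downstream variance computation is unaffected.
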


A spherical distribution means that ${x}/{||x||}$ 
is distributed uniformly over the surface of the unit 
sphere independently of $||x||$. This result is crucial 
for our work, as it allows us to easily calculate 
expectations of products of quadratic forms, 
aiding us in the computation of the variance
in the following lemma with proof in 
Appendix~\ref{app:proofs2}.

\begin{lemma}
   \label{lem:nullVar}
   Under $H_0$, the variance of 
   $\hat{T}_{h,X}(I)$ for $X$ uniform on $S^{n-1}$ 
   and equivalently 
   $\hat{T}_{h,x}(M)$ for a fixed $x\in S^{n-1}$
   and $M$ uniform on $SO(n)$
   is ${(n-h)}/{n(n+2)}$.
\end{lemma}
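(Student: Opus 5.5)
Since the mean of $\hat{T}_{h,X}(I)=\TT{X}A^hX$ is zero under $H_0$ (as noted above, and also by Lemma~\ref{lem:rotInnerProduct}), the variance equals the second moment $\xv[(\TT{X}A^hX)^2]$. The plan is to compute this second moment with Lemma~\ref{var}, applied with $X$ uniform on $S^{n-1}$. Such an $X$ is exchangeable, and it equals $Mx$ in distribution for a fixed $x$ and Haar $M$, so the two formulations in the statement are the same. Lemma~\ref{var} requires symmetric matrices, so we first replace $A^h$ by its symmetrization $B_h=(A^h+\TT{(A^h)})/2$; this does not change the quadratic form. We then apply the lemma with $A=B=B_h$, which gives
\[
\xv[(\TT{X}B_hX)^2] = \mu_{22}\bigl[\mathrm{tr}(B_h)^2 + 2\,\mathrm{tr}(B_h^2)\bigr] + (\mu_4-3\mu_{22})\,\mathrm{tr}(B_h\circ B_h).
\]

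The next step is to evaluate the three matrix quantities, which is routine. $B_h$ has entries $1/2$ on the $\pm h$ off-diagonals and zeros elsewhere, including on the diagonal. Hence $\mathrm{tr}(B_h)=0$. Also $B_h\circ B_h$ has a zero diagonal, so its trace is $0$. This removes the $\mu_4-3\mu_{22}$ term completely, so the kurtosis correction never has to be computed. Finally, $\mathrm{tr}(B_h^2)$ is the squared Frobenius norm, which is $2(n-h)\cdot \tfrac14=(n-h)/2$. The second moment therefore reduces to $\mu_{22}(n-h)$.

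The remaining task is to compute $\mu_{22}=\xv(X_1^2X_2^2)$ for $X$ uniform on the sphere; this is the one real ingredient. Write $X=z/\norm{z}$ with $z\sim\mathcal N(0,I_n)$, and use the fact that $z/\norm{z}$ is independent of $\norm{z}$. Then
\[
\xv(z_1^2z_2^2)=1=\xv(X_1^2X_2^2)\,\xv\norm{z}^4=\mu_{22}\,n(n+2),
\]
so $\mu_{22}=1/(n(n+2))$. An alternative check uses $\sum_i X_i^2=1$: squaring gives $n\mu_4+n(n-1)\mu_{22}=1$, and combining this with $\mu_4=3\mu_{22}$ (from $\xv z_1^4=3$) gives the same value. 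Substituting, the variance is $(n-h)/(n(n+2))$.

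The main obstacle is legitimacy rather than computation. The variance is really taken conditionally under the randomization (Haar) distribution, equivalently with $X$ uniform on the sphere. So we must justify that under $H_0$, in this equivalent formulation, the relevant law of $X$ is the uniform one, and that Lemma~\ref{var} applies to it (exchangeability holds trivially for the uniform distribution). I would state explicitly that the computation is carried out for $x$ fixed and $M$ Haar-distributed. Under this formulation the null mean is exactly zero and the distributional identity holds exactly.
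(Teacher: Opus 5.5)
Your proposal is correct and follows essentially the same route as the paper: the variance equals the second moment, Lemma~\ref{var} is applied to the symmetrization $B_h$ with $\mathrm{tr}(B_h)=0$ and $\mathrm{tr}(B_h^2)=(n-h)/2$, and $\mu_{22}=1/(n(n+2))$. The differences are only in the routine sub-steps. You obtain $\mu_{22}$ from the Gaussian polar decomposition, whereas the paper uses $\sum_i X_i^2=1$ together with the fourth moment of $\mathrm{Beta}(1/2,(n-1)/2)$. You also say explicitly that the $(\mu_4-3\mu_{22})$ term vanishes because $B_h$ has zero diagonal, which the paper leaves implicit.
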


\noindent
Therefore for large $n$ and fixed $h\ll n$, the variance 
is of the order ${1}/{n}$.  
Finally, combining the above results
gives the proof of our main theorem, 
which is detailed in Appendix~\ref{app:proofs2}.

Theorem \ref{betacon} allows us to improve our p-values from Theorem \ref{con1} in  order to recover the lost statistical power. The improved bound is,
$$
    \probB{\hat{T}_{h,X}(M) \geq t}
    \leq 
    C_0I\left[\exp{
      \bigg\{\frac{-(n-2)t^2}{64}
    \bigg\}};\frac{32n(n+2)}{(n-1)(n-2)},\frac{1}{2}
    \right].
$$
Since $t > 0$, this test essentially tests for positive autocorrelation instead of any autocorrelation in the data. Hence, this is a one-sided test. In order to make it a two-sided test, we need a bound for $\probB{|\hat{T}_{h,X}(M)| \geq t}$. For $t>0$,
\begin{align*}
    \probB{|\hat{T}_{h,X}(M)| \geq t} 
    &= \probB{
      \hat{T}_{h,X}(M) \geq t} + \probB{\hat{T}_{h,X} \leq -t}\\
    &= \probB{
      \hat{T}_{h,X}(M) \geq t} + \probB{-\hat{T}_{h,X}(M) \geq t}.
\end{align*}
By similar argument, $-\hat{T}_{h,X}(M)$ 
is also 2-Lipschitz, and we have the same 
concentration inequality for $-\hat{T}_{h,X}(M)$ as well. 
Therefore, we can conclude that
\begin{equation}\label{final}
    \probB{|\hat{T}_{h,X}(M)| \geq t} \leq 2C_0I\left[\exp{\bigg\{\frac{-(n-2)t^2}{64}\bigg\}};\frac{32n(n+2)}{(n-h)(n-2)},\frac{1}{2}\right],
\end{equation}
which
results in a two-sided test.

\subsection{Uncorrelated Testing of Multiple Lags}
\label{sec:multiTest}

The classic Durbin-Watson test does not extend beyond
lag 1.  Meanwhile, the Breusch-Godfrey test tests for 
autocorrelation at all lags from $1$ to $h$.  In contrast,
our proposed test based on the value $\hat{T}_{h,X}(I)$
tests only for autocorrelation specifically
at a lag $h$.  Furthermore,
the statistics $\hat{T}_{h_1,X}(I)$ and $\hat{T}_{h_2,X}(I)$
are uncorrelated for $h_1\ne h_2$.  
\begin{proposition}
\label{prop:uncorr}
  For $
    \hat{T}_{h,X}(M) = \TT{(MX)}A^h(MX),
  $
  let $T_1 = \hat{T}_{h_1,X}(I)$ and $T_2 = \hat{T}_{h_2,X}(I)$.
  Then, $\cov{T_1}{T_2}=0$.
\end{proposition}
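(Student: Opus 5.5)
Under $H_0$ in the setting of Lemma~\ref{lem:nullVar}, $X$ is uniform on $S^{n-1}$; equivalently, $X = MX$ in distribution with $M$ Haar on $SO(n)$. The plan is to compute $\xv(T_1T_2)$ with Lemma~\ref{var} and show it is zero. Since $\xv T_1 = \xv T_2 = 0$ (by Lemma~\ref{lem:rotInnerProduct}, or because $\xv X_iX_{i+h}=0$ under $H_0$), this gives $\cov{T_1}{T_2}=0$.

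First symmetrize. Replace $A^{h_k}$ by $B_k = (A^{h_k} + (A^{h_k})^{\mathrm T})/2$ for $k=1,2$; this does not change either quadratic form. The uniform distribution on the sphere is exchangeable, so Lemma~\ref{var} applies with $A=B_1$ and $B=B_2$. It gives
$$
\xv(T_1T_2) = \mu_{22}\left[\tr{B_1}\tr{B_2} + 2\tr{B_1B_2}\right] + (\mu_4 - 3\mu_{22})\tr{B_1\circ B_2}.
$$
We then evaluate the three traces one at a time.

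\emph{Trace of each factor.} $A^{h}$ has nonzero entries only on the $h$-th superdiagonal, so $\tr{B_k}=0$ whenever $h_k\ge1$.

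\emph{Hadamard term.} $B_k$ is supported on the $\pm h_k$ off-diagonals. If $h_1\ne h_2$ these supports are disjoint, so $B_1\circ B_2=0$.

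\emph{Product term.} $\tr{B_1B_2} = \sum_{i,j}(B_1)_{ij}(B_2)_{ji}$, using symmetry of $B_2$. A summand can be nonzero only if $|i-j|=h_1$ and $|i-j|=h_2$ at once, which is impossible when $h_1\ne h_2$. Hence $\tr{B_1B_2}=0$.

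All three terms vanish, so $\xv(T_1T_2)=0$ and the covariance is zero.

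The main subtlety is deciding which distribution of $X$ the statement refers to. If it means a general stationary series under $H_0$ rather than the spherical null, one should instead expand $\xv\bigl(\sum_s X_sX_{s+h_1}\sum_t X_tX_{t+h_2}\bigr)$ directly. With iid noise, a fourth-moment term survives only when the indices pair up, and for $h_1\ne h_2$ with both lags positive no such pairing exists. I would present the spherical-null computation via Lemma~\ref{var} as the proof. I would add this remark for the iid case and note that the case $h_1=0$ is excluded.
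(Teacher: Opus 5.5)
Your proposal is correct and is essentially the paper's proof: symmetrize both shift powers, apply Lemma~\ref{var}, and check that the trace, product-trace and Hadamard terms all vanish when $h_1\neq h_2$ are positive lags. Your support argument for $\mathrm{tr}(B_1B_2)=0$ is more careful than the paper, which writes this term as $\mathrm{tr}(B^{h_1+h_2})$ even though $B^{h_1}B^{h_2}\neq B^{h_1+h_2}$; the conclusion is unaffected because both traces are zero.
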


\noindent
Hence, we can recover a 
test for autocorrelation at an arbitrary  
collection of lags $h\in H \subset \{1,2,\ldots,m\}$
with $m \ll n$.

If the original data is normally distributed or 
the sample size $n$ is sufficiently large, then 
Fisher's method of combining independent p-values
can be used.  
Moreover, 
under the global white
noise null setting with finite fourth moments, the vector
$(\sqrt n\, \hat\eta_h)_{h \in H}$ converges jointly to a standard
multivariate normal distribution \citep{brockwell2009},
so the per-lag p-values are asymptotically mutually independent,
Otherwise, multiple testing correction
methods can be employed for testing for autocorrelation
at multiple lags.  These are considered in the 
simulated data experiments in 
Section~\ref{sec:sims}.

\section{Simulated Data Analysis}
\label{sec:sims}

\subsection{Simulation Experiment Overview}

In the following simulation studies, we 
compare the power of our random rotation 
test both before (Section~\ref{sec:subGauss})
and after the proposed beta correction 
(Section~\ref{sec:betaCorrect})
against the classic Durbin-Watson test
for lag 1 autocorrelation
and the omnibus Breusch-Godfrey and 
Ljung-Box tests for autocorrelations
at different lags.
Of note, our test tests for autocorrelation at
a given lag $h$ whereas Breusch-Godfrey and 
Ljung-Box test for
autocorrelation at all lags $1,\ldots,h$.  
The Durbin-Watson test is performed via 
\texttt{dwtest} and the Breusch-Godfrey 
test via \texttt{bgtest} both in the 
\texttt{lmtest} R package \citep{LMTEST}.
The Ljung-Box test was performed via the 
\texttt{Box.test} function in the base
\texttt{stats} R package.

To summarize the following simulation experiments
for autoregressive processes, 
the Durbin-Watson test 
demonstrates superior power at lag 1 but is still 
close to our beta-corrected rotation test.
Beyond lag 1,
our single lag beta-corrected rotation test achieves 
stronger statistical power than the 
multi-lag Breusch-Godfrey and
Ljung-Box tests.  The uncorrected sub-Gaussian
bound from Section~\ref{sec:subGauss} 
is the weakest of the tests considered, which is
to be expected as the concentration inequalities
implemented cover very general settings.  This 
also emphasizes the utility of our beta-correction
technique.
When multiple significant lags exist in the data,
our rotation p-values combined with Fisher's method
as detailed in Section~\ref{sec:multiTest} achieves
the highest power while still maintaining the correct
empirical test size.  
For moving average processes, our single lag rotation
tests have worse power than the classic tests.  However,
combining all of our p-values using Fisher's method 
results in a most powerful test coinciding with 
Breusch-Godfey and Ljung-Box in our empirical
study.
When compared to brute force randomization tests 
via Monte Carlo simulation from one of the three
symmetry groups considered, our concentration-based
rotation test only loses a little power at the 
benefit of zero computational burden. 

In the following simulation experiments, we consider
the behaviour of these statistical tests under heavy
tailed noise, which breaks the finite 4th moment
assumption.  However, this still provides an illustrative 
empirical example of the convergence behaviour of
the sample covariance and correlation 
characterized in \cite{DAVISRESNICK1986}.

\subsection{Null hypothesis with heavy tails}

\begin{table}
    \centering
    \begin{tabular}{r|rrrr}
    \hline\hline
         & \multicolumn{2}{c}{$\distTdist{4}$} &
           \multicolumn{2}{c}{$\distNormal{0}{1}$} \\
     lag & KS test & AD test & KS test & AD test\\ \hline
       1 &   0.540 &   0.554 & 0.343   & 0.069 \\
       2 &   0.086 &   0.207 & 0.181   & 0.109 \\
       3 &   0.086 &   0.126 & 0.690   & 0.559 \\
       4 &   0.751 &   0.639 & 0.440   & 0.573 \\
       5 &   0.023 &   0.066 & 0.304   & 0.100 \\
       \hline\hline
    \end{tabular}
    \caption{
      The p-values for the Kolmogorov-Smirnov (KS) test
      and the Anderson-Darling (AD) test for goodness-of-fit
      for our 5000 rotation test p-values (Theorem \ref{betacon})
      compared with the $\distUnifInt{0}{1}$ distribution.
    }
    \label{tab:nullSize}
\end{table}

\begin{figure}
    \centering
    \includegraphics[width=0.45\linewidth]{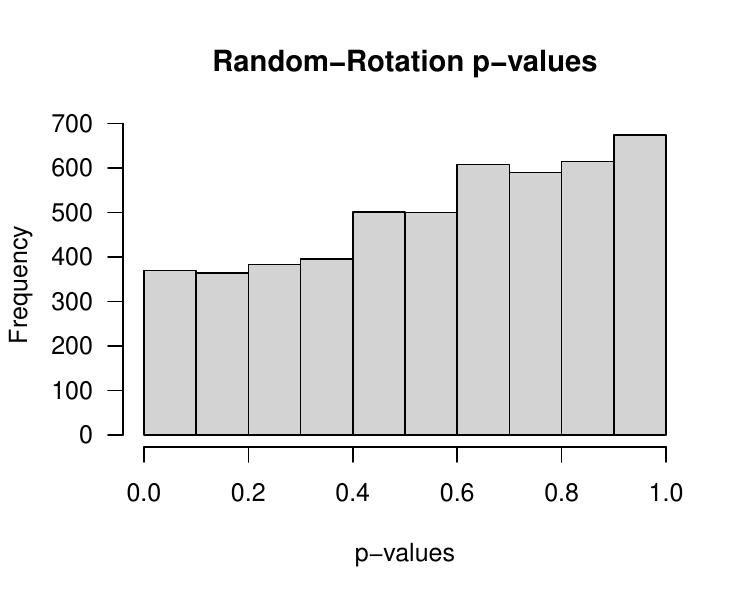}
    \includegraphics[width=0.45\linewidth]{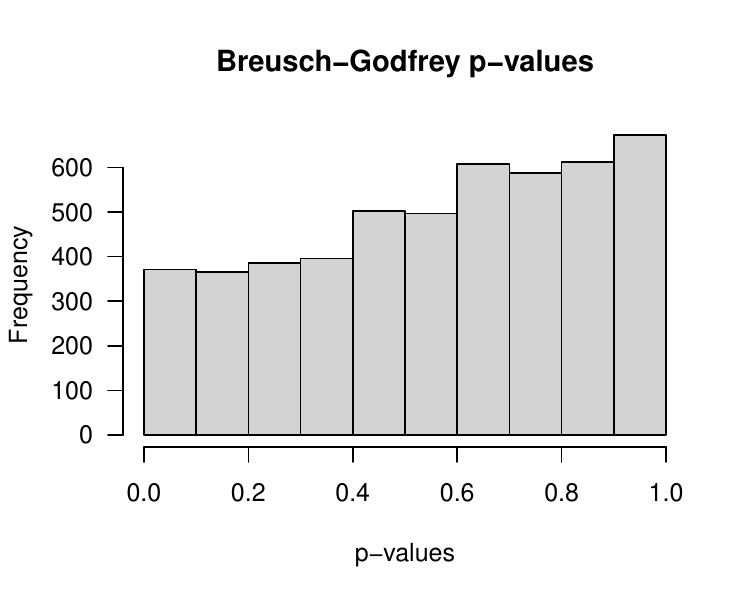}
    \caption{
      Histograms of 5000 null p-values produced by 
      the random-rotation test (left) and the Breusch-Godfrey
      test (right) for the $\distTdist{2}$ distribution.
      Both are slightly conservative in the presence of 
      very heavy tailed noise.
    }
    \label{fig:nullSize}
\end{figure}

Before performing a power analysis of our method
compared to the classic tests for autocorrelation,
we test the null distribution of the p-values
generated by our random rotation test after
Beta correction; i.e. the p-values computed
in Theorem~\ref{betacon} by Equation~\ref{eqn:betaPV}.
That is, we generate 5000 replications of 
time series data with $n=1000$ from the 
null model
$
  X_t = \veps_t
$
for $\veps_t$ iid with $\distTdist{4}$ and the
normal distribution. 
Table~\ref{tab:nullSize} details the results of running 
both the Kolmogorov-Smirnov and the 
Anderson-Darling tests for goodness-of-fit 
on the
5000 p-values generated in our simulations 
against the $\distUnifInt{0}{1}$ distribution.
This indicates that our random-rotation test
achieves the correct empirical size under the 
null hypothesis both for Gaussian noise and
for moderately heavy tailed noise.

When $\veps_t$ has such heavy tails that the 
variance is infinite, i.e. $\distTdist{2}$, 
we get conservative p-values from our random-rotation
test.  However, these conservative p-values 
closely coincide with those returned by the 
Breusch-Godfrey test, which also computes conservative
p-values in the presence of very heavy tailed noise.
Histograms of these p-values for $\distTdist{2}$
noise are displayed in Figure~\ref{fig:nullSize}.

\subsection{AR(1) processes with heavy tails}

\begin{figure}
    \centering
    \includegraphics[width=0.48\linewidth]{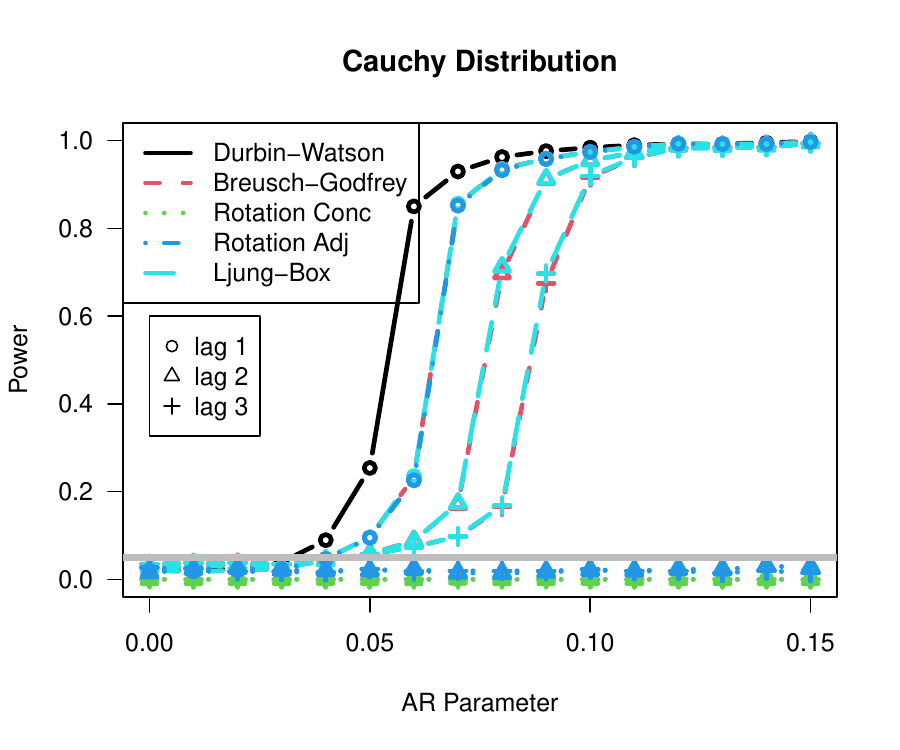}
    \includegraphics[width=0.48\linewidth]{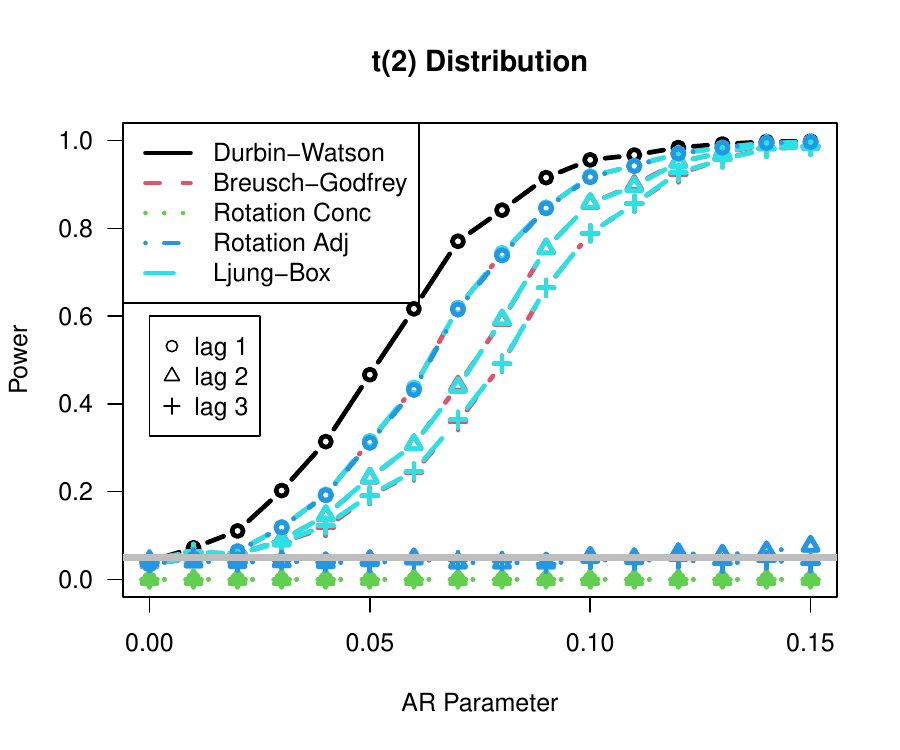}
    \includegraphics[width=0.48\linewidth]{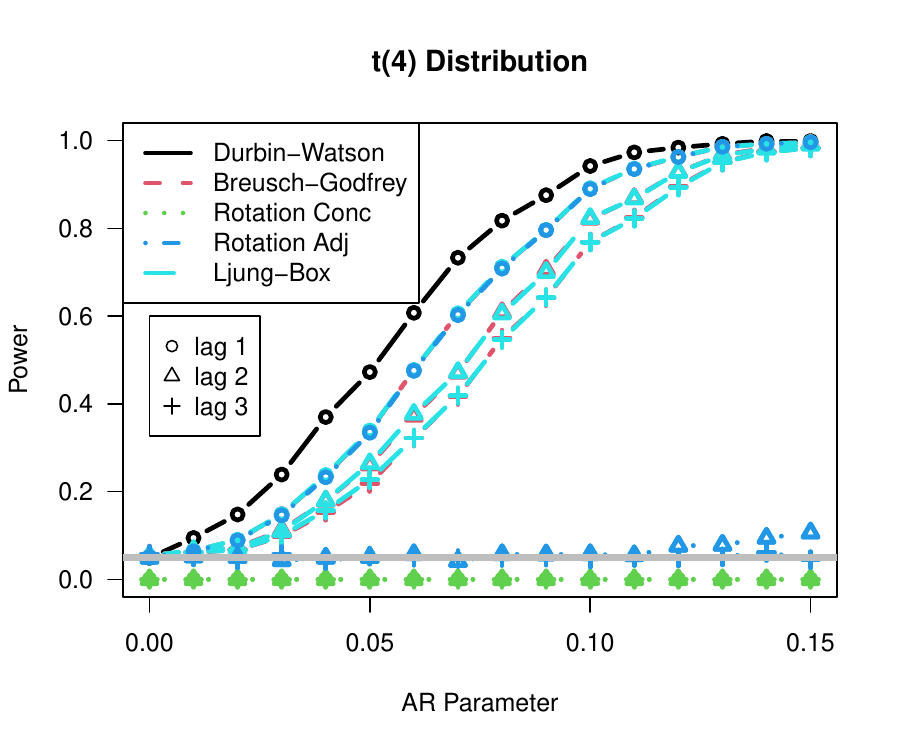}
    \includegraphics[width=0.48\linewidth]{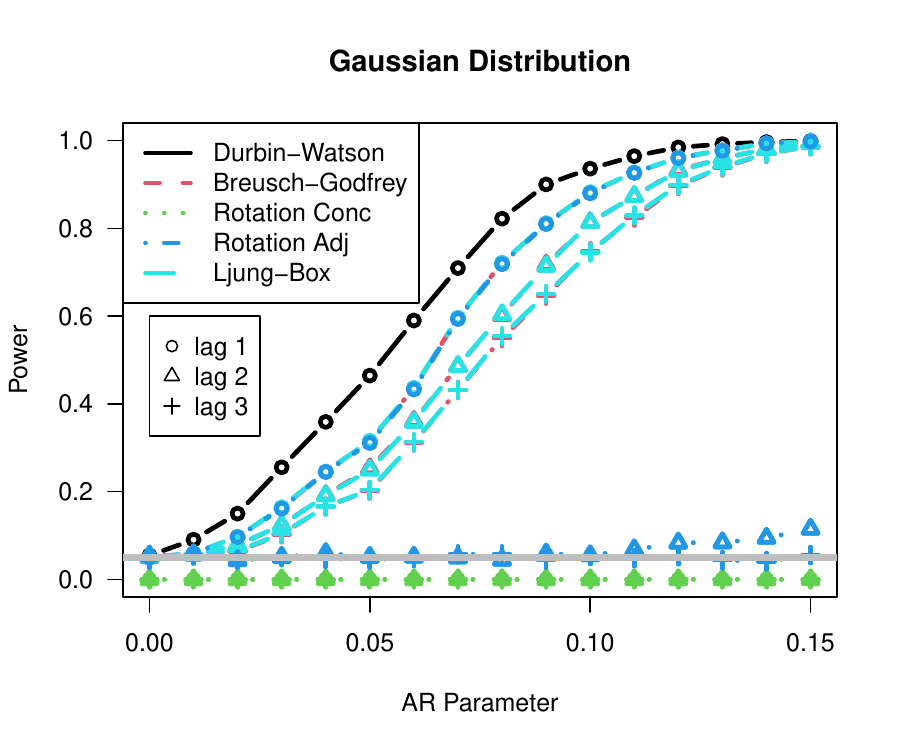}
    \caption{A comparison of the empirical power of 
    Durbin-Watson, Breusch-Godfrey, and our rotation-based test
    on the AR(1) process with coefficient $\phi\in[0,0.15]$
    and empirical size $\alpha=0.05$}
    \label{fig:ar1Power}
\end{figure}

For the first set of power experiments, we simulate
2000 time series data sets of length $n=1000$
from an AR(1) model
$
  X_t = \phi X_{t-1} + \veps_t
$
for each $\phi = 0,0.01,\ldots,0.15$.
For the noise process $\veps_t$, we consider
four different distributions:
the t-distribution with degrees of freedom
$1,2,4$ and the normal distribution.
The empirical power curves for these experiments
are displayed in Figure~\ref{fig:ar1Power}.

In all distributional settings including
the Cauchy distribution as an extreme case, 
the Durbin-Watson test achieves the strongest 
statistical power to identify the presence
of autocorrelation at lag 1.  Our rotation
test at lag 1 achieves identical power to 
Breusch-Godfrey and Ljung-Box whereas at higher lags, our
method correctly does not detect any significant
deviation from the null.
All of the Breusch-Godfrey and Ljung-Box tests eventually
return significant p-values but these tests
yield less statistical power than our method 
and Durbin-Watson as they test for 
significance at more lags simultaneously.
Lastly, without applying our Beta correction
from Section~\ref{sec:betaCorrect}, our concentration
of measure approach detailed in Theorem~\ref{con1}
does not yield any statistical power.

\subsection{AR(h) processes with heavy tails}

\begin{figure}
    \centering
    \includegraphics[width=0.48\linewidth]{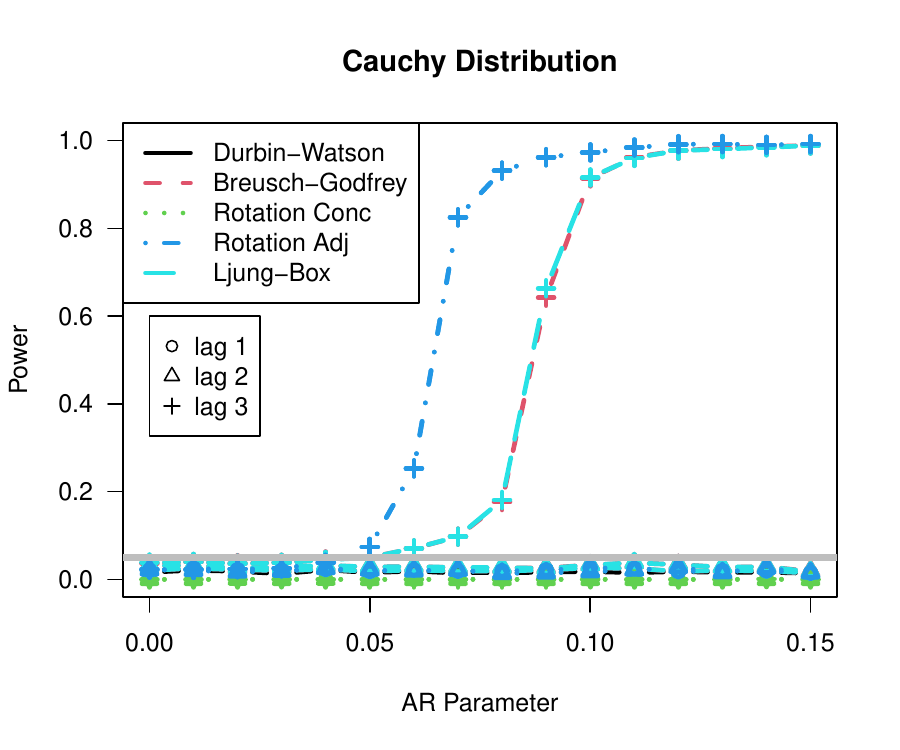}
    \includegraphics[width=0.48\linewidth]{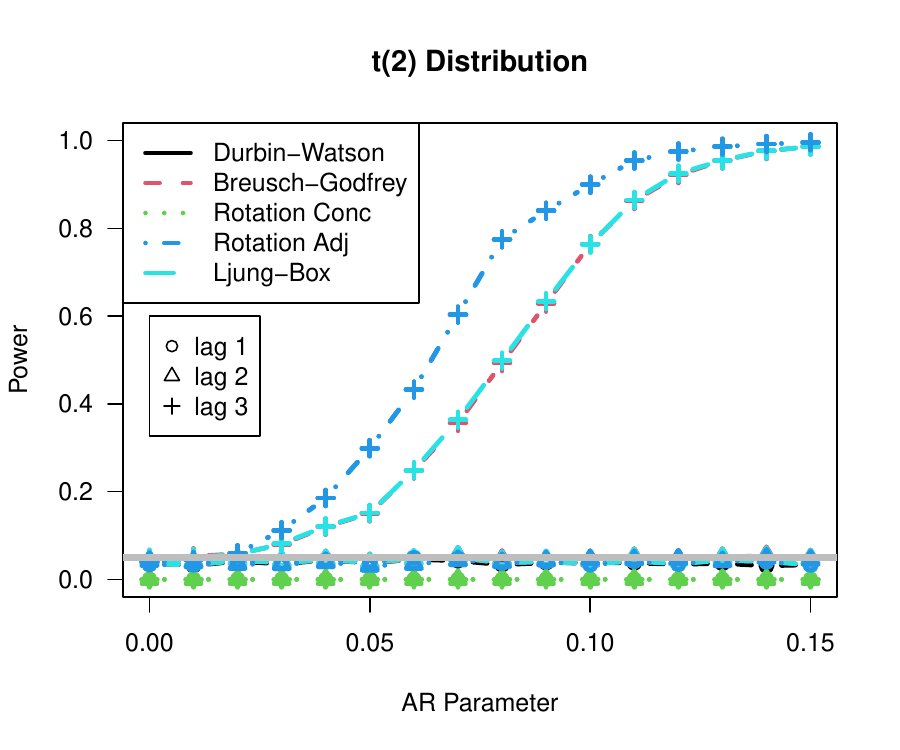}
    \includegraphics[width=0.48\linewidth]{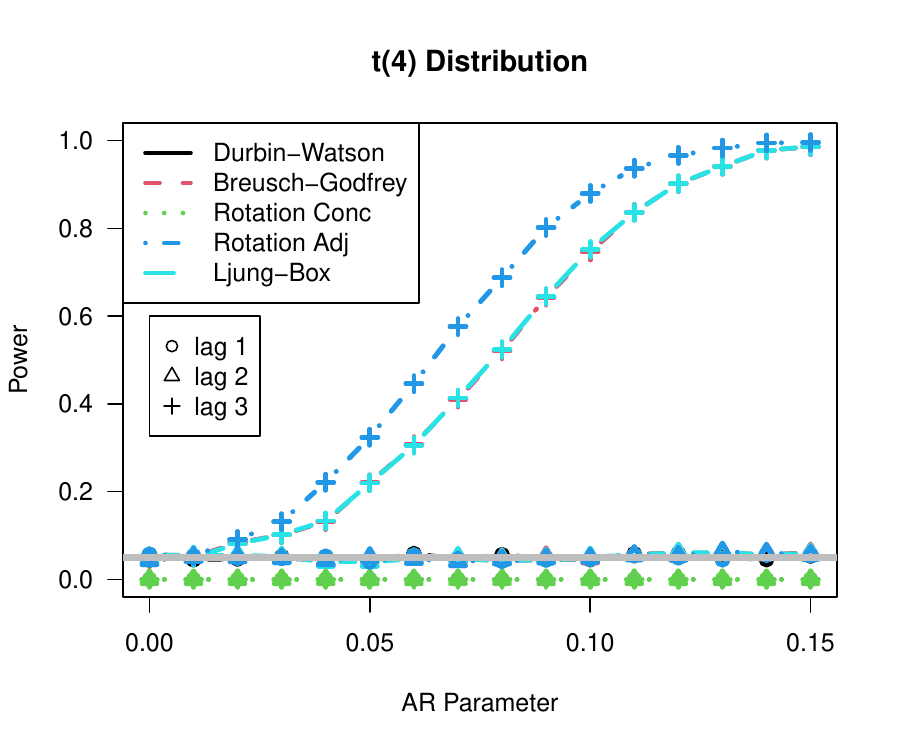}
    \includegraphics[width=0.48\linewidth]{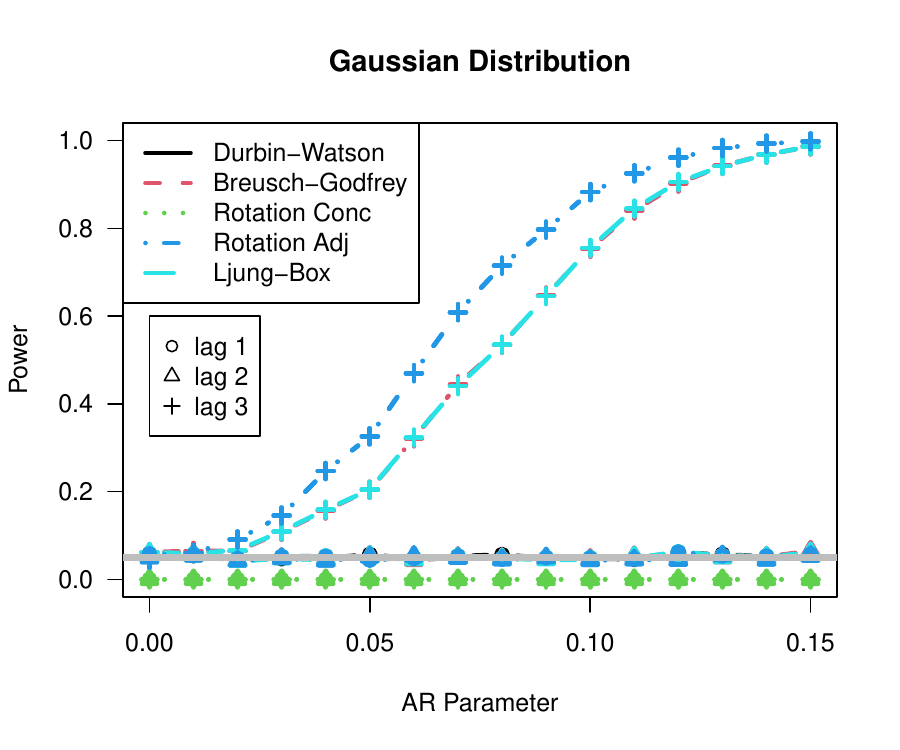}
    \caption{A comparison of the empirical power of 
    Durbin-Watson, Breusch-Godfrey, and our rotation-based test
    on the AR(3) process with coefficient $\phi\in[0,0.15]$
    and empirical size $\alpha=0.05$. }
    \label{fig:arhPower}
\end{figure}

For our second set of experiments for statistical 
power, the above
setting is repeated except the lag for the 
autocorrelation is increased from 1 to $h$
resulting in
an AR(h) model
$
  X_t = \phi X_{t-h} + \veps_t.
$
In this setting, the Durbin-Watson test 
(correctly) does not detect any significant
autocorrelation at lag 1.  Our method
run at lag $h$ and the Breusch-Godfrey and Ljung-Box tests
run at lags $\ge h$ all identify the presence
of autocorrelation.  However, our rotation
test post-Beta correction 
achieves the strongest statistical power 
while still maintaining the correct empirical 
test size.
The results of these experiments are all displayed in 
Figure~\ref{fig:arhPower} for $h=3$.

As noted above, both our method and the Breusch-Godfrey
test are a bit conservative when $\veps_t$
has a heavy tailed distribution.  They both 
achieve a similar empirical size, but our 
rotation test achieves higher statistical power
as $\phi$ increases in value.

\subsection{Multi-lag testing}

\begin{figure}
    \centering
    \includegraphics[width=0.48\linewidth]{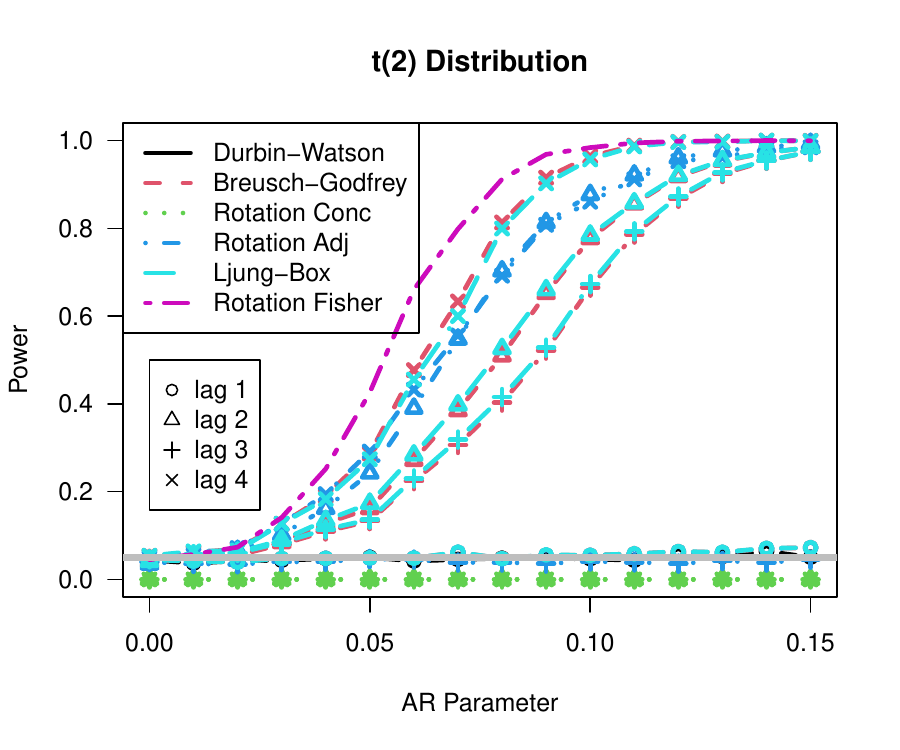}
    \includegraphics[width=0.48\linewidth]{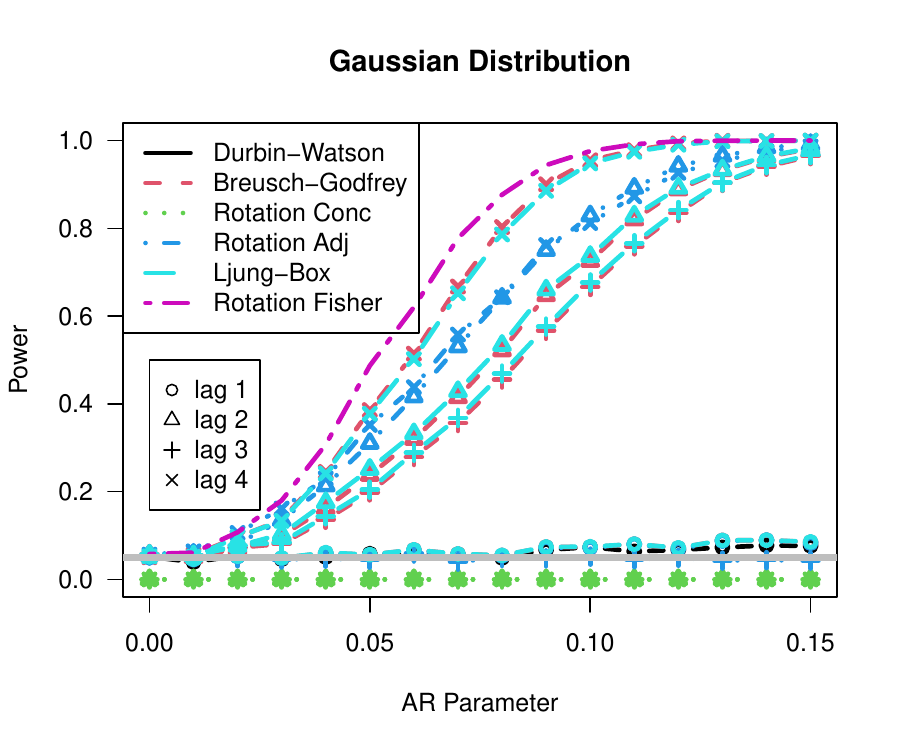}
    \caption{
      A comparison of the empirical power for multiple
      significant lags at $h=2,4$
      and empirical size $\alpha=0.05$.  These plots include
      Fisher's method for combining p-values.
    }
    \label{fig:multLagPower}
\end{figure}

For our third power analysis simulation, we consider 
the following autoregressive process,
$
  X_t = \phi X_{t-2} - \phi X_{t-4} + \veps_t,
$
with $\phi\in[0,0.15]$ and with both normal and 
$\distTdist{2}$ noise processes. 
Once again, $n=1000$, and 2000 such time series are
simulated.
In this scenario, 
our random rotation tests for the null hypotheses
$H_0: \eta_2=0$ and $H_0:\eta_4=0$ yield more statistical
power than 
the Breusch-Godfrey and Ljung-Box tests for null hypotheses
$H_0: \eta_1=\eta_2=0$ and $H_0:\eta_1=\eta_2=\eta_3=0$.
However, our random-rotation test is less powerful than
the Breusch-Godfrey and Ljung-Box tests for null hypotheses
$H_0: \eta_h=0$ for $h=1,\ldots,4$ and 
$H_0: \eta_h=0$ for $h=1,\ldots,5$.

Unlike the Breusch-Godfrey and Ljung-Box tests, which test for significance
at all lags $\le h$, our random-rotation p-values can be 
selectively combined via Fisher's method to test more general
multi-lag hypotheses.  That is, the value
$
  -2 \sum_{h \in H} \log( \text{pv}_h )
$
will test the null hypothesis $H_0: \eta_h = 0$ for $h\in H$ 
with null distribution $\distChiSquared{2\abs{H}}$.
In the setting of these simulations, we test the null
hypothesis
$H_0: \eta_2=\eta_4=0$, which yields superior power to all 
other tests considered on this simulated data.
All of the power curves for these methods are displayed
in Figure~\ref{fig:multLagPower}.

\subsection{Moving Average Processes}

\begin{figure}
    \centering
    \includegraphics[width=0.48\linewidth]{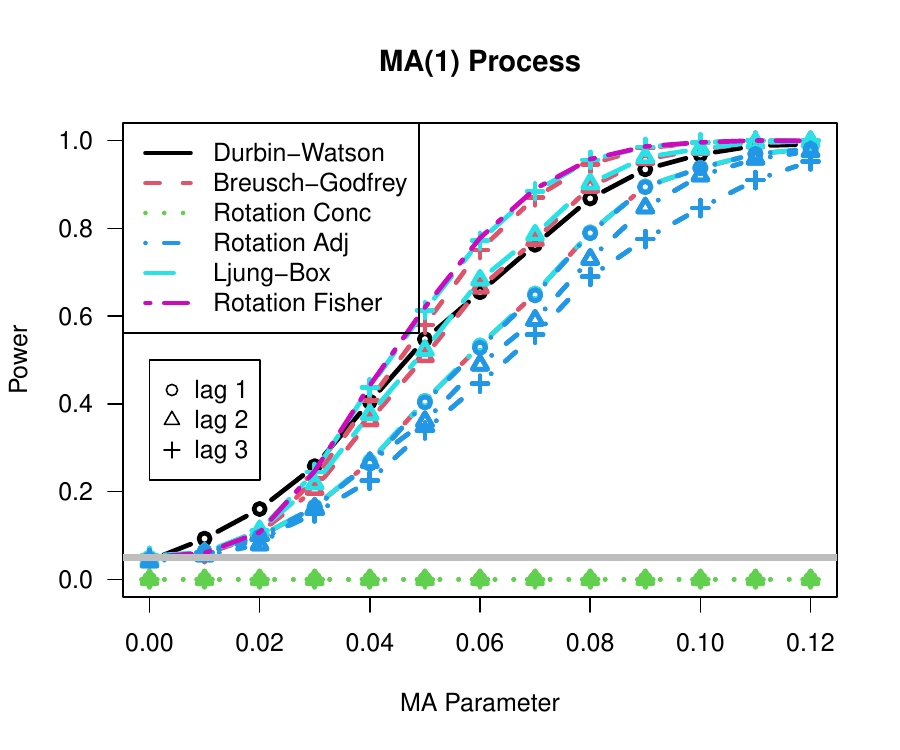}
    \includegraphics[width=0.48\linewidth]{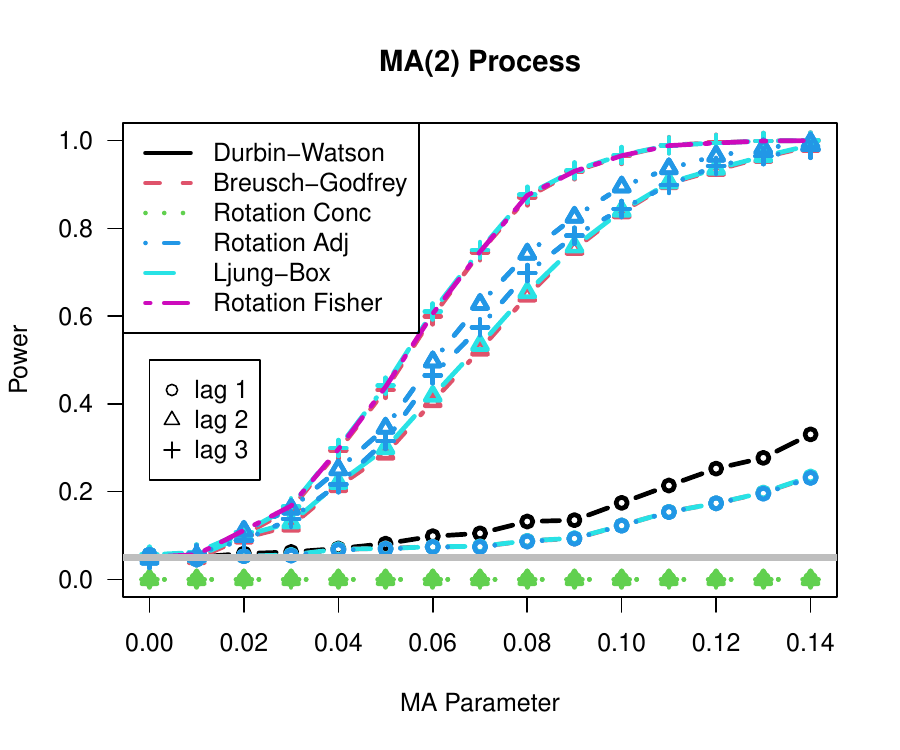}
    \caption{
      A comparison of the empirical power for 
      the moving average process under normal errors
      and empirical size $\alpha=0.05$.  
    }
    \label{fig:maPower}
\end{figure}

A finite moving average process is effectively an infinite
autoregressive process.
We consider the MA(1) process 
$X_t = \veps_t + \theta \veps_{t-1}$
and the MA(2) process
$X_t = \veps_t + \theta \veps_{t-2}$.
Figure~\ref{fig:maPower}
displays a power analysis for these two 
models.
For the MA(1) process, our single lag rotation
tests have lower power than the omnibus 
Breusch-Godfrey and Ljung-Box tests performed at lag 3.
However, we achieve the same statistical power once
we combine our three single lag p-values via 
Fisher's method.
In the case of the MA(2) process, 
similar behaviour is seen.

\subsection{Concentration vs Computation}

\begin{figure}
    \centering
    \includegraphics[width=0.48\linewidth]{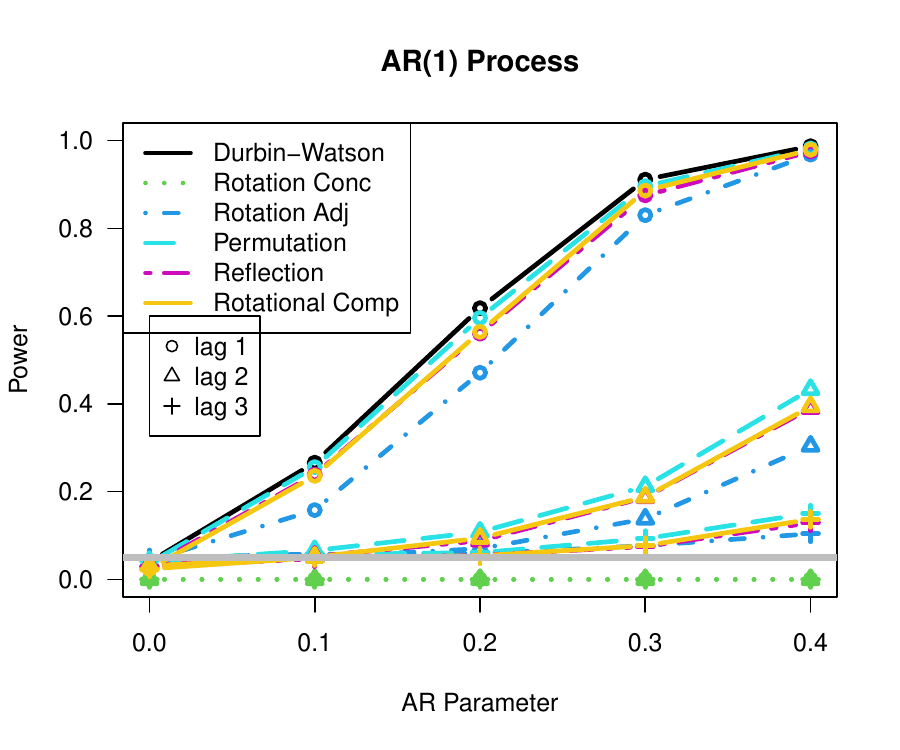}
    \includegraphics[width=0.48\linewidth]{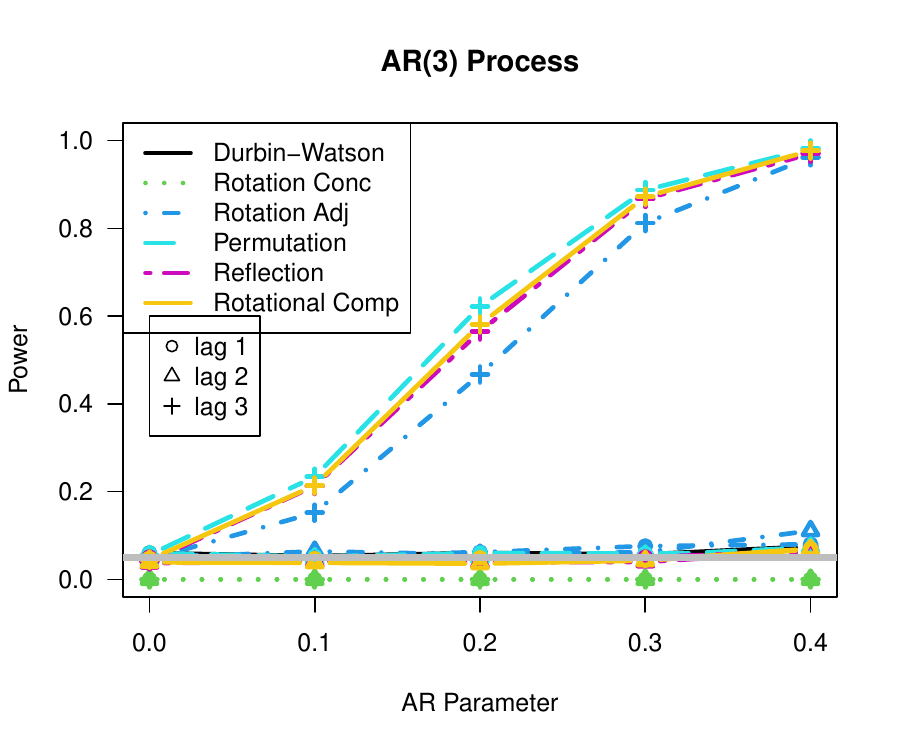}
    \caption{
      A comparison of the empirical power for 
      computation-based randomization tests 
      on the autoregressive 
      process under normal errors
      and empirical size $\alpha=0.05$.  
    }
    \label{fig:compPower}
\end{figure}

In this final simulation, we consider how the adjusted concentration
bound on the p-value from Theorem~\ref{betacon} 
performs against pure computation based randomization tests
that perform Monte Carlo draws from 
$SO(n)$, $\mathbb{S}(n)$, or $\mathbb{B}(n)$.
As these computational tests are significantly slower to perform
than Durbin-Watson, Breusch-Godfrey, Ljung-Box, and our concentration bound, 
in this simulation, we consider samples of size $n=100$ over 
$3000$ repetitions where each randomization test p-value is
computed via simulating $100$ random transformations from each
group of interest.  The two models considered in this 
simulation experiment are the same AR(1) and AR(3) models
from the previous section.  Though, due to the smaller 
sample size considered, we have to increase the value of
the autoregressive parameter much more to detect 
significant autocorrelation with all methods considered.

Figure~\ref{fig:compPower} shows the power analysis of
our analytic method compared to brute force simulation
of random permutations, reflections, and rotation.
We do not include the Breusch-Godfrey or Ljung-Box 
tests here as they
are identical to the previous simulations in the above
sections, and this exclusion makes the plots in 
Figure~\ref{fig:compPower} more intelligible. 
For the AR(1) process, we once again see 
Durbin-Watson achieving the highest statistical power, 
which approximately ties with the computation-based
permutation test.  The computation-based reflection 
and rotation tests have slightly less statistical power.
And finally our concentration-based test performs well
but with another slight drop in power.
We see a similar pattern emerging for the AR(3) model.
The three computation-based randomization tests 
achieve similar statistical power with the permutation
test slightly above the other two tests.
Our concentration-based test only loses a little 
statistical power and the benefit of being 
computationally trivial to compute.

\section{Solar Cycle Data Analysis}
\label{sec:solarData}

To demonstrate the utility of our random rotation 
methodology, we consider the time series of monthly
solar intensity between January 1900 and February 2025 
as provided by the 
National Oceanic and Atmospheric Administration's
Space Weather Prediction Center
(\url{https://www.swpc.noaa.gov/products/solar-cycle-progression}).
The $\log_2$ number of sunspots is plotted in
Figure~\ref{fig:solarIntensity} along with the 
Nadaraya-Watson kernel regression, which was 
fit in R via \texttt{ksmooth} from the base 
\texttt{stats} package with a box kernel and 
bandwidths of 6, 12, and 18 months.  Note that selection
of the smoothing parameter drastically affects which 
lags are deemed statistically significant.
The total size of this time series is $n=1502$ months,
which would make computation-based randomization 
testing very slow to perform.  
The residuals of this
kernel regression model are also plotted in 
Figure~\ref{fig:solarIntensity}.

\begin{figure}
    \centering
    \includegraphics[width=0.85\linewidth]{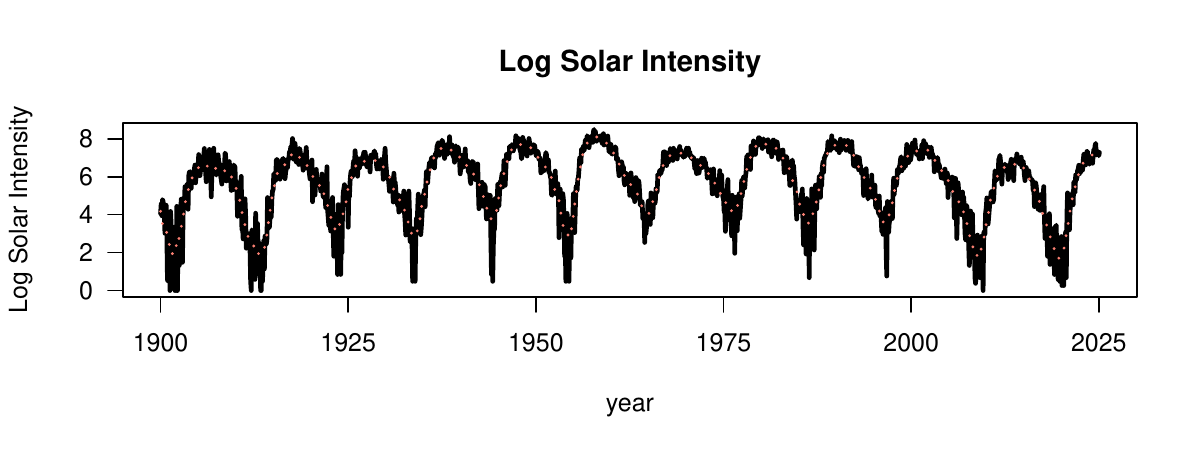}
    \includegraphics[width=0.85\linewidth]{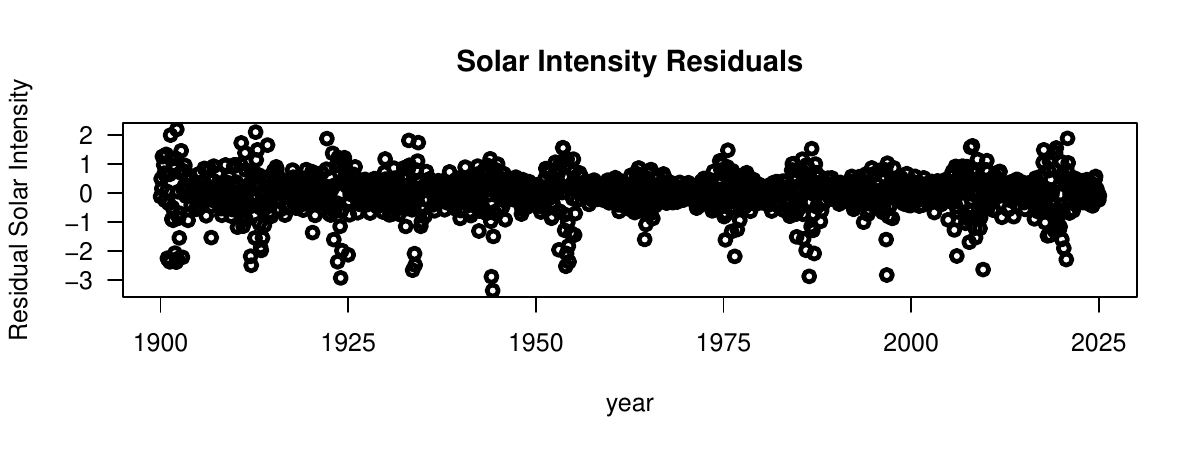}
    \caption{Log solar intensity time series with 
    Nadaraya-Watson kernel regression (top) and
    residuals of the kernel regression (bottom)}
    based on a bandwidth of 18 months.
    \label{fig:solarIntensity}
\end{figure}

To identify autocorrelation in the residual time series
from Figure~\ref{fig:solarIntensity}, we can use both 
our random rotation test (Theorem~\ref{betacon}) and
the Breusch-Godfrey test.  The result is the left 
plot in Figure~\ref{fig:solarAutoCor}.
Unsurprisingly, the Breusch-Godfrey test returns 
highly significant p-values for all lags from 1 to 24,
which merely indicates the strong presence of autocorrelation
in this time series at lag 1 and possibly or not at 
higher lags.  In contrast, our random rotation 
test provides a more nuanced look at the autocorrelation
in this time series.  In particular, we see significant
autocorrelation at lags $h = 1, 6, 7, 8, 9$ at the 0.1\% level.

Going further, we can remove the autocorrelation at lag 1
by fitting an AR(1) model to this residual time series 
via the \texttt{arima} function in R's \texttt{stats} package
and consider the residuals of that AR(1) model.
Running both tests for autocorrelation on
this new residual time series results in the right
plot in Figure~\ref{fig:solarAutoCor}.  The Breusch-Godfrey
test returns significant p-values for all lags 6 and larger at
the 0.1\% level.
In contrast our random rotation test identifies just 
$h = 6$ and $h=9$ as significant lags at the 0.1\% level.

More comprehensively, we also performed our random rotation test for all
lags from 1 to 160 months on the residuals of the AR(1) model
and corrected for multiple testing using classic Benjamini-Hochberg
method.
Table~\ref{tab:topSolarPVs} charts the results.  
In particular, 
the lag of 135 months or 11.25 years has a false discovery rate
of 0.012, 0.013, and 0.06 for the three bandwidths, respectively.
This directly coincides
with the approximate 11 year solar cycles observed by 
astrophysicists.

\begin{table}
    \centering
    \begin{tabular}{lllllllll}
       \hline\hline
       \multicolumn{9}{c}{\bf Bandwidth = 6 months}\\
       Lag  &  \multicolumn{1}{c}{2}  & \multicolumn{1}{c}{3} & \multicolumn{1}{c}{5}  & \multicolumn{1}{c}{135} &  \multicolumn{1}{c}{27} & \multicolumn{1}{c}{133} & \multicolumn{1}{c}{152} & \multicolumn{1}{c}{149} \\
       p-value      &  $6.7\times10^{-39}$ & $9.8\times10^{-29}$ & $0.00022$ 
         & $0.00029$  & $0.0014$ & $0.0031$ & $0.0088$ & $0.018$ \\
         FDR        &  $1.1\times10^{-36}$ & $7.9\times10^{-27}$ & $0.012$ 
         & $0.012$  
         & $0.044$ & $0.084$ & $0.201$ & $0.362$ \\
       \hline
       \multicolumn{9}{c}{\bf Bandwidth = 12 months}\\
       Lag  &  \multicolumn{1}{c}{6}  & \multicolumn{1}{c}{4} & \multicolumn{1}{c}{2}  & \multicolumn{1}{c}{3} &  \multicolumn{1}{c}{5} & \multicolumn{1}{c}{135} & \multicolumn{1}{c}{116} & \multicolumn{1}{c}{103} \\
       p-value      &  $1.4\times10^{-12}$ & $3.7\times10^{-6}$ & $0.000014$ 
         & $0.000063$  & $0.00012$ & $0.00049$ & $0.0045$ & $0.006$ \\
         FDR        &  $2.3\times10^{-10}$ & $0.00030$ & $0.00072$ & $0.0025$  
         & $0.0040$ & $0.0129$ & $0.102$ & $0.120$ \\
       \hline
       \multicolumn{9}{c}{\bf Bandwidth = 18 months}\\
       Lag  &  \multicolumn{1}{c}{6}  & \multicolumn{1}{c}{9} & \multicolumn{1}{c}{135}  & \multicolumn{1}{c}{97} &  \multicolumn{1}{c}{8} & \multicolumn{1}{c}{126} & \multicolumn{1}{c}{116} & \multicolumn{1}{c}{27} \\
       p-value      &  $0.000003$ & $0.0003$ & $0.0011$ & $0.0059$  
         & $0.0059$ & $0.0067$ & $0.0068$ & $0.0095$ \\
         FDR        & 0.00054 & 0.0255 & 0.0595 & 0.1560 & 0.1560 & 0.1560 & 0.1560 & 0.192 \\
       \hline\hline
    \end{tabular}
    \caption{
      The eight smallest p-values from 
      the random rotation test on the residuals of the 
      solar cycle model and their false discovery
      rates corrected via the Benjamini-Hochberg 
      procedure to account for multiple testing 
      are charted for bandwidths of 6, 12, and 18 months.
    }
    \label{tab:topSolarPVs}
\end{table}

\begin{figure}
    \centering
    \includegraphics[width=0.45\linewidth]{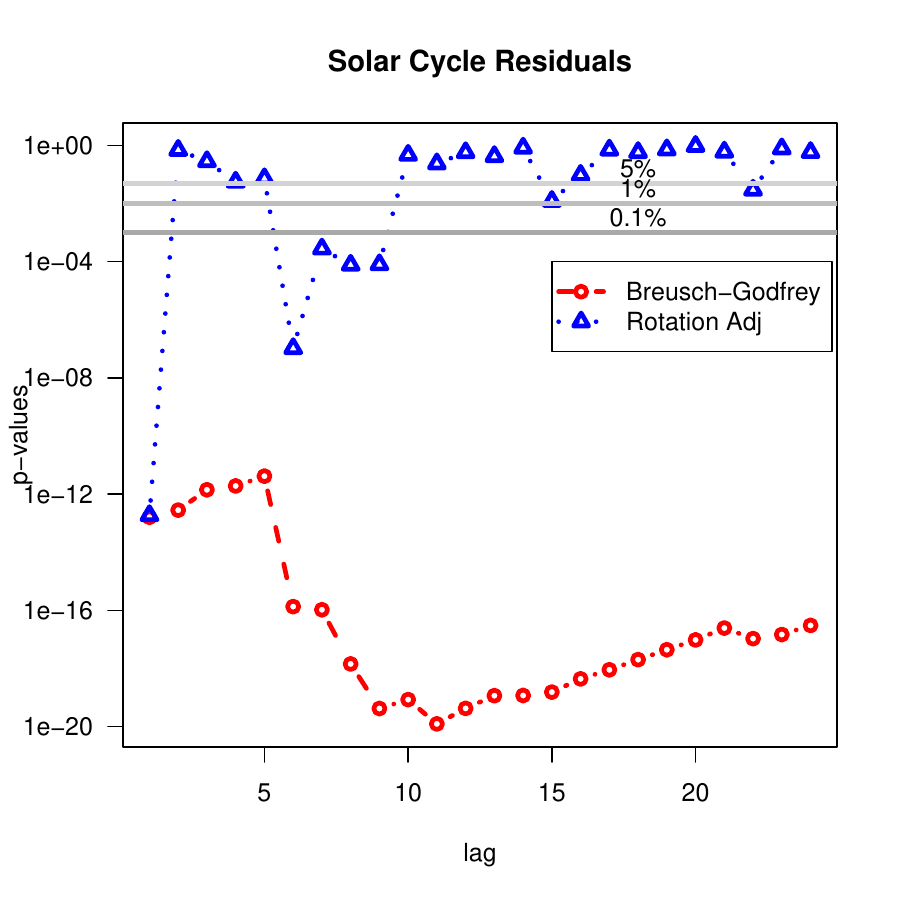}
    \includegraphics[width=0.45\linewidth]{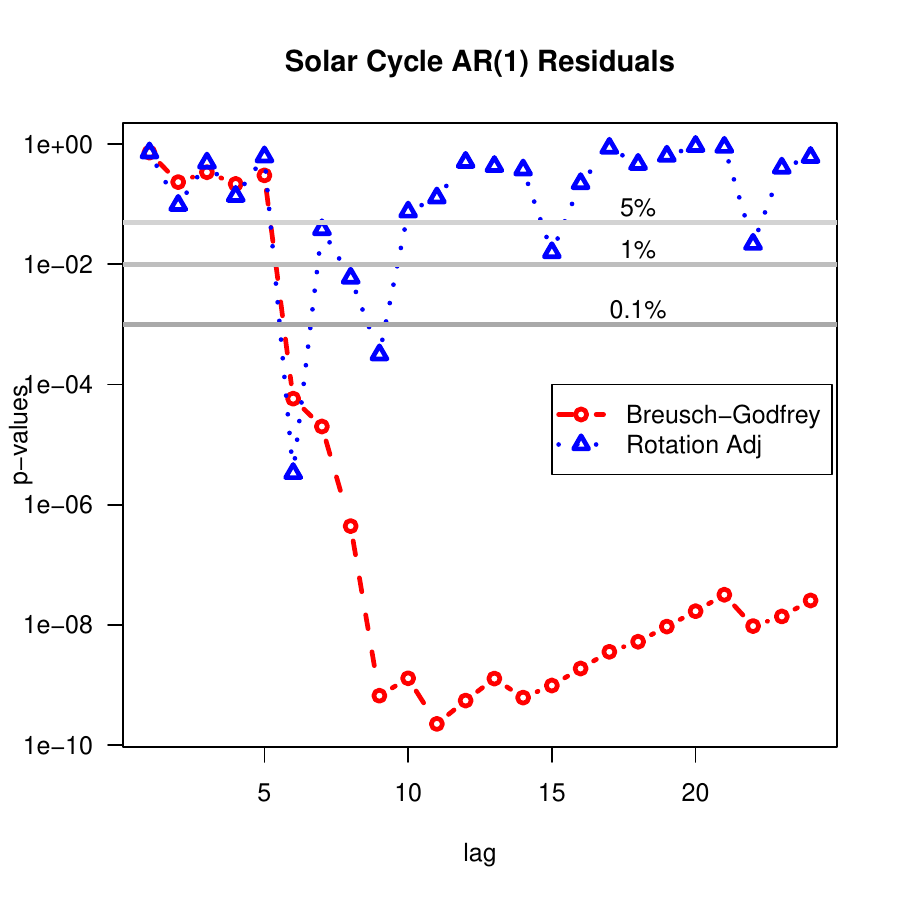}
    \caption{
      A progression of p-values for testing for 
      autocorrelation in the residuals from
      Figure~\ref{fig:solarIntensity} (left) and
      the p-values after removing the AR(1) 
      component (right) both at a bandwidth of 18 months.
    }
    \label{fig:solarAutoCor}
\end{figure}

\section{Summary of Results}

Through the extensive empirical power analysis
carried out in Section~\ref{sec:sims}, we come to 
a couple conclusions.  When there is significant
autocorrelation at lag 1, the classic Durbin-Watson
test has superior statistical power to all other 
methods.  However, when there is significant autocorrelation
at lag $h>1$, our proposed random-rotation test 
(Theorem~\ref{betacon}) yields the strongest statistical
power among the computation-free methods.  
The computation-based randomization tests do achieve
a slight gain in statistical power over our method, but
at the expense of a potentially high computational
burden depending on the size of the dataset under 
consideration.
When there is significant autocorrelation at 
multiple lags,
our random-rotation test run at a single lag offers lower
statistical power than Breusch-Godfrey and Ljung-Box considering all lags.
However, our random-rotation p-values can be combined
using Fisher's method to gain even more statistical power
in the end allowing it to surpass Breusch-Godfrey and Ljung-Box.

For the solar data from Section~\ref{sec:solarData}, 
the Breusch-Godfrey test will always return a significant
p-value as it only can test for significance at  
all lags less than or equal to $h$.  Computation-based
randomization tests can be performed, but are 
slow to run on such a large dataset.  
As our 
random rotation test can test for significant autocorrelation
precisely at lag $h$, it allows us to identify the 
presence of autocorrelation at large lags
with effectively no computation time. 

Beyond autocorrelation in time series data, 
our random rotation approach to hypothesis testing
can be similarly adapted to other statistical tests
in the form of quadratic forms.  This includes future
investigations into many statistics such as those 
for Moran's I for spatial correlation, 
Cramer's test for association in contingency tables, 
the Rayleigh test for spherical uniformity 
in directional data, and many others.

%%%%%%%%%%%%%%%%%%%%%%%%%%%%%%%%%%%%%%%%%%%%%%
%% Single Appendix:                         %%
%%%%%%%%%%%%%%%%%%%%%%%%%%%%%%%%%%%%%%%%%%%%%%
\begin{appendix}

\section{Technical Results and Proofs}
\subsection{Proofs from Section~\ref{sec:quadForm}}
\label{app:proofs1}

\begin{proof}[Proof of Theorem~\ref{thm:lipGen}]
    For a real valued matrix $A$,
    let $B = ({1}/{2})(A + (A\TT{)})$, 
    be the symmetric version of $A$. Then for all 
    $x \in \real^n$, we have $\TT{x}Ax = \TT{x}Bx$.

    Let $g : S^{n-1} \to \real$ be defined by, 
    $g(x) = \TT{x}Ax = \TT{x}Bx$. 
    It is clear that $g$ is continuously differentiable. 
    Since $S^{n-1}$ is compact and any continuously 
    differentiable function over a compact domain is 
    Lipschitz and since the Lipschitz constant is equal 
    to the maximum magnitude of the derivative, 
    we can conclude that $g$ is also Lipschitz. 
    Furthermore, 
    $\nabla g(x) = 2Bx$. 
    Therefore, the Lipschitz constant of $g$ is,
    $
        \max_{x \in S^{n-1}}{\lVert 2Bx \rVert} = 
        2\max_{x \in S^{n-1}}{\lVert Bx \rVert} = 2\rho(B)
    $
    where $\rho(B)$ is the spectral radius of $B$. 
    We can conclude that $g$ is $2\rho(B)$-Lipschitz. 

    Next, let $P, Q \in \mathrm{SO}(n)$ and $x \in S^{n-1}$ 
    be fixed. Then, $F(P) = g(Px)$ and 
    $F(Q) = g(Qx)$. Therefore,
    \begin{align*}
|F(P) - F(Q)| &= |g(Px) - g(Qx)|
\leq 2\rho(B)\lVert Px - Qx \rVert_2\\
%&= 2\lVert (P - Q)x \rVert_2\\
%&\leq 2\lVert (P-Q)\rVert_{op}\lVert x \rVert_2\\
&\leq 2\rho(B)\lVert (P-Q)\rVert_{op}
%&= 2 \sqrt{\lambda_{max}[(P-Q)^t(P-Q)]}\\
%&\leq 2 \sqrt{\sum_{i=1}^n\lambda_{i}[(P-Q)^t(P-Q)]}\\
%&= 2 \sqrt{\text{trace}[(P-Q)^t(P-Q)]}\\
\le 2\rho(B) \lVert (P-Q)\rVert_{HS}
\end{align*}
where $\lVert . \rVert_{HS}$ is the Hilbert-Schmidt norm. 
Therefore, we can conclude that the function 
$F$ is also $2\rho(B)$-Lipschitz with respect to the 
Hilbert-Schmidt metric.
\end{proof}

\begin{proof}[Proof of Theorem \ref{thm:reflectConc}]
  Without loss of generality, let $A$ be symmetric.
  We first note that 
  $F(b) - \xv F(b) = 2\sum_{i> j}^n a_{i,j}b_ib_jx_ix_j$,
  which is a degree 2 homogeneous Rademacher chaos where
  the $b_i$ are iid Rademacher random variables.
  From \cite{KWAPIEN1987}, we have a bound on the $p$th moment
  of $F(b) - \xv F(b)$ in terms of its second moment.  
  In particular,
  $$
    \xv\left[
      \abs{F(b) - \xv F(b)}^p
    \right]
    \le 32^{p/2}\left( \frac{\Gamma(p+1)}{2^{p/2}\Gamma(p/2+1)} \right)
    \left[\xv\abs{F(b) - \xv F(b)}^2\right]^{p/2},
  $$
  and
  \begin{align*}
    \xv\abs{F(b) - \xv F(b)}^2 
    &= 4\xv\left[
      \sum_{i> j,k> l}^n a_{i,j}a_{k,l}b_ib_jb_kb_lx_ix_jx_kx_l
    \right]\\
    &= 4
      \sum_{i>j}^n a_{i,j}^2x_i^2x_j^2
    \le \max_{i>j}\{a_{i,j}^2\}
  \end{align*}
  as $x\in S^{n-1}$.
  These moment bounds can then be translated into tail probability
  bounds via a standard trick involving the moment generating 
  function; see, for example, \cite{BOUCHERON2013}.
  \begin{align*}
    \xv\exp\left( \lmb(F(b)-\xv F(b)) \right)
    &\le 
    \sum_{p=1}^\infty \frac{\lmb^{2p}}{(2p)!}
      \xv [F(b)-\xv F(b)]^{2p}\\
    &\le
    \sum_{p=1}^\infty \frac{\lmb^{2p}}{(2p)!}
    32^{p}\left( \frac{(2p)!}{2^{p}p!} \right)
    \max_{i>j}\{a_{i,j}^{2}\}^p\\
    &=
    \sum_{p=1}^\infty \frac{(16\lmb^{2})^p}{p!}
    \max_{i>j}\{a_{i,j}^{2}\}^p\\
    &\le \exp\left(
      16\lmb^2\max_{i>j}\{a_{i,j}^2\}
    \right).
  \end{align*}
  From Markov's exponential, sometimes referred to as 
  Chernoff's, inequality
  $$
    \probB{
      F(b) - \xv F(b) > t 
    } \le \exp\left(
      -t^2/ 64\max_{i>j}\{a_{i,j}^2\}
    \right).
  $$
\end{proof}

\begin{proof}[Proof of Theorem~\ref{thm:exchangeConc}]
   This theorem essentially follows from the Azuma-Hoeffding inequality
   for martingales with bounded differences
   \citep{AZUMA1967,RIO2013}.  
   Similar to the proof of Theorem~\ref{thm:lipGen},
   we note that $g : S^{n-1} \to \real$ be defined by, 
   $g(x) = \TT{x}Ax = \TT{x}Bx$ with $B = (A + \TT{A})/2$, 
   is Lipschitz. 
   %with constant 
   %$2\rho(B)$ where $\rho(B)$ is the spectral radius of 
   %$B = (A + \TT{A})/2$.
   Indeed, for any
   $u, v \in S^{n-1}$,
   \begin{equation}\label{eqn:liponsphere}
     \abs{g(u) - g(v)}
     = \abs{ \TT{(u - v)} B (u + v) }
     \le \rho(B) \norm{ u - v}\norm{ u + v} 
     \le 2 \rho(B) \norm{ u - v}
   \end{equation}
   using the symmetry of $B$ and $\norm{ u + v} \le 2$.
   Let $F(\pi) = g(\pi x)$.
   
   For 
   $\pi\in \mathbb{S}(n)$, we use the 
   Fisher--Yates construction.
   %We write $\pi$
   %as a product of $n-1$ transpositions 
   %$\pi = \pi_1\cdot\ldots\cdot\pi_{n-1}$, some of 
   %which may be the identity.
   Let $x\in S^{n-1}$ be a fixed vector and 
   $\tau$ be a fixed permutation of $\{1, \dots, n\}$ ordering the
   coordinates of $x$ by decreasing magnitude, i.e.
   $|x_{\tau(1)}| \ge |x_{\tau(2)}| \ge \dots \ge |x_{\tau(n)}|$. 
   We generate a uniformly distributed $\pi$ by first
   letting $V_1$ be uniform on the set $\{1,\ldots,n\}$.
   Then, for $k = 2, \dots, n$, we choose $V_k$ to be 
   uniform on the set $\{1, \dots, n\} \setminus \{V_1, \dots, V_{k-1}\}$.
   Then, $\pi$ is defined by $\pi(V_k) = \tau(k)$
   and $\pi\dist\text{Uniform}(\mathbb{S}_n)$.
   We define the sigma fields 
   $\mathcal F_k = \sigma(V_1, \dots, V_k)$ and
   random variables
   $Z_k = \xv [ F(\pi) \mid \mathcal F_k ]$ for $k = 0, 1, \dots, n$,
   so that $Z_0 = \xv F(\pi)$ and $Z_n = F(\pi)$.
   %note $Z_{n-1} = Z_n$ since the final placement is forced.
   
  Fix $k \le n - 1$, condition on $\mathcal F_{k-1}$, and let $p \ne q$
  be two admissible values for $V_k$.
  Let $\sigma_{pq}$ be the transposition of $p$ and $q$. The two
  sequences continue as follows: given the continuation
  $V_{k+1}, \dots, V_n$ of the run with $V_k = p$, set
$V_j' = \sigma_{pq}(V_j)$ for $j > k$ in the run with $V_k' = q$.
Since $\sigma_{pq}$ maps the set of positions left unoccupied after
step $k$ in the first run bijectively onto the corresponding set in
the second run, and maps uniform choices to uniform choices, this
coupling realizes the correct conditional distribution in both runs.
Under it, the final assignments $\pi$ and $\pi'$ agree except at the
two positions $p$ and $q$: the coordinate $x_{\tau(k)}$ sits at $p$ in
the first run and at $q$ in the second, while the coordinate
$x_{\tau(m)}$ that occupies $q$ in the first run --- necessarily placed
at some later step $m > k$ --- occupies $p$ in the second. Consequently
the vectors $\pi x$ and $\pi' x$ differ in exactly the coordinates $p$
and $q$, and
$$
\norm{ \pi x - \pi' x }^2
 = 2 \left( x_{\tau(k)} - x_{\tau(m)} \right)^2
 \le 2 \left( |x_{\tau(k)}| + |x_{\tau(m)}| \right)^2
 \le 8\, x_{\tau(k)}^2,
$$
where the final inequality uses $m > k$ and the magnitude ordering,
which forces $|x_{\tau(m)}| \le |x_{\tau(k)}|$. Combining with
\eqref{eqn:liponsphere} and taking conditional expectations under the
coupling gives
$$
 \abs*{ \mathbb E[ F(\pi) \mid \mathcal F_{k-1}, V_k = p ]
    - \mathbb E[ F(\pi) \mid \mathcal F_{k-1}, V_k = q ] }
 \le \xv \left[ |F(\pi) - F(\pi')| \right]
 \le 4\sqrt 2\, \rho(B)\, \abs{x_{\tau(k)}} =: \ell_k .
$$
Therefore, conditionally on $\mathcal F_{k-1}$, the map
$v \mapsto \mathbb E[ F(\pi) \mid \mathcal F_{k-1}, V_k = v ]$ has
range at most $\ell_k$, and the martingale increment
$Z_k - Z_{k-1}$ takes values in an interval of length at most
$\ell_k$ with $\mathcal F_{k-1}$-measurable endpoints.
   
   %Then, we let $F(\pi) = g(\pi x)$ where we represent
   %$\pi\in\mathbb{S}(n)$ with the standard $n\times n$ 
   %permutation matrix. Furthermore, we write
   %$F(\pi)=F(\pi_1,\ldots,\pi_n)$, and consider
   %the following difference:
   %\begin{align*}
   %  &\abs{
   %    F(\pi_1,\ldots,\pi_k,\pi_{k+1})-
   %    F(\pi_1,\ldots,\pi_k,\pi_{k+1}')
   %   }\\ 
   %  &~~~~~~= \abs{
   %    g(\pi_1\ldots\pi_k\pi_{k+1} x) - 
   %    g(\pi_1\ldots\pi_k\pi_{k+1}'x)
   %  } \\
   %  &~~~~~~\le 2\rho(B)\norm{(\pi_{k+1} - \pi_{k+1}')x}.
   %\end{align*}
   %Without loss of generality, we assume $\pi$ is
   %a cycle as otherwise we decompose $\pi$ into 
   %disjoint cycles and proceed similarly for each piece.
   %Thus, after relabeling, we can choose each 
   %$\pi_k = (a_k~a_{k+1})$ for distinct $a_k\in\{1,\ldots,n-1\}$.
   %Thus, 
   %$$
   %  \norm{(\pi_{k+1} - \pi_{k+1}')x}^2 = 
   %  (x_{a_k}-x_{a_{k+1}})^2 +
   %  (x_{a_k}-x_{a_{k+1}'})^2 +
   %  (x_{a_{k+1}}-x_{a_{k+1}'})^2.
   %$$
   %Summing over all $k$ gives an upper bound of 6
   %as $x\in S^{n-1}$.  Thus, if we denote $c_k$ to 
   %be the upper bound on 
   %$\abs{
   %    F(\pi_1,\ldots,\pi_k,\pi_{k+1})-
   %    F(\pi_1,\ldots,\pi_k,\pi_{k+1}')
   %   }$, then $\sum_{i=k}^n c_k^2 = 24\rho(B)^2$.
   By the Azuma--Hoeffding inequality for martingales with increments
   of lengths bounded by $\ell_k$---see also \cite{mcdiarmid1989},
  $$
    \prob{ F(\pi) - \mathbb E F(\pi) > t }
    \le \exp \left\{ 
      - \frac{2 t^2}{\sum_{k=1}^{n-1} \ell_k^2} 
    \right\},
    \quad
    \sum_{k=1}^{n-1} \ell_k^2
    \le 32\, \rho(B)^2 \sum_{k=1}^{n} x_{\tau(k)}^2
    = 32\, \rho(B)^2 ,
  $$
  which yields the stated bound.
   %This theorem follows from concentration of normalized
   %counting measure on finite metric spaces; 
   %see \cite{ledoux} Theorem~{4.2} and 
   %Corollary~{4.3}.  
   %For two permutations $\pi$ and $\sigma$, 
   %the Hamming metric is defined as
   %$$
   %  d(\pi,\sigma) = \abs{\{
   %    i \,:\, \pi(i) \ne \sigma(i)
   %  \}}.
   %$$
  \iffalse
   Following from the proof of Lemma~\ref{lem:lipGen},
   we note that $g : S^{n-1} \to \real$ be defined by, 
   $g(x) = \TT{x}Ax$, is Lipschitz with constant 
   $2\rho(B)$ where $\rho(B)$ is the spectral radius of 
   $B = (A + \TT{A})/2$.
   Then, we let $F(\pi) = g(\pi x)$ where we represent
   $\pi\in\mathbb{S}(n)$ with the standard $n\times n$ 
   permutation matrix. This results in 
   \begin{align*}
     \abs{F(\pi)-F(\sigma)}
     &= \abs{g(\pi x) - g(\sigma x)} \\
     &\le 2\rho(B)\norm{(\pi - \sigma) x}\\
     &\le 2\rho(B)\norm{ \pi - \sigma }_\text{OP}
     \le 4\rho(B)
   \end{align*}
   From concentration on finite metric spaces
   \citep[Theorem 4.2]{ledoux}, we conclude that
   $$
     \probB{
       F(\pi) - \xv F(\pi) > t 
     } \le \exp\left(
       -t^2/ 64n\rho(B)^2
     \right).
   $$
  \fi
\end{proof}

\subsection{Proofs from Section~\ref{sec:randRot}}
\label{app:proofs2}

\begin{proof}[Proof of Corollary \ref{cor:powerAnal}]
  Let $
    pv_\vee(X) := 
    \exp{\{{-(n-2)(\hat{T}_{h,X}(I))^2}/{64}\}}
  $
  be a random p-value
  under $\mathcal{G} = SO(n)$ with random vector
  $X$ having autocorrelation
  $\eta_h = \delta n^{-1/2}$. Then, 
  denoting $\hat{T}_{h,X}(I)$ as just $\hat{T}$,
  \begin{equation}
    \label{eqn:pvTerms}
    pv_\vee(X) =
    \exp\left\{
    -\frac{n-2}{64}\left[ 
      ( \hat{T} - \xv\hat{T} )^2 -
      2( \hat{T} - \xv\hat{T} )\xv\hat{T} +
      (\xv\hat{T})^2
    \right]
    \right\}.
  \end{equation}
  Because the $X_t$ are such that $\norm{X}_2^2=1$ and
  the process is stationary, $\var{X_t}=1/n$ for all $t$.
  As
  $\hat{T}$ is an unnormalized summation,
  the $\sqrt{n}$-consistency of the sample autocorrelation
  implies that 
  $$
    \sqrt{n}\left[\frac{1}{n-h}\sum_{t=1}^{n-h}\frac{X_tX_{t+h}}{1/n}
    - \eta_h
    \right] \convd Z, \text{a mean zero Gaussian random variable.}
  $$
  And thus, $\hat{T} = \sum_{t=1}^{n-h}{X_tX_{t+h}}$
  also converges at a $\sqrt{n}$-rate to the autocorrelation
  $\eta_h$.
  
  Considering the three terms above in 
  Equation~\ref{eqn:pvTerms}, 
  the first
  $\exp\{
    -\frac{n-2}{64} 
      ( \hat{T} - \xv\hat{T} )^2\} \le 1$.
  In particular, $\sqrt{n}(\hat{T} - \xv\hat{T})$
  converges in distribution to a Gaussian random 
  variable by the central limit theorem.
  By the law of 
  large numbers, we have that 
  $( \hat{T} - \xv\hat{T} )\rightarrow 0$ almost surely
  at the usual $\sqrt{n}$-rate, 
  and thus by the continuous mapping theorem 
  $(n-2)( \hat{T} - \xv\hat{T} )\xv \hat{T} = O(1)$.
  Lastly, we have $(n-2)[\xv\hat{T}]^2 = O(1)$ by assumption.
  This results in the p-value converging in distribution to a random variable
  on the unit interval.

  If instead $\eta_h = \delta n^{-1/2+\veps}$ for some 
  $\veps>0$, then
  $$
    \log pv_\vee(X) = O(n^\veps-n^{2\veps})
  $$
  and once again by the continuous mapping theorem
  and the law of large numbers, we have that 
  $pv_\vee(X)\convas 0$ as $n\rightarrow\infty$.
  %from Jensen's inequality,
  %$$
  %  \xv \log pv_\vee(X) \le
  %  -\frac{n-2}{64}[\xv \hat{T}_{h,X}(I)]^2 =
  %  -\frac{\delta^2}{64}\frac{n-2}{n^{1-2\veps}}
  %  = O(-n^{2\veps}).
  %$$
\end{proof}

\begin{proof}[Proof of Proposition~\ref{prop:exact}]
The law of $z / \norm{z}$ is invariant under any rotation, and
the uniform distribution is the unique rotation-invariant probability
measure on $S^{n-1}$. Since $M' (Mx) \eqdist Mx$ for every fixed
$M' \in SO(n)$, the law of $Mx$ is also rotation invariant, whence
$Mx \eqdist z / \norm{z}$. Therefore
$$
  \hat T_{h,x}(M) \eqdist \frac{z^{\mathrm T} B_h z}{z^{\mathrm T} z},
  \text{ so }
  \prob{ \hat T_{h,x}(M) \ge t } = 
  \prob{ z^{\mathrm T} (B_h - t I_n) z \ge 0 },
$$
and diagonalizing $B_h - t I_n = U \,\mathrm{diag}(\lambda_i - t)\, U^{\mathrm T}$
with $U$ orthogonal, the vector $U^{\mathrm T} z$ is again standard
Gaussian, which yields the stated representation.
 
For the spectrum, partition $\{1, \dots, n\}$ into the $h$ residue
classes modulo $h$. The matrix $B_h$ couples index $i$ only to
$i \pm h$, so after a simultaneous permutation of rows and columns that
groups indices by residue class, $B_h$ is block diagonal, each block
being one half of the adjacency matrix of a path graph: $r$ paths on
$q + 1$ vertices (residues $1, \dots, r$) and $h - r$ paths on $q$
vertices. The adjacency matrix of the path on $m$ vertices has
eigenvalues $2\cos( j\pi / (m+1) )$, $j = 1, \dots, m$; see also
\cite{toep} for the general pseudo-Toeplitz setting. As a
check, $\sum_i \lambda_i^2 = (n-h)/2 = \operatorname{tr}(B_h^2)$,
consistent with Lemma~3.7.
\end{proof}

\begin{proof}[Proof of Lemma \ref{lem:nullVar}]
    We start by seeing that, 
    $\text{Var}(\hat{T}_{h,X}(I)) = 
    \mathrm{E}[\hat{T}_{h,X}(I)^2]$, 
    since $\mathrm{E}[\hat{T}_{h,X}(I)] = 0$ 
    under $H_0$. 
    Next, 
    $\hat{T}_{h,X}(I)^2 = 
    (\TT{X}B_hX)^2 = \TT{X}B_hX  \TT{X}B_hX$ 
    where $X$ is uniformly distributed in the sphere $S^{n-1}$. 
    Using Lemma \ref{var} and that the trace of $B_h$ is zero, 
    we have
    $
        \mathrm{E}[\hat{T}_{h,X}(I)^2] = 
        \mathrm{E}[\TT{X}B_hX  \TT{X}B_hX] = 
        \mu_{22}(2\text{tr}(B_h^2))
    $
    with $\mu_{22} = \mathrm{E}[X_1^2X_2^2]$. 
    To calculate $\mu_{22}$, we see that
    $$
        X_1^2 X_2^2 = (1 - X_2^2 - \cdots - X_n^2) X_2^2 = X_2^2 - X_2^4 - X_3^2 X_2^2 - \cdots - X_n^2 X_2^2.
    $$
    By symmetry, $\mu_{22} = \mathrm{E}[X_i^2X_j^2]$ for 
    any $i \neq j$. Therefore, taking expectations gives
    $
        \mu_{22} = \mathrm{E}[X_2^2] - \mathrm{E}[X_2^4] - (n-2) \mu_{22}
    $
    Upon rearranging, we get
    $
        \mu_{22} = 
        \left( 
          \mathrm{E}[X_2^2] - \mathrm{E}[X_2^4] 
        \right)/(n-1).
    $
    As $\sum_{1}^n X_i^2 = 1$ almost surely, 
    taking expectation gives 
    $\mathrm{E}[X_2^2] = {n}^{-1}$ by symmetry. 
    Also, since 
    $X_2^2 \dist
    \text{Beta}({1}/{2}, {(n-1)}/{2})$, $\mathrm{E}[X_2^4]$ 
    is equal to the second moment of the beta distribution, which is 
    ${3}/{n(n+2)}$. Therefore,
    \begin{equation}
        \mu_{22} = 
        \frac{1}{n-1}\left( \frac{1}{n} - \frac{3}{n(n+2)} \right) 
        = \frac{1}{n(n+2)} \:.
    \end{equation}
    Lastly, we calculate the trace of $B_h^2$.  Since
    %\begin{multline*}
    $$
      B_h^2 
      = \frac{1}{4}\left(
          A^{2h} + \TT{(A^{2h})} + A^h\TT{(A^h)} + \TT{(A^h)}A^h
        \right)\\
      = \frac{1}{2}B_{2h} + 
        \frac{1}{4}\begin{pmatrix}
          I_{n-h} & 0\\
          0 & 0
      \end{pmatrix} + \frac{1}{4}\begin{pmatrix}
          0 & 0\\
          0 & I_{n-h}
      \end{pmatrix},
    $$
    %\end{multline*}
    $\tr{B_h^2} = (n-h)/2$.
    Thus, finally we have our variance,
    $
        \text{var}(\hat{\phi}) = {(n-h)}/{n(n+2)}.
    $
\end{proof}

\begin{proof}[Proof of Theorem~\ref{betacon}]
    Under $H_0$, consider $\veps_t$ to be a stationary sequence
    of independent mean zero random variables such 
    that $\veps_t\veps_{t+h}$ has
    unit variance and finite third moment.
    The terms in the summation 
    $S_{n}=\sum_{t=1}^{n-h} \veps_t\veps_{t+h}$
    are independent when the index is greater than 
    the lag; i.e. 
    $\veps_t\veps_{t+h} \ind \veps_{t+i}\veps_{t+h+i}$ for 
    $i>h$.
    This satisfies the notion of $h$-independence (referred
    to as $m$-independence) from \cite{TIKHOMIROV1981}.
    Theorem 5 of \cite{TIKHOMIROV1981} gives a Berry-Esseen-type
    bound for the Kolmogorov distance for such 
    $h$-independent sequences.
    Hence,
    for $F_{n-h}$ the empirical distribution function of  
    $S_{n}$ and denoting the standard Gaussian distribution
    function as $\Phi(z)$,
    we have that 
    $$
      \sup_z \abs{ F_{n-h}(z) - \Phi(z) } 
      \le C_1 (n-h)^{-1/2} + C_2 (n-h)^{-1}\log (n-h)
      = O(n^{-1/2})
    $$
    for constants $C_1,C_2>0$ dependent on the variance
    and third moment but independent of $n$.
    By a simple derivation and $W$ a standard normal
    random variable, we have that
    \begin{multline*}
      \abs*{ \probB{S_n^2 < t} - \probB{W^2<t}  }
      =
      \abs*{ \probB{\abs{S_n} < {t}^{1/2}} - \probB{\abs{W}<{t}^{1/2}} }\\
      \le
      \abs*{ \probB{S_n < {t}^{1/2}} - \probB{W < {t}^{1/2}} } 
      +
      \abs*{ \probB{S_n < -{t}^{1/2}} - \probB{W < -{t}^{1/2}} }
      = O(n^{-1/2}),
    \end{multline*}
    which gives a proximity between the empirical distribution
    of $S_n$ and the $\distChiSquared{1}$ distribution.

    Following from Lemma~\ref{lem:nullVar}, let
    $$
      Z:= \frac{\hat{T}_{h,X}(I)^2}{(n-h)/n(n+2)}
      ~~\text{ and }~~
      c:= \frac{64/(n-2)}{(n-h)/n(n+2)}
    $$
    so that $\xv Z = 1$ and $Z/c = (n-2)\hat{T}_{h,X}(I)^2/64$.
    From the above, 
    the distribution of $Z$ is close to 
    $\chi^2(1)$ for large $n$.
    Thus, for some $c > 0$ and $u \in (0,1)$, we have,
    \begin{align*}
        \probB{e^{-Z/c} \leq u} &= \probB{Z \geq -c\log{u}}
        \leq ({2\pi})^{-1/2}\int_{-c\log{u}}^{\infty}x^{-1/2}e^{-x/2} dx + O(n^{-1/2})\\
        &= \left({\frac{c}{2\pi}}\right)^{1/2}\int_{0}^u(-\log{y})^{-1/2}y^{c/2-1} dy + O(n^{-1/2})\\
        &\leq \left({\frac{c}{2\pi}}\right)^{1/2}\int_{0}^u(1-y)^{-1/2}y^{c/2-1} dy + O(n^{-1/2})\\
        &= \frac{(c/2)^{1/2}\Gamma(c/2)}{\Gamma((c+1)/2)} I(u;c/2,1/2) + O(n^{-1/2})
    \end{align*}
    where the inequality $-\log{y} \geq 1-y$ for $y \in (0,1)$ is used. 
    Inputting the value of $c$ from above, we get that,
    $$
        \mathbb{P}\left[\exp{\bigg\{\frac{-(n-2)\hat{T}_{h,X}(I)^2}{64}\bigg\}} \leq u \right] \leq C_0 I\left[u;\frac{32n(n+2)}{(n-h)(n-2)},\frac{1}{2}\right]
         + O(n^{-1/2})
    $$
    where $C_0 = \left(\frac{32n(n+2)}{(n-h)(n-2)}\right)^{1/2}\Gamma \left(\frac{32n(n+2)}{(n-h)(n-2)}\right) \Gamma \left(\frac{1}{2} + \frac{32n(n+2)}{(n-h)(n-2)}\right)^{-1}$.

\end{proof}

%\subsection{Proofs from Section~\ref{sec:multiTest}}

\begin{proof}[Proof of Proposition~\ref{prop:uncorr}]
  Following from Proposition~\ref{lem:rotInnerProduct},
  let $B^{h} := \frac{1}{2}(A^h + \TT{(A^h)})$.
  As $\xv T_1 = \xv T_2 = 0$,
  Lemma~\ref{var} shows that 
  \begin{multline*}
  \cov{ \hat{T}_{h_1,X}(I) }{ \hat{T}_{h_2,X}(I) }
  = \xv\left(
    \TT{X}A^{h_1}X  \TT{X}A^{h_2}X
  \right)
  = \xv\left(
    \TT{X}B^{h_1}X  \TT{X}B^{h_2}X
  \right)\\
  = \mu_{22}[\mathrm{tr}(B^{h_1})\mathrm{tr}(B^{h_2}) + 
    2\mathrm{tr}(B^{h_1+h_2})]
    +
    (\mu_4 - 3\mu_{22})\tr{ B^{h_1}\circ B^{h_2} }
  = 0.
  \end{multline*}
\end{proof}

\subsection{Technical Results}
\label{app:technical}

\begin{proposition}
  \label{lem:rotInnerProduct}
  Let
  $A\in\real^{n\times n}$ be a symmetric
  matrix with spectrum $\lmb_1\ge\lmb_2\ge\ldots\ge\lmb_n$, 
  and let
  $b_A:\real^n\times\real^n\rightarrow \real$
  be a bilinear form defined by 
  $$
    b_{A,SO(n)}(x,y) = 
    \int_{SO(n)}
    \TT{(Mx)}A(My) d\rho(M)
  $$
  where integration is taken over $SO(n)$ with 
  respect to Haar measure $\rho$.
  Then, $b_{A,SO(n)}(x,y)$ is rotationally invariant
  and furthermore
  $
    b_{A,SO(n)}(x,y) = \bar{\lmb}\iprod{x}{y}
  $
  where $\iprod{\cdot}{\cdot}$ is the standard Euclidean 
  inner product and $\bar{\lmb} = n^{-1}\sum_{i=1}^n\lmb_i$. 
\end{proposition}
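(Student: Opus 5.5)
The plan is to exploit the invariance of Haar measure on $SO(n)$ directly, and then identify the resulting invariant bilinear form with a multiple of the Euclidean inner product.

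\emph{Rotational invariance.} For any fixed $R\in SO(n)$, substitute $M\mapsto MR$ and use right-invariance of $\rho$:
$$
b_{A,SO(n)}(Rx,Ry)=\int_{SO(n)}\TT{(MRx)}A(MRy)\,d\rho(M)=\int_{SO(n)}\TT{(M'x)}A(M'y)\,d\rho(M')=b_{A,SO(n)}(x,y).
$$
This uses that Haar measure on a compact group is bi-invariant.

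\emph{Reduction to a matrix average.} By linearity of the integral, $b_{A,SO(n)}(x,y)=\TT{x}Cy$ with
$$
C=\int_{SO(n)} \TT{M}AM\,d\rho(M),
$$
which is symmetric. The invariance just proved gives $\TT{R}CR=C$ for all $R\in SO(n)$; equivalently, $C$ commutes with every rotation.

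\emph{Identifying $C$.} The aim is to show $C=cI$, and I would try a direct argument rather than citing Schur's lemma. Take a unit eigenvector $v$ of $C$ with eigenvalue $c$. For any unit vector $w$, with $n\ge2$, there is a rotation $R$ with $Rv=w$: act by a rotation in the plane spanned by $v$ and $w$, or use a reflection pair when $w=-v$. Then
$$
Cw = CRv = RCv = cRv = cw,
$$
so every unit vector is an eigenvector with eigenvalue $c$, and hence $C=cI$.

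The constant follows from taking traces. Since $\tr{\TT{M}AM}=\tr{A}$ for each $M$,
$$
nc=\tr{C}=\int_{SO(n)}\tr{\TT{M}AM}\,d\rho(M)=\tr{A}=\sum_i\lmb_i,
$$
so $c=\bar\lmb$ and $b_{A,SO(n)}(x,y)=\bar\lmb\iprod{x}{y}$.

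The main obstacle is justifying transitivity of $SO(n)$ on the sphere carefully, in particular handling the degenerate case $n=1$, where $SO(1)$ is trivial and the claim holds trivially since $C=A$ is a scalar. Everything else is linearity and trace cyclicity.

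For the application to $\hat T_h$, it then follows that the average of $\TT{(MX)}A^h(MX)$ over $M$ equals $\bar\lmb\norm{X}^2=0$, because $B_h$ has zero trace.
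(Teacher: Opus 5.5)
Your proof is correct, and it differs from the paper's at both substantive steps.

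\textbf{The paper's route.} Like you, the paper writes $b$ as $\langle Cx,y\rangle$. It then simply asserts that rotational invariance gives $b(x,y)=c\langle x,y\rangle$ "for some $c>0$". The claim $c>0$ is false in general, and it fails in the very application, where $\mathrm{tr}(B_h)=0$ forces $c=0$. The paper then computes $c$ in three moves. It diagonalizes $A$, rewrites the Haar integral as an integral of $\sum_i\lambda_i v_i^2$ against uniform measure on $S^{n-1}$, and symmetrizes over orthants and coordinate permutations to get $\int v_i^2\,d\mu=1/n$.

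\textbf{Your route.} You do three things differently.
\begin{itemize}
\item You derive the invariance explicitly from right-invariance of $\rho$.
\item You prove the scalar step with an elementary Schur-type argument: a real eigenvector of the symmetric $C$, plus transitivity of $SO(n)$ on the sphere. Symmetry of $C$ is genuinely needed here when $n=2$. Since $SO(2)$ is abelian, the rotation by $\pi/2$ commutes with all of $SO(2)$ without being scalar.
\item You obtain the constant from $\mathrm{tr}(M^{\mathrm T}AM)=\mathrm{tr}(A)$. This needs neither diagonalization nor the fact that $Mx$ is uniform on the sphere.
\end{itemize}

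\textbf{Comparison.} Your argument is shorter, supplies the justification the paper omits, and handles any sign of $\bar\lambda$. The paper's route has one modest advantage: it exhibits $\bar\lambda$ explicitly as the spherical average $\int_{S^{n-1}}v^{\mathrm T}Av\,d\mu(v)$. That is the form in which the result is used elsewhere in the paper, for instance with $X$ uniform on $S^{n-1}$.
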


\begin{proof}
  %By ``Weyl's trick'' \citep[Theorem 2.10]{HOFMANNMORRIS},
  %$b$ is an inner product relative to which all $M\in SO(n)$ 
  %are unitary operators.  
  %In general, 
  Any bilinear form on 
  a real Hilbert Space is of the form $\iprod{Mx}{y}$ for
  some bounded operator $M$.
  Hence, $b(x,y) = \sum_{i=1}^ncx_iy_i$ for
  some $c>0$ by rotational invariance.  
  Without loss of generality, let $A$ be diagonal 
  with entries $\lmb_1,\ldots,\lmb_n$.  Then, choosing $x=y$ to
  be any unit vector results in 
  $$
    c = \int_{\norm{v}=1} \TT{v}Av dv
      = \int_{\norm{v}=1} \sum_{i=1}^n \lmb_i v_i^2 d\mu
  $$
  where $\mu$ is the uniform surface measure of the $(n-1)$-sphere. 
  By symmetry, the integral can be restricted to a fraction 
  of the sphere where $v_i\ge0$.  Furthermore, $\{v_i=v_j\,:\,i\ne j\}$
  is a measure zero event.  Thus,
  \begin{align*}
    c &= 2^n
    \int_{v_1>\ldots>v_n\ge0} \sum_{\pi\in\mathbb{S}_n} 
      \sum_{i=1}^n \lmb_i v_{\pi(i)}^2 d\mu
    = 2^n\int_{v_1>\ldots>v_n\ge0}  
       \sum_{i=1}^n \lmb_{i} \sum_{\pi\in\mathbb{S}_n} v_{\pi(i)}^2  d\mu\\
    &= 2^n\int_{v_1>\ldots>v_n\ge0}  
       \sum_{i=1}^n \lmb_{i}(n-1)!  d\mu
    = \frac{2^n}{n!2^n}\sum_{i=1}^n \lmb_{i}(n-1)! = 
    \frac{1}{n}\sum_{i=1}^n \lmb_{i}
  \end{align*}
  as the sum is over $n!$ permutations in $\mathbb{S}_n$, 
  which is grouped into $(n-1)!$ sets of $n$ $v_i^2$'s 
  that sum to 1.
\end{proof}

\end{appendix}

%%%%%%%%%%%%%%%%%%%%%%%%%%%%%%%%%%%%%%%%%%%%%%
%% Acknowledgements                         %%
%% should be provided in the                %%
%% Acknowledgements section.                %%
%%%%%%%%%%%%%%%%%%%%%%%%%%%%%%%%%%%%%%%%%%%%%%

\section*{Acknowledgments}
The authors would like to thank the anonymous referees, an Associate
Editor and the Editor for their constructive comments that improved the
quality of this paper.

%Information, such as contract numbers, of no interest to readers, should be excluded.

%%%%%%%%%%%%%%%%%%%%%%%%%%%%%%%%%%%%%%%%%%%%%%
%% Funding information, if any,             %%
%% should be provided in the                %%
%% funding section.                         %%
%%%%%%%%%%%%%%%%%%%%%%%%%%%%%%%%%%%%%%%%%%%%%%
\section*{Funding}
The authors would like to thank the 
Natural Sciences and Engineering Research Council of Canada
for their funding support for this research.

%%%%%%%%%%%%%%%%%%%%%%%%%%%%%%%%%%%%%%%%%%%%%%
%% Supplementary Material, including data   %%
%% sets and code, should be provided in     %%
%% {supplement} environment with title      %%
%% and short description. It cannot be      %%
%% available exclusively as external link.  %%
%% All Supplementary Material must be       %%
%% available to the reader on Project       %%
%% Euclid with the published article.       %%
%%%%%%%%%%%%%%%%%%%%%%%%%%%%%%%%%%%%%%%%%%%%%%
\section*{R Code}
%\stitle{Title of Supplement A}
%\sdescription{Short description of Supplement A.}
%\end{supplement}
%\begin{supplement}
%\stitle{Title of Supplement B}
%\sdescription{Short description of Supplement B.}
\label{SM}
R code to recreate all of the simulation experiments
and the analysis of the solar cycle data can be 
found at 
\url{https://github.com/cachelack/Random-Rotation-Tests.git}.

%%%%%%%%%%%%%%%%%%%%%%%%%%%%%%%%%%%%%%%%%%%%%%%%%%%%%%%%%%%%%
%%                  The Bibliography                       %%
%%                                                         %%
%%  imsart-???.bst  will be used to                        %%
%%  create a .BBL file for submission.                     %%
%%                                                         %%
%%  Note that the displayed Bibliography will not          %%
%%  necessarily be rendered by Latex exactly as specified  %%
%%  in the online Instructions for Authors.                %%
%%                                                         %%
%%  MR numbers will be added by VTeX.                      %%
%%                                                         %%
%%  Use \cite{...} to cite references in text.             %%
%%                                                         %%
%%%%%%%%%%%%%%%%%%%%%%%%%%%%%%%%%%%%%%%%%%%%%%%%%%%%%%%%%%%%%

\bibliographystyle{plainnat}
\bibliography{kasharticle,kashbook,kashself,kashpack,amit}

\begin{thebibliography}{41}
\providecommand{\natexlab}[1]{#1}
\providecommand{\url}[1]{\texttt{#1}}
\expandafter\ifx\csname urlstyle\endcsname\relax
  \providecommand{\doi}[1]{doi: #1}\else
  \providecommand{\doi}{doi: \begingroup \urlstyle{rm}\Url}\fi

\bibitem[Azuma(1967)]{AZUMA1967}
Kazuoki Azuma.
\newblock Weighted sums of certain dependent random variables.
\newblock \emph{Tohoku Mathematical Journal, Second Series}, 19\penalty0
  (3):\penalty0 357--367, 1967.

\bibitem[Boucheron et~al.(2013)Boucheron, Lugosi, and Massart]{BOUCHERON2013}
St{\'e}phane Boucheron, G{\'a}bor Lugosi, and Pascal Massart.
\newblock \emph{Concentration inequalities: A nonasymptotic theory of
  independence}.
\newblock Oxford University Press, 2013.

\bibitem[Box and Pierce(1970)]{BOXPIERCE1970}
George~EP Box and David~A Pierce.
\newblock Distribution of residual autocorrelations in
  autoregressive-integrated moving average time series models.
\newblock \emph{Journal of the American statistical Association}, 65\penalty0
  (332):\penalty0 1509--1526, 1970.

\bibitem[Breusch(1978)]{BREUSCH1978}
Trevor~S Breusch.
\newblock Testing for autocorrelation in dynamic linear models.
\newblock \emph{Australian economic papers}, 17\penalty0 (31), 1978.

\bibitem[Brockwell and Davis(2009)]{brockwell2009}
Peter~J Brockwell and Richard~A Davis.
\newblock \emph{Time series: theory and methods}.
\newblock Springer science \& business media, 2009.

\bibitem[Davies(1980)]{davies1980}
Robert~B Davies.
\newblock The distribution of a linear combination of $\chi^2$ random
  variables.
\newblock \emph{Journal of the Royal Statistical Society Series C: Applied
  Statistics}, 29\penalty0 (3):\penalty0 323--333, 1980.

\bibitem[Davis and Resnick(1986)]{DAVISRESNICK1986}
Richard Davis and Sidney Resnick.
\newblock Limit theory for the sample covariance and correlation functions of
  moving averages.
\newblock \emph{The Annals of Statistics}, pages 533--558, 1986.

\bibitem[Dobriban(2022)]{DOBRIBAN2022}
Edgar Dobriban.
\newblock Consistency of invariance-based randomization tests.
\newblock \emph{The Annals of Statistics}, 50\penalty0 (4):\penalty0
  2443--2466, 2022.

\bibitem[Duchesne and Lafaye~de Micheaux(2010)]{duchesne2010}
Pierre Duchesne and Pierre Lafaye~de Micheaux.
\newblock Computing the distribution of quadratic forms: Further comparisons
  between the {Liu--Tang--Zhang} approximation and exact methods.
\newblock \emph{Computational Statistics \& Data Analysis}, 54\penalty0
  (4):\penalty0 858--862, 2010.

\bibitem[Durbin and Watson(1950)]{DurbinWatsonA}
James Durbin and Geoffrey~S Watson.
\newblock Testing for serial correlation in least squares regression: {I}.
\newblock \emph{Biometrika}, 37\penalty0 (3/4):\penalty0 409--428, 1950.

\bibitem[Durbin and Watson(1951)]{DurbinWatsonB}
James Durbin and Geoffrey~S Watson.
\newblock Testing for serial correlation in least squares regression: {II}.
\newblock \emph{Biometrika}, 38\penalty0 (1-2):\penalty0 159--178, 1951.

\bibitem[Durbin and Watson(1971)]{DurbinWatsonC}
James Durbin and Geoffrey~S Watson.
\newblock Testing for serial correlation in least squares regression. {III}.
\newblock \emph{Biometrika}, 58\penalty0 (1):\penalty0 1--19, 1971.

\bibitem[Finos(2025)]{FLIP}
Livio Finos.
\newblock \emph{flip: Multivariate Permutation Tests}, 2025.
\newblock URL \url{https://CRAN.R-project.org/package=flip}.
\newblock R package version 2.5.1.

\bibitem[Godfrey(1978)]{GODFREY1978}
Leslie~G Godfrey.
\newblock Testing against general autoregressive and moving average error
  models when the regressors include lagged dependent variables.
\newblock \emph{Econometrica: Journal of the Econometric Society}, pages
  1293--1301, 1978.

\bibitem[Good(2013)]{GOOD2013}
Phillip Good.
\newblock \emph{Permutation tests: a practical guide to resampling methods for
  testing hypotheses}.
\newblock Springer Science \& Business Media, 2013.

\bibitem[Hoffmann and Kranz(2024)]{DESK}
Soenke Hoffmann and Tobias Kranz.
\newblock \emph{desk: Didactic Econometrics Starter Kit}, 2024.
\newblock URL \url{https://CRAN.R-project.org/package=desk}.
\newblock R package version 1.1.2.

\bibitem[Imhof(1961)]{IMHOF1961}
Jean-Pierre Imhof.
\newblock Computing the distribution of quadratic forms in normal variables.
\newblock \emph{Biometrika}, 48\penalty0 (3/4):\penalty0 419--426, 1961.

\bibitem[Inder(1986)]{INDER1986}
Brett Inder.
\newblock An approximation to the null distribution of the {D}urbin-{W}atson
  statistic in models containing lagged dependent variables.
\newblock \emph{Econometric Theory}, 2\penalty0 (3):\penalty0 413--428, 1986.

\bibitem[Kallenberg(2006)]{KALLENBERG2006}
Olav Kallenberg.
\newblock \emph{Probabilistic symmetries and invariance principles}.
\newblock Springer Science \& Business Media, 2006.

\bibitem[Kashlak and Yuan(2022)]{KASHLAK_YUAN_ABELECT}
Adam~B Kashlak and Weicong Yuan.
\newblock Computation-free nonparametric testing for local spatial association
  with application to the {US} and {C}anadian electorate.
\newblock \emph{Spatial Statistics}, 48:\penalty0 100617, 2022.

\bibitem[Kashlak et~al.(2022)Kashlak, Myroshnychenko, and
  Spektor]{KASHLAK_KHINTCHINE2020}
Adam~B Kashlak, Sergii Myroshnychenko, and Susanna Spektor.
\newblock Analytic permutation testing for functional data {ANOVA}.
\newblock \emph{Journal of Computational and Graphical Statistics}, pages
  1--10, 2022.

\bibitem[King and Wu(1991)]{KING1991}
Maxwell~L King and Ping~X Wu.
\newblock Small-disturbance asymptotics and the {D}urbin-{W}atson and related
  tests in the dynamic regression model.
\newblock \emph{Journal of Econometrics}, 47\penalty0 (1):\penalty0 145--152,
  1991.

\bibitem[Koning(2024)]{KONING2024}
Nick~W Koning.
\newblock More power by using fewer permutations.
\newblock \emph{Biometrika}, 111\penalty0 (4):\penalty0 1405--1412, 2024.

\bibitem[Koning and Hemerik(2023)]{KONING_HEMERIK_2023}
Nick~W Koning and Jesse Hemerik.
\newblock {More Efficient Exact Group Invariance Testing: using a
  Representative Subgroup}.
\newblock \emph{Biometrika}, page asad050, 09 2023.
\newblock ISSN 1464-3510.
\newblock \doi{10.1093/biomet/asad050}.
\newblock URL \url{https://doi.org/10.1093/biomet/asad050}.

\bibitem[Kulkarni et~al.(1999)Kulkarni, Schmidt, and Tsui]{toep}
Devadatta Kulkarni, Darrell Schmidt, and Sze-Kai Tsui.
\newblock Eigenvalues of tridiagonal pseudo-toeplitz matrices.
\newblock \emph{Linear Algebra and its Applications}, 297\penalty0
  (1):\penalty0 63--80, 1999.
\newblock ISSN 0024-3795.
\newblock \doi{https://doi.org/10.1016/S0024-3795(99)00114-7}.
\newblock URL
  \url{https://www.sciencedirect.com/science/article/pii/S0024379599001147}.

\bibitem[Kuonen(1999)]{kuonen1999}
Diego Kuonen.
\newblock Saddlepoint approximations for distributions of quadratic forms in
  normal variables.
\newblock \emph{Biometrika}, 86\penalty0 (4):\penalty0 929--935, 1999.

\bibitem[Kwapien(1987)]{KWAPIEN1987}
Stanislaw Kwapien.
\newblock Decoupling inequalities for polynomial chaos.
\newblock \emph{The Annals of Probability}, 15\penalty0 (3):\penalty0
  1062--1071, 1987.

\bibitem[Langsrud(2005)]{LANGSRUD2005}
{\O}yvind Langsrud.
\newblock Rotation tests.
\newblock \emph{Statistics and computing}, 15\penalty0 (1):\penalty0 53--60,
  2005.

\bibitem[Ledoux(2001)]{ledoux}
M.~Ledoux.
\newblock \emph{The Concentration of Measure Phenomenon}.
\newblock Mathematical surveys and monographs. American Mathematical Society,
  2001.
\newblock ISBN 9780821837924.
\newblock URL \url{https://books.google.ca/books?id=mCX_cWL6rqwC}.

\bibitem[Lehmann and Romano(2006)]{LEHMANN2006}
Erich~L Lehmann and Joseph~P Romano.
\newblock \emph{Testing statistical hypotheses}.
\newblock Springer Science \& Business Media, 2006.

\bibitem[Ljung and Box(1978)]{LJUNGBOX1978}
Greta~M Ljung and George~EP Box.
\newblock On a measure of lack of fit in time series models.
\newblock \emph{Biometrika}, 65\penalty0 (2):\penalty0 297--303, 1978.

\bibitem[McDiarmid(1989)]{mcdiarmid1989}
Colin McDiarmid.
\newblock On the method of bounded differences.
\newblock \emph{Surveys in combinatorics}, 141\penalty0 (1):\penalty0 148--188,
  1989.

\bibitem[Meckes(2019)]{MECKES2019}
Elizabeth~S Meckes.
\newblock \emph{The random matrix theory of the classical compact groups},
  volume 218.
\newblock Cambridge University Press, 2019.

\bibitem[Ong et~al.(2026)Ong, Choi, Lundy, Zhang, Wan, Xu, Ghosh, Jiang, and
  Shi]{ONG2026}
Vy~Q Ong, Bich~N Choi, Devin~P Lundy, Yingnan Zhang, Yin Wan, Hongyan Xu, Santu
  Ghosh, Hui Jiang, and Yang Shi.
\newblock Mcmc-ce: A novel and efficient algorithm for estimating small
  right-tail probabilities of quadratic forms with applications in genomics.
\newblock \emph{Journal of Computational Biology}, 33\penalty0 (2):\penalty0
  236--254, 2026.

\bibitem[Pesarin and Salmaso(2010)]{PESARIN2010}
Fortunato Pesarin and Luigi Salmaso.
\newblock \emph{Permutation tests for complex data: theory, applications and
  software}.
\newblock John Wiley \& Sons, 2010.

\bibitem[Rio(2013)]{RIO2013}
Emmanuel Rio.
\newblock Extensions of the {H}oeffding-{A}zuma inequalities.
\newblock \emph{Electronic Communications in Probability}, 18\penalty0 (54),
  2013.

\bibitem[Shumway and Stoffer(2000)]{SHUMWAY2000}
Robert~H Shumway and David~S Stoffer.
\newblock \emph{Time series analysis and its applications}, volume~3.
\newblock Springer, 2000.

\bibitem[Solari et~al.(2014)Solari, Finos, and Goeman]{SOLARI2014}
Aldo Solari, Livio Finos, and Jelle~J Goeman.
\newblock Rotation-based multiple testing in the multivariate linear model.
\newblock \emph{Biometrics}, 70\penalty0 (4):\penalty0 954--961, 2014.

\bibitem[Tikhomirov(1981)]{TIKHOMIROV1981}
Alexander~N Tikhomirov.
\newblock On the convergence rate in the central limit theorem for weakly
  dependent random variables.
\newblock \emph{Theory of Probability \& Its Applications}, 25\penalty0
  (4):\penalty0 790--809, 1981.

\bibitem[Wiens(1992)]{wiens}
Douglas~P Wiens.
\newblock On moments of quadratic forms in non-spherically distributed
  variables.
\newblock \emph{Statistics}, 23\penalty0 (3):\penalty0 265--270, 1992.

\bibitem[Zeileis and Hothorn(2002)]{LMTEST}
Achim Zeileis and Torsten Hothorn.
\newblock Diagnostic checking in regression relationships.
\newblock \emph{R News}, 2\penalty0 (3):\penalty0 7--10, 2002.
\newblock URL \url{https://CRAN.R-project.org/doc/Rnews/}.

\end{thebibliography}

\end{document}